\documentclass[prx,aps,twocolumn,groupedaddress,nofootinbib,superscriptaddress,preprintnumbers]{revtex4-2}

\usepackage{graphicx}
\usepackage[nointegrals]{wasysym} %
\usepackage[export]{adjustbox}
\usepackage{amsmath,amsfonts,amssymb,latexsym}
\usepackage{hhline}
\usepackage{bm}
\usepackage{verbatim}
\usepackage{enumitem}
\hyphenation{PYTHIA}
\usepackage{mathrsfs}
\usepackage{slashed}
\usepackage{empheq}

\newcommand{\Tr}{\text{Tr}}

\newcommand{\sgn}{{\rm sgn}}

\newcommand{\p}{\partial}

\newcommand{\lan}{\langle}
\newcommand{\ran}{\rangle}

\newcommand{\da}{{\dagger}}
\newcommand{\doa}{\downarrow}

\newcommand{\ob}[1]{\mkern 1.5mu\overline{\mkern-1.5mu#1\mkern-1.5mu}\mkern 1.5mu}

\newcommand{\ra}{\rightarrow}
\newcommand{\lra}{\leftrightarrow}

\newcommand{\wt}{\widetilde}

\newcommand{\uvx}{{\mathbf{\hat x}}}
\newcommand{\uvy}{{\mathbf{\hat y}}}
\newcommand{\uvz}{{\mathbf{\hat z}}}

\renewcommand{\(}{\left(}
\renewcommand{\)}{\right)}

\newcommand{\tp}{\otimes}

\newcommand{\twp}{{2\pi}}

\newcommand\bpm            {\begin{pmatrix}}
	\newcommand\epm           {\end{pmatrix}}

\newcommand{\ms}{\medskip}
\newcommand{\bs}{\bigskip}

\def\app#1#2{%
	\mathrel{%
		\setbox0=\hbox{$#1\sim$}%
		\setbox2=\hbox{%
			\rlap{\hbox{$#1\propto$}}%
			\lower1.1\ht0\box0%
		}%
		\raise0.25\ht2\box2%
	}%
}

\newcommand{\tw}{\textwidth}

\newcommand{\vp}{\varphi}

\newcommand{\ct}{\Theta}

\newcommand{\inv}{^{-1}}

\newcommand{\ope}\odot

\newcommand{\sqig}{\rightsquigarrow}

\usepackage{manfnt}

\newcommand{\bi}{\begin{itemize}}
	\newcommand{\ei}{\end{itemize}}

\newcommand{\igptc}[1]{\vcenter{\hbox{\includegraphics[width=.3\textwidth]{#1}}}}

\newcommand{\igpfoc}[1]{\vcenter{\hbox{\includegraphics[width=.45\textwidth]{#1}}}}

\usepackage{amsthm}

\newtheorem{corollary}{Corollary}
\newtheorem{proposition}{Proposition}
\newtheorem{fact}{Fact}
\newtheorem{claim}{Claim}
\newtheorem{lemma}{Lemma}

\theoremstyle{definition}
\newtheorem{example}{Example}
\newcommand\bpro		  {\begin{proposition}}
	\newcommand\epro 		  {\end{proposition}}
\newcommand\bproof			  {\begin{proof}}
	\newcommand\eproof 		  {\end{proof}}

\newcommand\ed            {\end{definition}}

\newcommand\be            {\begin{equation}}
\newcommand\ee            {\end{equation}}
\newcommand\ba            {\begin{aligned}}
\newcommand\ea            {\end{aligned}}
\newcommand\bea{\begin{equation}\begin{aligned}}
	\newcommand\eea{\end{aligned}\end{equation}}

%\usepackage{subcaption}

\begin{comment}

\end{comment}

\newcommand{\ethan}[1]{ { \color{blue} \footnotesize \textsf{ethan: \textsl{#1}} }}

\usepackage[dvipsnames]{xcolor}

\definecolor{nicegreen}{rgb}{0.3, 0.75, 0.2}

\usepackage{hyperref} 
\hypersetup{final}
\hypersetup{colorlinks, citecolor=red, linkcolor=red, urlcolor=red} 

\newcommand{\sss}{\subsubsection}
\renewcommand{\ss}{\subsection}

\renewcommand{\a}{\alpha}

\renewcommand{\d}{\delta}
\newcommand{\De}{\Delta}
\newcommand{\g}{\gamma}

\newcommand{\s}{\sigma}

\newcommand{\ep}{\varepsilon} %
\renewcommand{\l}{\lambda}

\renewcommand{\O}{\Omega}

\renewcommand{\r}{\rho}

\newcommand{\zt}{\mathbb{Z}_2}
\newcommand{\zn}{\mathbb{Z}_N}

\newcommand{\nn}{\mathbb{N}}

\newcommand{\rr}{\mathbb{R}}
\newcommand{\qq}{\qquad}

\newcommand{\zz}{\mathbb{Z}}

\newcommand{\mcb}{\mathcal{B}}

\newcommand{\mco}{\mathcal{O}}
\newcommand{\mcu}{\mathcal{U}}

\newcommand{\mcd}{\mathcal{D}}

\newcommand{\mch}{\mathcal{H}}
\newcommand{\mca}{\mathcal{A}}

\newcommand{\mcm}{\mathcal{M}}

\newcommand{\sfH}{\mathsf{H}}

\usepackage[mathscr]{eucal} %

\newcommand{\tta}{\mathtt{a}}
\newcommand{\ttb}{\mathtt{b}}
\newcommand{\ttc}{\mathtt{c}}
\newcommand{\ttd}{\mathtt{d}}
\newcommand{\tte}{\mathtt{e}}
\newcommand{\ttf}{\mathtt{f}}
\newcommand{\ttg}{\mathtt{g}}
\newcommand{\tth}{\mathtt{h}}

\newcommand{\ttk}{\mathtt{k}}

\newcommand{\ttn}{\mathtt{n}}
\newcommand{\tto}{\mathtt{o}}

\newcommand{\ttr}{\mathtt{r}}

\newcommand{\ttv}{\mathtt{v}}
\newcommand{\ttw}{\mathtt{w}}
\newcommand{\ttx}{\mathtt{x}}
\newcommand{\tty}{\mathtt{y}}
\newcommand{\ttz}{\mathtt{z}}
\renewcommand{\tt}[1]{\mathtt{#1}}

\newcommand{\kb}[2]{|{#1}\rangle\langle{#2}|}
\renewcommand{\k}[1]{|#1\rangle}
\newcommand{\ket}[1]{|#1\rangle}
\newcommand{\proj}[1]{|#1\rangle\langle#1|}
\newcommand{\bra}[1]{{\langle #1}|}

\usepackage{dcolumn}
%\captionsetup{justification=raggedright,singlelinecheck=false}

\usepackage{pifont}
\usepackage{ulem}

\usepackage[caption=false]{subfig}

\usepackage{enumitem}

\usepackage{amsthm}
\usepackage{physics}

\usepackage{bm}

\renewcommand{\bot}{\bigotimes}

\DeclarePairedDelimiterX{\infdivx}[2]{(}{)}{%
	#1\;\delimsize\|\;#2%
}

\renewcommand{\bs}{{\sf BS}}
\newcommand{\typdehn}{\ob{{\sf Dehn}}}
\newcommand{\el}{{\sf EL}}
\newcommand{\dehn}{{\sf Dehn}}
\newcommand{\dyn}{{\sf Dyn}}
\newcommand{\cg}{{\mathsf{CG}}}
\newtheorem*{theorem*}{Theorem}
\newtheorem{prop}{Proposition}

\newtheorem{observation}{Observation}

\renewcommand\qq{\qquad}

\renewcommand{\bot}{\bigotimes}

\usepackage{accents}

\newcommand{\oEE}{\mathop{\mathbb{E}}}

\begin{document}

	\title{Glassy word problems: ultraslow relaxation, Hilbert space jamming, and computational complexity}
	
	\author{Shankar Balasubramanian}
	\affiliation{Department of Physics, Massachusetts Institute of Technology, Cambridge, MA 02139, USA}
	\affiliation{Center for Theoretical Physics, Massachusetts Institute of Technology, Cambridge, MA 02139, USA}
	\affiliation{Kavli Institute for Theoretical Physics, University of California, Santa Barbara, CA 93106, USA}
	
	\author{Sarang Gopalakrishnan}
	\affiliation{Department of Electrical and Computer Engineering, Princeton University, Princeton NJ 08544, USA}
	
	\author{Alexey Khudorozhkov}
	\affiliation{Department of Physics, Boston University, Boston, MA 02215, USA}
	
	\author{Ethan Lake}
	\affiliation{Department of Physics, Massachusetts Institute of Technology, Cambridge, MA 02139, USA}
	\affiliation{Department of Physics, University of California Berkeley, Berkeley, CA 94720, USA}
	\begin{abstract}
		
		We introduce a family of local models of dynamics based on ``word problems'' from computer science and group theory, for which we can place rigorous lower bounds on relaxation timescales. These models can be regarded either as random circuit or local Hamiltonian dynamics, and include many familiar examples of constrained dynamics as special cases. The configuration space of these models splits into dynamically disconnected sectors, and for initial states to relax, they must ``work out'' the other states in the sector to which they belong. When this problem has a high time complexity, relaxation is slow. In some of the cases we study, this problem also has high space complexity. When the space complexity is larger than the system size, an unconventional type of jamming transition can occur, whereby a system of a fixed size is not ergodic, but can be made ergodic by appending a large reservoir of sites in a trivial product state. This manifests itself in a new type of Hilbert space fragmentation that we call {\it fragile fragmentation}.  We present explicit examples where slow relaxation and jamming strongly modify the hydrodynamics of conserved densities. In one example, density modulations of wavevector $q$ exhibit almost no relaxation until times $O(\exp(1/q))$, at which point they abruptly collapse. We also comment on extensions of our results to higher dimensions.
	\end{abstract}
	
	\maketitle
	%	\tableofcontents
	
	\section{Introduction}
	
	A common paradigm in quantum dynamics is that isolated quantum systems usually thermalize if one waits long enough \cite{deutsch1991quantum, srednicki1994chaos, rigol2008thermalization, nandkishore2015many-body, dalessio2016from}. Indeed, assuming that interactions are spatially local, quantum systems tend to approach a form of local equilibrium on a timescale that is independent of system size, with the late-time dynamics governed by the hydrodynamic relaxation of a small number of conserved densities. The main possible exception to this rule is the many-body localized phase in strongly disordered systems \cite{alet2018many-body, abanin2019colloquium}, or in systems which effectively self-generate strong disorder ~\cite{schiulaz2014ideal,grover2014quantum,yao2016quasi}. 
	The structures that can be rigorously shown to arrest thermalization in translation-invariant quantum systems---such as integrability \cite{retore2022introduction}, quantum scars \cite{shiraishi2017systematic, bernien2017probing, turner2018quantum, turner2018weak, moudgalya2018exact, lin2019exact, schechter2019weak, moudgalya2022quantum, chandran2023quantum}, dynamical constraints \cite{sala2020ergodicity, khemani2020localization,motrunich,pancotti2020quantum,van2015dynamics,brighi2023hilbert,lan2018quantum,garrahan2018aspects, detomasi2019dynamics,moudgalya2022thermalization}, etc.---are usually fine-tuned in some sense. However, they are still of experimental relevance since many realistic systems are near the fine-tuned points where these phenomena occur \cite{kohlert2023exploring,kinoshita2006quantum,scheie2021detection,malvania2021generalized,wei2022quantum,scherg2021observing}. Systems near these points have long relaxation timescales and approximate conservation laws that are essentially exact on the timescale of realistic experiments on noisy quantum hardware and cold atom systems. Although the algebraic structure of integrable systems and systems with many-body scars has been well studied, a general understanding of the extent to which local Hilbert space constraints can arrest thermalization is still under development. 
	
	In this work we introduce an alternative viewpoint for understanding thermalization in a large class of one-dimensional models with local Hilbert space constraints.  We begin by developing a general framework for characterizing models with constrained dynamics in terms of semigroups, algebraic structures that resemble groups but need not have inverses or an identity. This approach reproduces examples of constrained models known in the literature, but also provides us with new examples with unusual properties.  In particular, it enables us to leverage ideas from the field of geometric group theory to construct (a) models with both {\it exponentially long} relaxation times and {\it sharp} thermalization transitions, and (b) models where thermalization never occurs due to an unusual type of 
	ergodicity breaking we call ``Hilbert-space jamming'' or ``fragile fragmentation.''

	The relation between constrained one-dimensional spin chains and semigroups can be summarized as follows (see Sec.~\ref{sec:general_semigroup} for a more formal discussion). Most examples of constrained systems in the literature---and all of the examples considered in this work---have constraints which can be formulated in a local product state basis. In these systems, the constraints place certain rules on the processes that the dynamics is allowed to implement, and it is these rules which endow the dynamics with the structure of a semigroup. This identification works by viewing each basis state of the onsite Hilbert space, in the computational basis, as a generator of a semigroup. Since any element of the semigroup can be written as a product of generators, a many-body computational-basis product state is naturally associated with an element of the semigroup, obtained by taking the product of generators from left to right along the chain. We call each computational basis state a {\it word}, with each word being a presentation of a certain element in the semigroup. In this picture, the constraints are encoded by requiring that the dynamics {\it preserve the semigroup element associated with each product state}. In Sec.~\ref{sec:general_semigroup}, we show that all local dynamical constraints can be formulated in this way.  
	
	Of course, not all words represent distinct semigroup elements. For example, in the case where the semigroup is a group $G$ with identity element $\tte$ and elements $\ttg_1, \ttg_2$, three distinct words of length $4$ representing the same group element are $\tte\tte\tte\tte$, $\ttg_1 \ttg_1^{-1} \ttg_2^{-1} \ttg_2$, and $\ttg_1 \ttg_2 \ttg_2^{-1} \ttg_1^{-1}$.
	Each distinct semigroup element is thus an {\it equivalence class} of words under the application of equivalence relations like $\ttg_i \ttg_i^{-1} = \tte \tte$. The most general local dynamics that preserves semigroup elements is precisely one which locally implements these equivalence relations. %
	
	The equivalence classes so defined produce multiple sectors of the Hilbert space (``Krylov sectors'' or ``fragments'') that the dynamical rules are unable to connect, breaking ergodicity and leading to Hilbert space fragmentation~\cite{sala2020ergodicity, khemani2020localization,motrunich}. 
	Within a given fragment, thermalization of an initial basis state is a process by which the dynamics of the system ``works out'' which words represent the same semigroup element as the initial state.
	Crucially, when the problem of determining which words represent a given element is computationally hard, thermalization {\it within} each fragment is slow. 
	
	This general perspective is powerful because it maps the problem of thermalization in these models onto a well-known algorithmic problem, the word problem for semigroups (a perspective also adopted by Hastings and Freedman in Ref.~\cite{hastings} to provide examples of dynamics exhibiting `topological obstructions' which provide a separation between the performances of QMC and quantum annealing). The word problem is the problem referenced above, namely that of deciding whether two words represent the same semigroup element.
	This identification allows us to lift examples of computationally hard word problems from the mathematical literature to construct models with anomalously slow dynamics. In these models, the dynamics connects the basis states within each fragment in a manner which is much sparser than in generic systems (see Fig.~\ref{fig:connectivity_comp} for an illustration), and it is this phenomenon which leads to long thermalization times.

	The first part of this work focuses on models where the word problem takes an exponentially long time to solve. 
	We place particular focus on the ``Baumslag-Solitar model'', a spin-2 model with three-site interactions for which the relaxation time of a large class of initial states under any type of local dynamics (Hamiltonan, random unitary, etc.) is provably {\it exponentially long} in the system size.
	This model has a conserved charge, and this exponentially long timescale shows up as an exponentially slow hydrodynamic relaxation of density gradients. Not only the relaxation timescale but also the functional form of relaxation is anomalous: a state with density gradients relaxes ``gradually, then suddenly,'' with an initially prepared density wave experiencing almost no relaxation for exponentially long times, but then undergoing a sudden collapse at a sharply defined timescale. 
	% in a manner reminiscent of the cutoff phenomenon in Markov chains \cite{aldous1986shuffling, diaconis1996cutoff}. 
	Despite this extremely slow hydrodynamics, the states involved are {\it not} dynamically frozen: each configuration is rapidly locally fluctuating, and generic local autocorrelation functions decay rapidly.
	
	In the second part of this work, we turn to word problems that have high {\it spatial} complexity: i.e., solving them requires not just many steps, but also a large amount of additional space for scratchwork. Put differently, in these problems, deciding whether two words of length $L$ are equivalent requires a derivation involving intermediate words much longer than $L$. 
	
	In the corresponding dynamical systems, one has (for any fixed $L$) pairs of basis states $\ket{a}, \ket{b}$ such that: (i)~$\ket{a}$ is not connected to $\ket{b}$ by the dynamics of a chain of length $L$, but (ii)~$\ket{a} \otimes \ket{c}$ is connected to $\ket{b} \otimes \ket{c}$ by the dynamics of a longer chain, with $\k c$ a fixed ancillary product state. 
	This phenomenon can be viewed in two complementary ways: as a ``fragile'' form of Hilbert space fragmentation,\footnote{In analogy with fragile topology \cite{po2018fragile}, as the ergodicity of the dynamics is modified by the addition of trivial ancillae.} or as an unusual type of jamming which has no counterpart in known examples of jammed systems. Using constructions similar to the group model discussed above, we can construct examples where the amount of additional spatial resources grows extremely rapidly with $L$; not just exponentially but also as $e^{e^L}, e^{e^{e^L}}$, and so on.
	
	This paper is organized as follows. In Sec.~\ref{sec:general_semigroup} we introduce word problems for semigroups and groups, and relate them to fragmentation. In Sec.~\ref{sec:therm_time_bounds} we use the complexity of the word problem to derive bounds on thermalization times, and in Sec.~\ref{sec:bs} we explore an explicit example of a high-complexity group word problem which yields dynamics with exponentially slow relaxation. We present numerical evidence and analytical estimates for the anomalously slow hydrodynamics of this model. In Sec.~\ref{sec:ff} we introduce and analyze a family of group models with fragile fragmentation. Secs.~\ref{sec:nongroup} and \ref{sec:loop_models} respectively present examples based on semigroups, and generalize our one-dimensional examples to two-dimensional loop models. Finally, we conclude with a discussion of future directions in Sec.~\ref{sec:disc}.   
	
	\begin{figure}
		\centering
		\includegraphics[width=.49\tw]{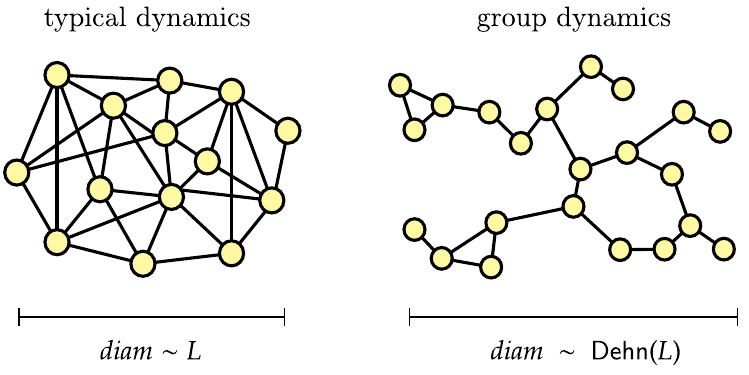}
		\caption{A schematic illustration of the difference between the Hilbert space connectivity of generic dynamics and semigroup dynamics, with each yellow dot representing a computational basis product state in a single connected sector of the dynamics. In a 1D system of length $L$, typical dynamics (left) requires at most $O(L)$ steps of the dynamics (applications of a Hamiltonian or layers of a unitary circuit) to move between any two basis states. In semigroup dynamics (right), the number of steps needed scales as the {\it Dehn function} $\dehn(L)$ of the semigroup in question, which measures the word problem's temporal complexity. When $\dehn(L)$ is large, the basis states in each sector are connected very sparsely, leading to long thermalization times.    }
		\label{fig:connectivity_comp}
	\end{figure}

	\section{Semigroup dynamics and constrained 1D systems} \label{sec:general_semigroup} 
	
	In this section we introduce the general framework used to construct the models described above. We will refer to this framework as {\it semigroup dynamics}, which encompasses a general class of constrained dynamical systems whose constraints can be derived from the presentation of underlying group or semigroup (to be defined below). These types of constraints are particularly appealing from a theoretical point of view: it turns out that we can rigorously characterize many properties---thermalization times, fragmentation, and so on---using tools from the field of {\it geometric group theory}.  Broadly speaking, geometric group theory is concerned with characterizing the complexity and geometry of discrete groups (see \cite{clay2017office} for an accessible introduction), ideas which will be made precise in the following. 
	
	As a starting point, we will describe the necessary mathematical background needed to motivate group dynamics. This discussion will center around the {\it word problem}, a century-old problem lying at a heart of results regarding the geometry and complexity of groups. We will then see how algorithms solving the word problem can naturally be encoded into the dynamics of 1D spin chains, whose thermalization dynamics is controlled by the word problem's complexity. Finally, we will see how the structure of the word problem leads to Hilbert space fragmentation, and discuss the properties of the group which control the severity of the fragmentation. 
	
	Throughout this paper, we will mostly be studying constrained dynamics on 1D spin chains whose onsite Hilbert space is finite dimensional.\footnote{Generalizations to 2D are briefly discussed in Sec.~\ref{sec:loop_models}.}  We will only consider systems whose time evolution has constraints that can be specified in a local tensor product basis (referred to throughout as {\it the computational basis}), either directly or after the application of finite-depth local unitary circuit. In the latter case we will assume the unitary transformation has been done, to avoid loss of generality.

	We will let $S$ denote the set of computational basis state labels for the onsite Hilbert space, with individual basis states being written as letters in typewriter font ($\k\tta, \k\ttb$, etc.):
	\be \mch_{\rm loc} = {\rm span}\{ \k \tta \, : \, \tta \in S\}.\ee 
	Strings of letters will be used as shorthand for tensor products, so that e.g. $\k{\tt{word}} = \k\ttw\tp\k\tto\tp\k\ttr\tp\k\ttd$. A ket with a single roman letter will denote a product state of arbitrary length, e.g. $\k{w} = \k{\tt{word}}$. 
	
	\ss{Dynamical constraints and semigroups}
	
	We will write $\dyn(t)$ to denote time evolution for time $t$ under the dynamics in question, 
	which may be performed using a set of unitary gates, a (possibly space- or time-dependent) Hamiltonian, or a bistochastic Markov chain.
	Having a dynamical constraint means that not all computational basis product states $\k w, \k{w'}$ can be connected under the dynamics. We will use $\vp(\k w)$ to denote the {\it dynamical sector} of the state $\k w$, defined as the set of all computational basis product states that $\k w$ can evolve to, thus 
	\be \label{dynconstraint}\lan w | \dyn(t) | w'\ran \propto \d_{\vp(\k w), \vp(\k{w'})}\, \, \forall \, t.\ee 
	
	The tensor product of computational basis states---that is, the stacking of one system onto the end of another---defines a binary operation on the dynamical sector labels, which we write as $\odot$: 
	\be \vp(\k w \tp \k{w'}) = \vp(\k w) \odot \vp(\k{w'}).\ee 
	Since the tensor product is associative, so too is $\odot$. This equips the set of dynamical sectors with an associative binary operation, thereby endowing it with the structure of a {\it semigroup}, a generalization of a group which needn't have inverses or an identity element.\footnote{We focus on the more general structure of a semigroup (as opposed to a group), since it provides the most general framework for discussing 1D constrained dynamics, does not complicate the bounds we will obtain on thermalization dynamics, and because most of the existing models studied in the literature are described by semigroups, rather than groups. } Since all $\k w$ are formed as tensor products of the single-site basis states $\k \tta$, the $\vp(\k{\tta})$---which we will write simply as $\tta$ to save space---constitute a {\it generating set} for this semigroup. The dynamical sector of a state $\k w = \k{\tta_1} \tp \cdots \tp \k{\tta_L}$ is then determined simply by multiplying the $\tta_i$ along the length of the chain\footnote{We will mostly be interested in chains with open boundary conditions; for PBC one may arbitrarily choose a site to serve as the ``start'' of the word.}: \be \vp(\k w) = \tta_1 \odot \cdots \odot \tta_L.\ee
	
	Following usual group theory notation, we will denote a semigroup with generating set $S$ as 
	\be G = {\sf semi} \lan S \, | \, R \ran,\ee 
	where $R$ denotes a set of {\it relations} imposed on the product states that can be formed from elements of the on-site computational basis states $S$. When writing presentations of groups we will omit the inverse generators and the identity from $S$, and will likewise omit trivial relations like $\tta\tta\inv = \tte, \tta\tte=\tte\tta$ from $R$. For the remainder of the paper, presentations of semigroups which are not groups will always be denoted by ${\sf semi}\lan S \, | R\ran$, while presentations of groups will be denoted simply by $\lan S \, | \, R\ran$. 
	
	The relations in $R$ are determined by $\vp$, namely by which product states are related to one another under $\dyn$. Consider any two states $\k{w}$, $\k{w'}$ such that $\vp(\k w) = \vp(\k{w'})$ define the same element of $G$. 
	Since we are interested in dynamics which are geometrically local, it must be possible to relate $\k w$ to $\k{w'}$ using a series of {\it local} updates to $\k w$. This means that the set $R$ must be expressible in terms of a set of equivalence relations which each involve only an $O(1)$ number of the elements of $S$---implying in particular that $|R|$ must be finite. A semigroup where both $S$ and $R$ are finite is said to be {\it finitely presented}, and all of the semigroups we consider will be of this type.

	Given a semigroup $G$, we will use the notation $\dyn_G$ to represent a general local dynamical process acting on $\mch = \mch_{\rm loc}^{\tp L}$ which satisfies the constraint \eqref{dynconstraint}, and hence preserves the dynamical sectors of all computational basis states. 	
	The locality of the dynamics means that $\dyn_G$ must be composed of elementary blocks (unitary gates or Hamiltonian terms) of constant length which, when acting on computational basis product states, implement the relations contained in $R$. 	
	Writing the relations in $R$ as $r_{l} = r_{r}$ with $r_{l/r} = \tta_{l/r,1} \cdots \tta_{l/r,n}$,
	the locality of $\dyn_G$ is determined by the maximal length of the $r_{l/r}$, which we denote as $\ell_R$: 
	\be \ell_R = \max_{r_{l/r}\in R}|r_{l/r}|.\ee 
	For us $\ell_R$ will always be $O(1)$ (and when $G$ is a group, one can show that there always exists a finite presentation of $G$ such that $\ell_R\leq 3$; see App.~\ref{app:group_props} for the proof).
	
	To be more explicit, first consider the case where $\dyn_G$ corresponds to time evolution under a (geometrically) local Hamiltonian $H$. The semigroup constraint and locality of $H$ means that $\lan w ' | H |w \ran$ can be nonzero only if the words $w,w'$ differ by the {\it local} application of a relation in $R$.  $H$ consequently assumes the general form 
	\be H = \sum_i \sum_{r \in R} (\l_{i,r} \kb{r_l}{r_r}_i + h.c.),\ee 
	where $\k{r_{l/r}}_i = \k{\tta_{l/r,1}}_i \tp \cdots \tp \k{\tta_{l/r,n}}_{i+n}$ and the $\l_{i,r}$ are arbitrary complex numbers. Note that the above Hamiltonian is only well-defined if $|r_l| = |r_r|$ for all relations $r$. In cases where this does not hold, we will rectify this problem by adding a trivial character $\texttt{e}$ to the onsite Hilbert space---with $\tte$ defined to commute with all of the other generators $\tta$---which allows us to then `pad' the relations $r$ in a way which ensures that $|r_l| = |r_r|$.
	
	As a simple example, consider the group $\zz^2 = \lan \ttx, \tty | \, \ttx\tty = \tty\ttx\ran$, which as we will see later in some sense has trivial dynamics. Since the full generating set $S = \{ \ttx, \ttx\inv, \tty,\tty\inv,\tte\}$ of this presentation has dimension $|S|=5$, a Hamiltonian $H_{\zz^2}$ with $\zz^2$-constrained dynamics thus acts most naturally on a spin-2 chain. The single nontrivial relation $\ttx\tty=\tty\ttx$ has length $\ell_R = 2$, and thus $H_{\zz^2}$ can be taken to be 2-local, assuming a form like 
	\bea H_{\zz^2} & = \sum_i \sum_{\tta \in S} \big( \l_{1,i} \kb{\tta\tta\inv}{\tte\tte}_{i,i+1} + \l_{2,i} \kb{\tta\inv\tta}{\tte\tte}_{i,i+1} \\ 
	& + \l_{3,i} \kb{\tta\tte}{\tte\tta} + h.c.\big) + \sum_i (\l_{4,i} \kb{\ttx\tty}{\tty\ttx} + h.c.) + \cdots,\eea 
	which describes two species of conserved particles, each of which defines a $U(1)$ conserved charge $n_{\texttt{a}} = \sum_{i} \ketbra{\texttt{a}}{\texttt{a}} - \ketbra{\texttt{a}^{-1}}{\texttt{a}^{-1}}$ where $\texttt{a} = \texttt{x}, \texttt{y}$. The explicit examples we consider in this work will not be more complicated than $H_{\zz^2}$ in terms of their degree of locality or the dimension of $\mch_{\rm loc}$,  but their dynamical properties will be much richer. 
	
	\begin{figure*}
		\includegraphics[width=1\tw]{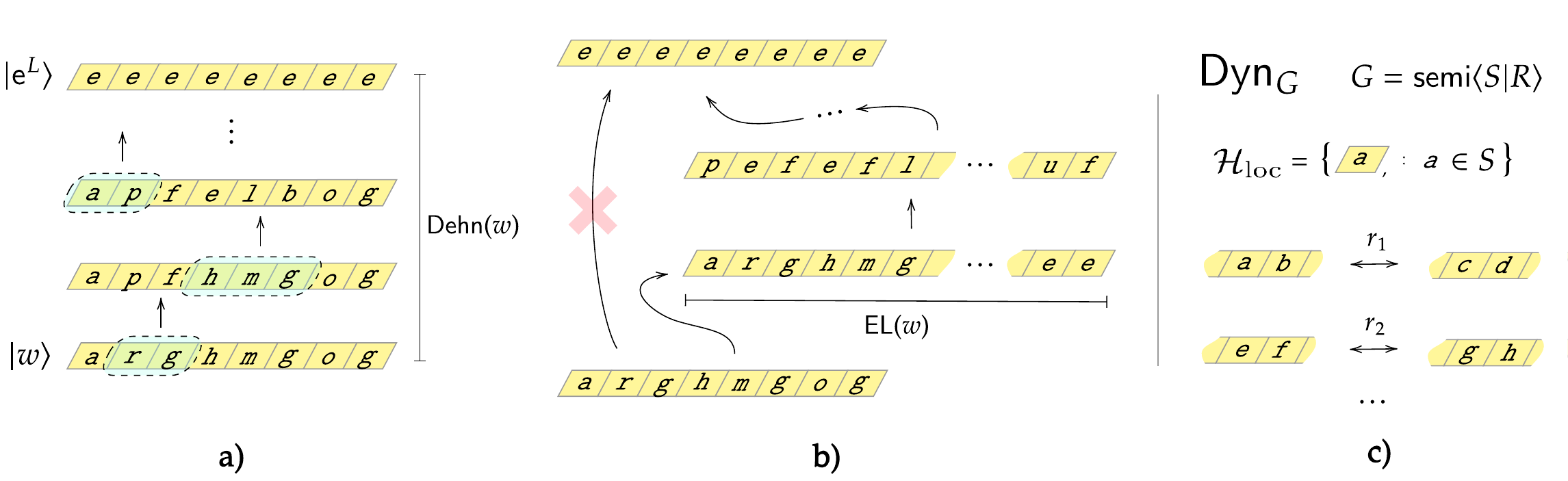}
		\caption{\label{fig:arghmgog} 
			Semigroup dynamics and the word problem. {\bf a}) In the word problem, we are given a length-$L$ word $\k{w}$ and a series of re-writing rules that let us make local updates to the characters of the word. The word problem for $\k{w}$ is the task of deciding whether or not a sequence of allowed updates can be found which transforms $\k{w}$ into a particular reference word, here chosen to be $\k{\tte^L}$. The {\it Dehn function} $\dehn(w)$ measures the time complexity of the word problem, namely the minimal number of updates needed to connect $\k{w}$ to $\k{\tte^L}$. {\bf b)} In some situations, $\k{w}$ can only be transformed into $\k{\tte^L}$ by increasing the amount of available space, done here by appending $\k{\tte\tte\cdots}$ onto the original word. The minimal amount of extra space needed defines the {\it expansion length} $\el(w)$, which captures the spatial complexity of the word problem. {\bf c)} For a semigroup $G = \lan S | R\ran$ defined by generators $\tta \in S$ and relations between the generators $r_i \in R$, our construction defines a dynamics $\dyn_G$ which acts on a 1D chain with an onsite Hilbert space $\mch_{\rm loc} = \{ \k{\tta}\, : \, \tta \in S\}$. The dynamics implimented by $\dyn_G$ is restricted to local updates which preserve the semigroup element obtained by multiplying the generators along the length of the chain; the allowed updates are consequently fixed by the relations in $R$ (with the figure drawn using the relations $r_1 : \tta\ttb = \ttc\ttd$, $r_2 : \tte\ttf = \ttg\tth$, etc.).  }
	\end{figure*}

	The construction of group-constrained random unitary dynamics is similar to the  Hamiltonian case. For random unitary dynamics, $\dyn_G$ is constructed using $\ell_R$-site unitary gates $U$ whose matrix elements $\lan w' |U|w\ran$ are nonzero only if the length-$\ell_R$ words $w,w'$ satisfy $\vp(\k w) = \vp(\k{w'})$. Such unitaries admit the decomposition 
	\begin{equation} \label{groupru}
		U = \bigoplus_{g \in G_{\ell_R}} U_g, \hspace{0.3cm} G_{\ell_R} = \{g \, | \, \exists \, \k w \, : \, \vp(\k w) = g, \,  |w| \leq \ell_R\},
	\end{equation}
	where $G_{\ell_R}$ denotes the set of elements of $G$ expressible as products of precisely $\ell_R$ generators. A particularly natural realization of $\dyn_G$ is when each $U_g$ is drawn from an appropriate-dimensional Haar ensemble, although we shall not need to specify to this case.
	
	Existing examples of constrained 1D dynamics in the literature---from multipole conserving systems to models based on cellular automata and other constrained classical systems \cite{sala2020ergodicity,khemani2020localization,motrunich,pancotti2020quantum,van2015dynamics,brighi2023hilbert,lan2018quantum,garrahan2018aspects}---are all described by $\dyn_G$ for an appropriate semigroup $G$ and an appropriate kind of dynamics (random unitary, Hamiltonian, etc.).\footnote{This is not true for models with {\it quantum} Hilbert space fragmentation \cite{motrunich,brighi2023hilbert}, which violate the above assumptions by virtue of having constraints that cannot be formulated in a local product basis.} 
	One of the main messages of this paper is that in addition to providing an organizing framework for discussing 1D constrained dynamics, approaching things from this point of view enables a large arsenal of mathematical tools from the field of geometric group theory to be brought to bear, leading to general bounds on thermalization times, precise characterizations of ergodicity breaking, and explicit examples of models with extremely slow dynamics. 
	
	%	One may then envison developing a general theory of the different dynamical universality classes that can arise in constrained 1D systems by using results on the structure and classification of semigroups. 
	%	This ambition is damped somewhat by the fact that semigroups are very general mathematical objects, and this limits the extent to which general statements about thermalization dynamics can be made. 
	%	Semigroups are however very general mathematical objects, and in    {\it group} (namely a semigroup with inverses and an identity). 
	
	\subsection{The word problem} \label{ssec:word_problem}
	
	We will now formulate the {\it semigroup word problem}, a concept key for determining the thermalization beahvior of models with semigroup dynamics. 
	We will refer to a computational basis product state---defined by a string of generators in $S$, e.g. $w = \tta_1 \cdots \tta_L$---as a {\it word}. In what follows we will often slightly abuse notation by letting the symbol $w$ stand for both an abstract string $\tta_1\cdots \tta_L$ of generators in $S$, as well as the associated computational-basis product state $\k{\tta_1\cdots \tta_L}$. 
	
	Words are naturally grouped into equivalence classes labeled by elements of $G$. Letting $W(S)$ denote the set of all words, we define these equivalence classes as 
	\be \label{kgdef} K_g \triangleq \{ w \in W(S) \, | \, \vp(\k w) = g\}.\ee 
	Any two words belonging to the same equivalence class can be deformed into one another by applying a sequence of relations in $R$. For any two $w, w' \in K_g$, we define a {\it derivation from $w$ to $w'$}, written $D(w \sqig w')$, as the sequence of words appearing in this deformation: 
	\begin{equation}
		D(w\sqig w') = w\to u_1 \to u_2 \to \cdots \to u_n \to w',
	\end{equation}
	where each arrow $\to$ indicates applying a {\it single} relation from $R$.

	In Sec.~\ref{sec:therm_time_bounds} we will see that the way in which $\dyn_G$ thermalizes is determined by the complexity of the {\it word problem} for $G$, a fundamental problem in the fields of abstract algebra and computability theory.  The word problem is defined by the following question:
	
	\vspace{0.25cm}
	\fbox{\begin{minipage}{22.5em}
			\textbf{Word problem: }Given two words $w,w',$
			does $\vp(\k w) = \vp(\k{w'})$? That is, does there exist a derivation $D(w \sqig w')$?
	\end{minipage}}
	\vspace{0.25cm}

	A key result is that even in the case where both $|S|$ and $|R|$ are small, answering this question can be very difficult (even undecidably so; see App.~\ref{app:undecide}). 
	For semigroups or groups which do not have an undecidable word problem, a key problem is to determine the time and space complexity of algorithms that solve it. 
	In what follows, we will introduce two functions characterizing the word problem's complexity: the {\it Dehn function}, which governs its time complexity, and the {\it expansion length}, which governs its space complexity.

	\sss{Time complexity}
	Our operational definition of the time complexity of the word problem is the minimum length of a derivation linking $w$ to $w'$, as illustrated in Fig.~\ref{fig:arghmgog}~(a).\footnote{An important caveat here is that proving $\vp(\k w) = \vp(\k{w'})$ does not need to require outputing an explicit derivation $D(w_1\sqig w_2)$, as for a {\it particular} $G$ there may exist other (faster) algorithms for constructing a proof. We therefore must stress that when we speak of the time complexity of the word problem, we are referring to its time complexity within the so-called {\it Dehn proof system}, a logical system where the {\it only} manipulations one is allowed to perform are the implementation of local relations in $R$. If one wants an algorithm which works for all choices of $G$, operating within this system is generically the best one can do. } We denote this by $\dehn(w_1, w_2)$:
	\begin{equation}\label{distdef}
		\dehn(w_1, w_2) \triangleq \min_{D} |D(w \rightsquigarrow w')|.
	\end{equation}
	$\dehn(w,w')$ measures the non-deterministic time complexity of the word problem since it is the maximum runtime of an algorithm which maps $w$ to $w'$ by blindly applying all possible relations in $R$ to $w$ in parallel and halts the first time $w'$ appears in the resulting superposition of words. Time evolution under $\dyn_G$ can be naturally regarded as a way of simulating this process, a connection enabling the derivation of the bounds arrived at in Sec.~\ref{sec:therm_time_bounds}. 
	
	For any word $w$, we define the {\it length} $|w|$ of $w$ as the number of generators appearing in $w$. To denote the subset of length-$L$ words in $K_g$, we write
	\be \label{kgldef} K_{g,L} \triangleq \{ w \in K_g \, | \, |w| = L\}\ee 
	as the set of length-$L$ words in $K_g$ (or equivalently, following our practice of letting $w$ stand for both a word and a computational-basis product state, as the collection of product states associated to such words). The (worst-case) time complexity across all words in $K_{g,L}$ defines the function 
	\be \label{dehng} \dehn_g(L) \triangleq \max_{\k w, \k{w'} \in K_{g, L}} \dehn(w,w').\ee 
	We will be particularly interested in how $\dehn_g(L)$ scales asymptotically with $L$. In App.~\ref{app:primer}, we show that this scaling is the same for any two finite presentations of the same semigroup: thus we may meaningfully speak about the Dehn function of a {\it semigroup}, rather than a particular presentation thereof.  This presentation independence imparts some degree of the robustness to the dynamical properties that we will derive. 
	
	For the case where $G$ is a {\it group} (rather than just a semigroup), a fair amount more can be said. All groups have a distinguished identity element $e$, and in App.~\ref{app:primer} we show that\footnote{Throughout this work, $\sim,\gtrsim,\lesssim$ will be used to denote (in)equality in the scaling sense. For example, when we write $f(L) \sim g(L)$, we mean that there exists a constant $C>0$ such that $f(L/C) \leq g(L) \leq f(LC)$.}
	\be \dehn_e(L) \gtrsim \dehn_g(L) \, \, \forall \, g \in G.\ee 
	We furthermore show that, as long as $|K_{g,L}|$ scales exponentially in $L$, then $\dehn_e(L) \sim \dehn_g(L)$. For groups, we thus define 
	\be \dehn(L) \triangleq \dehn_e(L)\ee 
	as a simpler characterization of the word problem's time complexity. The calculation of $\dehn_e(L)$ also simplifies further for groups, as we may fix $\k{w'} = \k{\tte^L}$ in \eqref{dehng} without changing the asymptotic scaling of $\dehn(L)$. Thus for groups, we will mostly focus on computing 
	\be \label{groupdehn} \dehn(L) = \max_{\k w \in K_{e, L}} \dehn(w,\tte^L) \qq \text{($G$ a group)}\ee 
	
	Even for groups where $|S|, |R|$ are both $O(1)$, $\dehn(L)$ can scale in many different ways. To start, it is easy to verify that $\dehn(L)\sim L$ for all finite groups, and that 
	\be G \,\, {\rm Abelian} \implies  \dehn(L) \lesssim L^2,\ee 
	with $\dehn(L) \sim L^2$ only when $G$ is infinite. Indeed, for all such groups, such as the $G = \zz^2$ example above, $\dehn(L)$ is bounded from above by the time it takes to transport the conserved charges $n_\tta \triangleq \sum_i (\proj\tta_i - \proj{\tta\inv}_i)$ across the system, which is $\sim L^2$. For our purposes, we will regard any $G$ with $\dehn(L) \lesssim L^2$ as uninteresting, as for these groups $\dyn_G$ thermalizes on timescales generically no slower than for conventional systems with conserved $U(1)$ charges. 
	
	A simple example of a group with an ``interesting'' Dehn function is the discrete Heisenberg group $\sfH_3$, which has the group presentation 
	\be \sfH_3 = \lan \ttx,\tty,\ttz\, | \, \ttx\tty=\tty\ttx\ttz, \, \ttx\ttz=\ttz\ttx,\, \tty\ttz=\ttz\tty \ran,\ee 
	and possesses a Dehn function scaling as $\dehn(L) \sim L^3$ \cite{gersten2003isoperimetric}. In Sec.~\ref{sec:bs} we will see an example of a simple group where $\dehn(L) \sim 2^L$ and in Sec.~\ref{sec:ff} one with $\dehn(L) \sim 2^{2^L}$; examples of semigroups with similarly slow dynamics are given in Sec.~\ref{sec:nongroup}. Going beyond these examples, Refs.~\cite{brady2000there,sapir2002isoperimetric} remarkably show that for any constant $\a$, almost any function with growth $L^2 \leq  f(L) < L^\a$ is the Dehn function of some finitely presented semigroup; this includes for example ``unreasonable-looking'' functions such as $L^{7\pi}, L^2 \log L,$ and $L^e\log(L)^{\log(L)}\log\log L$. 
	
	In addition to the worst case complexity of the word problem, we will also have occasion to consider its {\it average-case} complexity (both for groups and semigroups), a quantity has recieved much less attention in the math literature (the only exception the authors are aware of is \cite{young2008averaged}). To this end, we define the {\it typical Dehn function} $\ob{\dehn}_g(L)$ as the number of steps needed to map a certain constant fraction of words in $K_{g, L}$ to one another: 
	\be \label{typdehn} \ob{\dehn}_g(L) \triangleq \sup \{ t \, : \, {\rm Pr}_{w,w' \in K_{g, L}} [\dehn(w,w')  \geq t] \geq 1/2\},\ee 
	Establishing rigorous results about $\ob{\dehn}$ is unfortunately much more difficult than for $\dehn$, and the landscape of typical Dehn functions is comparatively less well explored. For the examples we focus on in this paper however, a combination of physical arguments and numerics will nevertheless suffice to understand the rough asymptotic scaling of $\ob{\dehn_g}(L)$.

	\sss{Space complexity} 
	
	Another complexity measure is the (nondeterministic) {\it space complexity} of the word problem.\footnote{The same caveat we made for the time complexity also applies here---we are referring explicitly to the space complexity within the ``Dehn proof system''.} The space complexity is nontrivial in cases where, during the course of being deformed into $w'$, a word $w$ must expand to a length $L' > L$. More formally, we define the {\it expansion length}\footnote{The expansion length function is called the {\it filling length function} in the math literature.} of a derivation $D(w\sqig w') = w \ra u_1 \ra \cdots \ra u_n \ra w'$ as the maximal length of the intermediate words $u_i$:
	\be \el(D(w \sqig w')) \triangleq  \max_{u_i \in D(w\sqig w')} |u_i|.\ee 
	The relative expansion length $\el(w,w')$ between two {\it words} is then defined as the minimal expansion length of a derivation connecting them: 
	\be \el(w,w') \triangleq \min_{D(w \sqig w')} \el(D(w\sqig w')),\ee 
	as illustrated in Fig.~\ref{fig:arghmgog}~(b). 
	The expansion length of a $K_{g, L}$ sector is likewise 
	\be \label{explengthg} \el_g(L) \triangleq \max_{w,w'\in K_{g, L}} \el(w,w').\ee 
	Similarly with the Dehn function, we show in App.~\ref{app:primer} that when $G$ is a group, $\el_g(L) \lesssim \el_e(L)$ for all $g$, so that for groups we may use 
	\be  \label{elgroup} \el(L) \triangleq \el_e(L) \qquad \text{($G$ a group)}\ee 
	as a simple metric of the space complexity. 
	
	In App.~\ref{app:primer} we show that $\el(L) \lesssim L$ for all Abelian groups, and that all finite groups have $\el(L) \leq L+C$ for some constant $C$; such scalings are ``uninteresting'' from the perspective of space complexity. Just as with the Dehn function though, there exist simple semigroups for which $\el(L)$ grows extremely fast with $L$, the consequences of which will be explored in Sec.~\ref{sec:ff}.

	\subsection{Semigroup dynamics and Hilbert space fragmentation} 
	
	The existence of multiple $K_{g, L}$ sectors means that $\dyn_G$ is not ergodic as long as $G$ is not a presentation of the trivial semigroup. 
	%	This lack of ergodicity is true for both semigroups and groups, but for simplicity we will focus our attention to the case of groups. 
	The simplest case is when $G$ is an Abelian group. In this case, the $K_{g, L}$ can be associated with the symmetry sectors of a global symmetry. This is true simply because the sector that a given product state lives in can be determined by computing the expectation value of the operators $n_{\ttg} = \sum_i (\proj{\ttg}_i - \proj{\ttg\inv}_i])$ for each generator $\ttg$.  Thus $\dyn_G$ for Abelian $G$ is already very well understood, given the plethora of work on thermalization in the presence of global symmetries.

	When $G$ is not an Abelian group, global symmetries may still be present, but there inevitably exist other non-local conserved quantities which distinguish different dynamical sectors. 
	Indeed, in App.~\ref{app:nosyms} we prove that the dynamical sectors of such models {\it never} be described by global symmetries alone.     
	Since $\dyn_G$ thus always has non-local conserved quantities if $G$ is not an Abelian group, the lack of ergodicity due to these quantities leads to {\it Hilbert space fragmentation} (HSF) \cite{khemani2020localization,sala2020ergodicity,moudgalya2022quantum,motrunich, MOUDGALYA2023169384}, a phenomenon whereby the dynamics of initial states becomes trapped in disconnected subspaces of $\mch$---in our notation simply the $K_{g, L}$---whose existence cannot be attributed to the presence of global symmetries alone. 
	
	The original works on HSF~\cite{khemani2020localization,sala2020ergodicity} focused mainly on fragmentation in spin systems with conserved dipole moments. While these systems fall within our semigroup framework,\footnote{Concretely, the strongly-fragmented models of~\cite{khemani2020localization,sala2020ergodicity} may be phrased as semigroup dynamics for the ``dipole semigroup'' ${\sf Dip}$ defined by the semigroup presentation 
		\be {\sf Dip} = {\sf semi}\lan \tt0, \tt{1}, \tt2 \, | \, R\,\ran, \ee
		where $R = \{ \tt0\tt2\tt0 = \tt1\tt0\tt1, \, \tt1\tt2\tt1 = \tt2\tt0\tt2, \, \tt1\tt2\tt0 = \tt2\tt0\tt1, \, \tt0\tt2\tt1 = \tt1\tt0\tt2\}$. 	
	} we will instead find it more instructive to review the pair-flip model introduced in \cite{caha2018pair}.  The pair-flip model is described by a spin-$s$ Hamiltonian of the following form:
	\be  \label{hpf} H_{PF} = \sum_i \sum_{a,b=-s}^s \l_{ab,i} (\kb{aa}{bb}_{i,i+1} + h.c.).\ee 
	For generic choices of $\l_{ab,i}$, the model is non-integrable, so any ergodicity breaking cannot be associated with integrability.
	Note that the product states $\k{w}$ where $w=a_1 \cdots a_L$ are annihilated by $H_{PF}$ if $a_i \neq a_{i+1}$ for all $i$. These product states alone provide $(2s+1)(2s)^{L-1}$ dynamically-disconnected dimension-1 sectors not attributable to any local symmetry, meaning that $H_{PF}$ exhibits HSF. 
	
	The dynamics generated by $H_{PF}$ is in fact a special case of our construction applied to the group 
	\be \zt^{\ast (2s+1)} = \lan \ttg_1 ,\dots, \ttg_{2s+1} \, | \, \ttg_i^2 = \tte \ran,\ee 
	where $\ast$ denotes the free product. $H_{PF}$ can be obtained from our general construction by considering a modified onsite Hilbert space $\mch' = {\rm span}\{ \k{\ttg}, \, \k{\ttg\inv}\, : \, \ttg \in S \}$ which contains no $\tte$ generator. The relation $\ttg_i^2 = \tte$ can be rewritten without $\tte$ as $\ttg_i^2 = \ttg_j^2$ for all $i,j$, and a general local Hamiltonian acting on $(\mch')^{\tp L}$ which obeys the $ \zt^{\ast (2s+1)}$ may accordingly be written as 
	\be H = \sum_i  \sum_{\ttg,\tth \in S} ( \l_{\ttg\ttg,i} \kb{\ttg\ttg}{\tth\tth}_{i,i+1} + h.c.) ,\ee 
	which matches the Hamiltonian in \eqref{hpf}. Unfortunately the Dehn function of $\zt^{\ast (2s+1)}$ is easily seen to scale as $\dehn(L) \sim L^2$, and thus does not provide a complexity scaling that is unusual enough to warrant further study.

	Having discussed the concept of HSF, let us further understand the structure and size of disconnected sectors under the dynamics $\dyn_G$.  The number of such disconnected sectors {\it at least} as large as the number of subspaces, each labelled by $K_{g, L}$ for some $g \in G$. Thus, the number of such subspaces---which we denote as $N_K(L)$---depends on the number of semigroup elements expressible as words of length $\leq L$. In particular, we define the {\it geodesic length} $|g|$ of an element $g\in G$ as the length of the shortest word representing $g$: 
	\be |g| = \min_{w \in K_g} |w|.\ee 
	A word $w$ with $\vp(\k w) = g$ satisfying $|w| = |g|$ is called a {\it geodesic word}.  Then 
	\be \label{groupgrowth} N_K(L) = | \{ g\, | \, |g| \leq L\} |.\ee 
	Semigroups with more elements ``close'' to the identity thus have dynamics with a larger number of Hilbert space fragments. As an example, one may readily verify that in the pair flip model, $N_K(L) \sim (2s)^L$ grows exponentially as long as $s>1/2$. 	
	
	For many models with group constraints, including the pair-flip model, $N_K(L)$ exactly determines the number of Hilbert space fragments.  However, for some semigroups, $N_K(L)$ is not the full story, with each $K_{g, L}$ further fragmenting into subspaces in a way controlled by the expansion length function defined in~\eqref{explengthg}. Understanding this phenomenon is the subject of Sec.~\ref{sec:ff}. 
	
	In the study of HSF, a distinction is often made between ``strong'' and ``weak'' HSF. This distinction was originally discussed in the context of Hamiltonian dynamics~\cite{sala2020ergodicity}, where it was defined by violations of strong and weak ETH, respectively. Since we are discussing things at a level where the nature of $\dyn_G$ may or may not involve eigenstates, we will instead adopt a slightly different definition in terms of the size of the largest $K_{g, L}$ sector, which we denote as $K_{{\rm max},L}$ (in App.~\ref{app:primer} we prove that for groups, the largest sector is in fact always the one associated with the identity, $K_{{\rm max},L} = K_{e, L}$).  
	We will say that the dynamics is
	\begin{enumerate}
		\item {\it weakly} fragmented if $|K_{{\rm max},L}|/|\mathcal{H}| \to O(1)$ as $L\ra\infty$, and 
		\item {\it strongly} fragmented if $|K_{{\rm max},L}|/|\mathcal{H}| \to 0$ as $L\ra\infty$.
	\end{enumerate}
	We will find it useful to subdivide the strongly fragmented case into additional classes according to how quickly $|K_{{\rm max}, L}|/|\mathcal{H}|$ vanishes as $L\ra\infty$. We will say that the fragmentation in the case of strong HSF is 
	\begin{enumerate}
		\item {\it polynomially strong} if $|K_{{\rm max},L}|/|\mathcal{H}| \sim 1/{\rm poly}(L)$, and 
		\item {\it exponentially strong} if $|K_{{\rm max},L}|/|\mathcal{H}| \sim \exp(-L)$.
	\end{enumerate}
	
	Note that the above definitions are made without reference to any global symmetry sectors. This means that if global symmetries happen to be present, the quantum numbers associated with them will constitute part of the elements $g$ labeling the different dynamical sectors, and the above definition of strong/weak HSF will simply single out the largest, regardless of its symmetry quantum number(s). When symmetries are present one could also quantify the degree of fragmentation by first fixing a quantum number, changing $K_{\rm max}(L)$ to be the largest sector having that quantum number, and replacing $|\mch|$ by the dimension of the chosen symmetry sector. However, since generic $\dyn_G$ dynamics needn't have any global symmetries, and since the result of the above procedure can depend sensitively on the chosen quantum number \cite{morningstar2020kinetically,pozderac2023exact,wang2023freezing}, we will focus only on the above (simpler) definition, which maximizes over all symmetry sectors.

	Our models provide a way of addressing two questions raised in Ref.~\cite{sala2020ergodicity}. The first question was whether or not 1D models exist with $0<|K_{{\rm max},L}| / |\mch| < 1$ in the thermodynamic limit (known examples with weak fragmentation all have $|K_{{\rm max},L}|/|\mch| \ra 1$ in the thermodynamic limit).  Our construction answers this question in the affirmative, with examples provided by $\dyn_G$ for any finite non-Abelian $G$ (e.g. $G = S_3$). 
	
	The second question concerned the existence of models where $|K_{{\rm max},L}| / |\mch|$ vanishes more slowly than exponentially as $L\ra\infty$ after specifying to a fixed symmetry sector (in fact an affirmative answer to this question was already provided by the spin-1 Motzkin chain introduced in \cite{Bravyi}, where $|K_{{\rm max},L}| / |\mch| \sim L^{-3/2}$\footnote{This follows from the fact that the probability of a length-$L$ random walk on $\zz$ being a returning Motzkin walk scales as $\sim L^{-3/2}$.}). 
	Our models provide (many) more examples of this phenomenon, as one need only let $G$ be a group with $L^2 \lesssim \dehn(L) \lesssim L^\infty$, a simple example being the discrete Heisenberg group $\sfH_3$.\footnote{If $\dyn_G$ possesses global symmetries and one modifies the above definitions to operate within a fixed symmetry sector, then various models are known where a {\it specific} symmetry sector with polynomially strong fragmentation exists \cite{morningstar2020kinetically,pozderac2023exact,wang2023freezing}. }
	One may further ask whether there are strongly fragmented systems where $|K_{{\rm max},L}|/|\mathcal{H}|$ decays at a rate in between $1/{\rm poly}(L)$ and $\exp(-L)$. The answer to this is again affirmative, with one such example being the focus of Sec.~\ref{sec:bs}.

	\section{Slow thermalization and the time complexity of the word problem} \label{sec:therm_time_bounds} 
	
	From the discussion of the previous section, it is natural to expect that systems with $\dyn_G$ dynamics will have thermalization times controlled by the time complexity of the word problem, as diagnosed by the Dehn functions $\dehn_g(L)$ defined in \eqref{dehng}. In this section we make this relationship precise by using the functions $\dehn_g(L)$ to place lower bounds on various thermalization timescales. In the case of random unitary or classical stochastic dynamics, $\dehn_g(L)$ will be used to bound relaxation and mixing times; for Hamiltonian dynamics it will appear in bounds for hitting times. 
	
	A common feature these timescales have is that they quantify when $\dyn_G$ is able to spread initial product states across Hilbert space. This is often not something that can be probed by looking at correlation functions of local operators, which would be the preferred method for thinking about thermalization. 
	While the bounds derived in this section do not mandate that the relaxation of {\it local operators} also be controlled by $\dehn_g(L)$, in Sec.~\ref{sec:bs} we will explore an explicit example in which they are. Until then, we will focus on the more ``global'' diagnostics of relaxation and hitting times.
	
	We note in passing that it is impossible to use $\dehn_g(L)$---or any other quantity---to place {\it upper} bounds on thermalization timescales without making any additional assumptions about the details of $\dyn_G$ (as without additional assumptions we could always choose $\dyn_G$ to be evolution with a many body localized Hamiltonian, and the relevant timescales would all diverge). Even in the case where $\dyn_G$ is an appropriately constrained form of random unitary evolution, upper bounding the relaxation timescales requires techniques beyond those employed below, and constitutes an interesting direction for future work.

	\subsection{Circuit dynamics}

	We first discuss the case where $\dyn_G$ is generated by a $G$-constrained random unitary circuit, which will we see can be mapped to the case where $\dyn_G$ is a classical Markov process. We assume that $\dyn_G$ is expressed as a brickwork circuit whose gates act on $\ell_R$ sites, with $\ell_R$ the maximum size of a relation in $R$. $\dyn_G(t)$ will be used to denote $t \in \nn$ timesteps of this dynamics, with each timestep consisting of $\ell_R$ staggered layers of gates.
	
	Let us first look at how operators evolve under $\dyn_G(t)$. We use overbars to denote averages over circuit realizations, so that acting on an operator $\mco$ with one layer of the brickwork (corresponding to a time of $t = 1/\ell_R$) gives
	\bea \ob{\mco(t=\ell_R\inv)} & = \ob{\dyn^\da_G(\ell_R\inv) \, \mco\, \dyn_G(\ell_R\inv)} \\ 
	& = \oEE_{\{ U_{j,g},U_{j',g'}\}} \bot_j \bigoplus_{g \in G_{\ell_R}} U_{j,g}^\da \, \mco \,\bot_{j'} \bigoplus_{g' \in G_{\ell_R}} U_{g',j'}  \eea 
	where each $U_{j,g}$ acts on a length-$\ell_R$ block of sites and $G_{\ell_R}$ as before denotes the set of group elements expressible as words of length $\leq \ell_R$. Assuming the $U_{j,g}$ are drawn uniformly from the $|K_{g, \ell_R}|$-dimensional Haar ensemble,\footnote{	Note that unlike in Ref.~\cite{Singh}, here we take the $U_{j,g}$ to be nontrivial in {\it every} $K_{g, \ell_R}$ sector, even the one-dimensional ones (where $U_{j,g}$ acts as multiplication by a random phase). This leads to a comparatively simpler expression for the Markov generator to appear below.  } performing the average gives 
	\bea \label{mcoevolnmat} \ob{\mco(t=\ell_R\inv)} & = \bot_j \( \sum_{g \in G_{\ell_R}} \frac{\Tr [\mco_j \Pi_{K_{g, \ell_R}}]}{|K_{g, \ell_R}|} \Pi_{K_{g, \ell_R}}  \), \eea 
	where we have defined 
	\be \Pi_{K_{g, \ell_R}} \triangleq \sum_{w \in K_{g, \ell_R}} \proj w\ee 
	as the projector onto the space of length-$\ell_R$ words $w$ with $\vp(\k w) = g$, as well as---without loss of generality---taken $\mco$ to factorize as $\mco = \bot_j \mco_j$.
	
	Thus after a single step of the dynamics, all operators completely dephase and become diagonal in the computational basis, operators violating the dynamical constraint evaluate to zero under Haar averaging. We may thus focus on diagonal operators without loss of generality, which we will indicate using vector notation as $\k{\mco(t)}$.  With this notation, \eqref{mcoevolnmat} becomes 
	\be \k{\ob{\mco(t=\ell_R\inv)}} = \mcm_1 \k{\mco}\ee 
	with the matrix 
	\be \label{mcm1} \mcm_1 = 
	\( \sum_{g \in G_{\ell_R}} \wt\Pi_{K_{g, \ell_R}}\)^{\tp L/\ell_R},\ee 
	where 
	\be \wt \Pi_{K_{g, \ell_R}} \triangleq \frac1{|K_{g, \ell_R}|} \sum_{w,w' \in K_{g, \ell_R}} \kb{w}{w'}\ee 
	projects onto the uniform superposition of states within $K_{g, \ell_R}$. 
	
	Different layers of the brickwork likewise define matrices $\mcm_i$ with $i = 1,\dots,\ell_R$ where $i$ denotes the staggering. Defining 
	\be \label{mcm2} \mcm \triangleq \prod_{i=1}^{\ell_R} \mcm_i,\ee diagonal operators evolve as 
	\be \k{\ob{\mco(t)}} = \mcm^t \k{\mco}.\ee 
	$\mcm$ is a symmetric\footnote{Strictly speaking $\mcm = \mcm^T$ only if we double each step of the evolution by including a reflection-related pattern of gates, namely $\mcm \ra \prod_{i=1}^{\ell_R} \mcm_i \prod_{j=1}^{\ell_R} \mcm_{\ell_R+1-j}$; we will tacitly assume this has always been done.} doubly-stochastic matrix, with the smallest eigenvalue of 0 and the largest eigenvalue of 1. Due to the group constraint, $\mcm$ does not have a unique steady state following from that fact that the Markovian dynamics is reducible as  
	\be \mcm = \bigoplus_{g\in G_L} \mcm_g.\ee 
	Each $\mcm_g$ however defines a irreducible aperiodic Markov chain, whose unique steady state is the uniform distribution $\k{\pi_g} \triangleq \frac{1}{|K_{g, L}|} \sum_{w \in K_{g, L}} |w\ran$. Infinite temperature circuit-averaged correlation functions are determined by the mixing time and spectral gap of the $\mcm_g$, which we now relate to the Dehn function. 
	
	Consider first the mixing time of $\mcm_g$, which we may view as a characterization of the thermalization timescale within $K_{g, L}$:
	\be t_{\rm mix}(g) \triangleq \min \{ t \, : \, \max_{\k w \in K_{g, L}} ||\mcm_g^t \k w - \k{\pi_g} ||_1 \leq 1/2 \}.\ee
	A basic result \cite{levin2017markov} about $t_{\rm mix}(g)$ is that it is lower bounded by half the diameter of the state space that $\mcm_g$ acts on, simply because in order to mix, the system must at the minimum be able to traverse across most of state space. This gives the bound 
	\be \label{naivemix} t_{\rm mix}(g) \geq \frac{\dehn_g(L)}{2}.\ee   
	However, this bound is in fact too lose.  Intuitively, this is because to saturate the bound, $\dyn_G(t)$ would need to immediately find the optimal path between any two nodes in configuration space. 
	Since $\mcm$ generates a random walk on state space, the dynamics will instead diffusively explore state space in a less efficient manner. 
	On general grounds one might therefore expect a bound on $t_{\rm mix}(g)$ which is the square of the RHS above. This guess in fact turns out to be essentially correct, with  
	\be \label{dehnsqbound} t_{\rm mix}(g) \geq \frac{\dehn_g^2(L)}{16 \ln |K_{g,L}|},\ee 
	which follows from Prop.~13.7 of \cite{lyons2017probability} after using that the equilibrium distribution of $\mcm_g$ is always uniform on $K_g$ by virtue of $\mcm_g$ being doubly stochastic.
	Since $|K_{g, L}|$ is at most exponential in $L$, we thus have 
	\be t_{\rm mix}(g) \geq C_g\frac{\dehn_g^2(L)}L \ee 
	for some $g$-dependent $O(1)$ constant $C_g$. 
	We generically expect the random walk generated by $\mcm$ to be the fastest mixing local Markov process, and hence expect it to saturate the above bound on $t_{\rm mix}(g)$, a prediction which we will confirm numerically for the example in Sec.~\ref{sec:bs}.

	Correlation functions for operators computed in states in $K_{g, L}$ are controlled by the {\it relaxation time} $t_{\rm rel}$ of $\mcm_g$, defined by the inverse gap of $\mcm_g$:
	\be t_{\rm rel}(g) \triangleq \frac1{1-\l_2},\ee 
	where $\l_2$ is the second-largest eigenvalue of $\mcm_g$. 
	This quantity admits a similar bound to $t_{\rm mix}(g)$, as one can show that \cite{levin2017markov}
	\be t_{\rm rel}(g) \geq \frac{t_{\rm mix}(g)}{\ln (2|K_{g,L}|)},\ee 
	and hence 
	\be \label{trelbound} t_{\rm rel}(g) \geq C'_g\frac{\dehn_g^2(L)}{L^2},\ee 
	with $C'_g$ another $O(1)$ constant. Thus $\dehn_g(L)$ places lower bounds on both mixing and the decay of correlation functions. Note that the operators whose correlators decay as $t_{\rm rel}(g)$ need not be local; indeed the obvious ones to consider are projectors like $\proj w$. In Sec.~\ref{sec:bs} we will however explore an example which possesses local operators that relax according to \eqref{trelbound}.  
	
	Note that the mixing and relaxation times are worst case measures of the thermalization time.
	We can additionally define a ``typical'' mixing time $\ob{t_{\rm mix}}(g)$ as 
	\be \ob{t_{\rm mix}}(g) \triangleq \min\{ t\, : \, {\rm Pr}_{\k w}( ||\mcm^t_g|w\ran - \k{\pi_g} ||_1 \leq 1/2) \geq 1/2\},\ee 
	where ${\rm Pr}_{\k w}$ indicates the probability over words uniformly drawn from $K_{g, L}$. By following similar logic as above, one can derive similar bounds on $\ob{t_{\rm mix}}(g)$ in terms of the typical Dehn function.

	\subsection{Hamiltonian dynamics}
	
	We have thus far discussed mixing times of the Markov chains which arise from random unitary dynamics. However, purely Hamiltonian dynamics does not mimic that of a Markov chain, and being reversible it possesses no direct analogues of mixing and relaxation times.  Nevertheless, the Dehn function can still be used to place bounds on the timescales taken for time evolution to spread wavefunctions across in Hilbert space. To be more quantitative, we will focus on the
	{\it hitting time} $t_{\rm hit}(g)$ of $\dyn_G$, which we define as the minimum time needed for product states in $K_{g, L}$ to ``reach'' all other product states in $K_{g, L}$. Since $\lan w' | e^{-iHt} | w\ran$ will generically be nonzero for all $\k w,\k{w'}\in K_{g, L}$ as soon as $t>0$, we will need a slightly different definition of $t_{\rm hit}(g)$ as compared with the Markov chain case. 
	
	To define $t_{\rm hit}(g)$ more precisely, we first define the hitting amplitude between two computational-basis product states $\k w, \k{w'} \in K_{g, L}$ as
	\be h_{ww'}(t;g) \triangleq |\lan w' | \dyn_G(t)| w\ran|^2.\ee 
	Note that $h_{w,\cdot}(t;g)$ is a probability distribution on $K_{g, L}$, with $\sum_{w'\in K_{g, L}} h_{ww'}(t;g) = 1$ for all $ \k w\in K_{g, L}$.
	We define $t_{\rm hit}(g, w, w')$ between two words $w$ and $w'$ as the minimum time for which $h_{ww'}(t;g)$ reaches a fixed fraction of its infinite-time average $\ob{h_{ww'}}(g)$, defined by
	\be \ob{h_{ww'}}(g) \triangleq \lim_{T\ra\infty} \frac1T\int_0^Tdt \, h_{ww'}(t;g).\ee
	The hitting time is the maximum over all pairs of words $\k w, \k{w'}\in K_{g, L}$ of $t_{\rm hit}(g, w, w')$, which is written as
	\be t_{\rm hit}(g) \triangleq \max_{\k w, \k{w'} \in K_{g, L}} \min\{ t \, : \, h_{ww'}(t;g) > \frac1{2} \ob{h_{ww'}}(g) \} \ee 
	
	To bound $t_{\rm hit}(g)$, we first evaluate the hitting amplitude as  
	\begin{equation}
		h_{ww'}(t;g) = |\lan w' |e^{-iHt} |w\ran|^2 = \abs{\sum_{k = 0}^{\infty} \frac{t^k}{k\,!} \mel{w'}{(-iH)^k}{w}}^2 
	\end{equation}
	We know that the minimum $k$ such that $\lan w' | H^k |w\ran \neq 0$  is given by $\dehn_g(w,w')$, which for brevity we denote by $d_{ww'}$ in the following. The first $d_{ww'}$ terms in the above sum will thus vanish, and so truncating the above Taylor series at the leading nonzero term, we apply the remainder theorem to find
	\begin{equation}
		\label{hwwtbound}	h_{ww'}(t;g) \leq  \(\frac{ t^{d_{ww'}} |\lan w'| H^{d_{ww'}} | w\ran|}{d_{ww'}\,!}\)^2. 
	\end{equation}
	To diagnose the hitting time we need to compare $h_{ww'}(t;g)$ with its long time average $\ob{h_{ww'}}(g)$. We do this by relating the above matrix element $|\lan w'| H^{d_{ww'}} | w\ran|^2$ to $\ob{h_{ww'}}(g)$ as follows. 	
	Writing $\{ \k E\}$ for $H$'s eigenbasis, 
	\bea \label{matelementbound}
	|\lan w | H^{d_{ww'}}&| w'\ran |^2 = \abs{\sum_E \braket{w}{E} \braket{E}{w'} E^{d_{ww'}}}^2 \\
	& \leq 	\left(\sum_E |\braket{w}{E}|^2 |\braket{E}{w'}|^2\right) \left(\sum_E |E|^{2d_{ww'}}\right) 
	\eea 
	where the second line follows from the Cauchy-Schwarz inequality. The first factor in parenthesis is simply $\ob{h_{ww'}}(g)$. This can be readily verified: 
	\bea 
	\ob{h_{ww'}}(g) & \triangleq \lim_{T \to \infty} \frac{1}{T}\int_0^T  h_{ww'}(t;g) \,dt \\ & = \lim_{T \to \infty} \frac{1}{T} \sum_{E,E'} \braket{w}{E} \braket{E}{w'} \braket{w'}{E'} \braket{E'}{w} \\ & \qq \times \int_0^T e^{-i(E-E') t}\,dt
	\eea 
	which, assuming that the spectrum of $H|_{K_{g, L}}$ is non-degenerate (which we assume to be the case throughout this paper), gives
	\begin{equation}
		\ob{h_{ww'}}(g) = 
		\sum_{E} \abs{\braket{w}{E}}^2 \abs{\braket{E}{w'}}^2.
	\end{equation}
	Inserting this in \eqref{matelementbound}, 
	\bea 
	|\lan w | H^{d_{ww'}} | w'\ran |^2 & \leq |\mch| \ob{h_{ww'}}(g)   \times ||H||_\infty^{2d_{ww'}}. \eea 
	Since $H$ is local, $||H||_\infty = \l L$ for some $O(1)$ constant $\l$. We may thus write \eqref{hwwtbound} as 
	\be h_{ww'}(t;g) \leq \(\frac{(\l L t)^{d_{ww'}}}{d_{ww'}!}\)^2 |\mch| \ob{h_{ww'}}(g).\ee 
	
	As long as $d_{ww'} = \Omega(L)$, $h_{ww'}(t;g)$ is thus much smaller than its equilibrium value when $t = \O(d_{ww'}/L)$. Indeed, from Stirling's approximation we have 
	\be h_{ww'}(\eta d_{ww'}/L;g) \leq \frac{(\l\eta e)^{2d_{ww'}}}{\twp d_{ww'}} |\mch| \ob{h_{ww'}}(g),\ee
	which can always be made exponentially small in $L$ by choosing $\eta$ appropriately if $d_{ww'} = \O(L)$. We conclude that 
	\be t_{\rm hit}(g) \geq \frac{\dehn_g(L)}L,\ee 
	which is essentially the same bound as our naive result \eqref{naivemix} for $t_{\rm rel}(g)$ in the case of random circuit evolution. It would be interesting to see if this bound could be improved to the square of the RHS, as in \eqref{trelbound}.
	
	Finally, we note that just as with the mixing time, a {\it typical} hitting time may also be defined as 
	\be \ob{t_{\rm hit}}(g) \triangleq \sup\{ t\, : \, {\rm Pr}_{w,w'\in K_{g, L}}(t_{{\rm hit}, ww'} \geq t) \geq 1/2\}.\ee
	Arguments similar to the above then show that $\ob{t_{\rm hit}}(g)$ admits a similar bound in terms of $\ob{\dehn}_g(L)$.

	\section{The Baumslag-Solitar group: exponentially slow hydrodynamics} \label{sec:bs}
	
	The previous two sections have focused on developing parts of the general theory of semigroup dynamics. In this section we take an in-depth look at a particular example which exhibits anomalously long relaxation times.  
	
	Our example will come from a family of groups whose Dehn functions scale exponentially with $L$, yielding word problems with large time complexity.\footnote{The {\it spatial} complexity of these examples---which will be discussed later in Sec.~\ref{sec:ffbs}---is however trivial.} When $\dehn(L) \sim \exp(L)$, the hitting and mixing times discussed in the previous section are exponentially long. It is perhaps already rather surprising that a translation invariant local Markov process / unitary circuit can have mixing times that are this long, but our example is most interesting for another reason: it {\it also} possesses a global symmetry whose conserved charge takes $\dehn(L)$ time to relax. This allows the slow time complexity of the word problem to be manifested in the expectation values of simple local operators, rather than being hidden in hitting times between different product states. 
	
	Our example comes from a family of groups known in the math literature as the {\it Baumslag-Solitar} groups \cite{baumslag1969non}, which are parameterized by two integers $n,m \in \nn$. Each group $\bs(n,m)$ in this family is generated by two elements $\tta$ and $\ttb$ obeying a single nontrivial relation, with the group presentation 
	\be \bs(n,m) = \lan \tta,\, \ttb\, | \, \texttt{b} \texttt{a}^m \texttt{b}^{-1} = \texttt{a}^n\ran.\ee
	Models with $\dyn_{\bs(n,m)}$ dynamics are thus most naturally realized in spin-2 systems with $\max(n,m)$-local dynamics and onsite Hilbert space 
	\be \mch_{\rm loc} = {\rm span}\{ \k\tta,\, \k{\tta\inv},\, \k \ttb ,\, \k{\ttb\inv}, \, \k\tte \}.\ee 
	The simplest Baumslag-Solitar group which is interesting for our purposes is $\bs(1,2)$, and the rest of our discussion will focus on this case. For convenience, in what follows we will write $\bs(1,2)$ simply as $\bs$. 
	
	The nontrivial relation in $\bs$ reads
	\be \label{bs12reln} \tta \ttb = \ttb \tta^2, \ee 
	so that $\tta$ generators duplicate when moved to the right of $\ttb$ generators. By taking inverses, the $\tta$ generators are also seen to duplicate when moved to the left of $\ttb\inv$ generators:
	\be \ttb\inv \tta = \tta^{2} \ttb\inv.\ee 
	This duplication property means that an $O(n)$ number of $\ttb$ and $\ttb\inv$ generators can be used to grow an $\tta$ generator by an amount of order $O(2^n)$, as 
	\be \ttb^{-n} \tta \ttb^n = \tta^{2^n}.\ee 
	We will see momentarily how this exponential growth can be linked to the Dehn function of $\bs$, which also scales exponentially. 
	
	Another key property of \eqref{bs12reln} is that it preserves the number of $\ttb$'s. This means that models with $\dyn_G$ dynamics possess a $U(1)$ symmetry generated by the $n_\ttb$, defined as 
	\be \label{nbdef} n_\ttb \triangleq \sum_i n_{\ttb,i} \triangleq \sum_i (\proj \ttb_i - \proj{\ttb\inv}_i).\ee 
	It is this conserved quantity which will display the exponentially slow hydrodynamics advertised above.

	\ss{A geometric perspective on group complexity}
	
	To understand how dynamics in $\dyn_\bs$ works, we will find it helpful to introduce a {\it geometric} way of thinking about the group word problem.\footnote{This picture is most useful for groups, rather than semigroups; see Sec.~\ref{sec:nongroup} and App.~\ref{app:group_props} for discussions on this point.} Given a general discrete group $G$, we will let $\cg_G$ denote the {\it Cayley graph} of $G$. Recall that $\cg_G$ is a graph with vertices labeled by elements of $G$ and edges labeled by generators of $G$ and their inverses, with two vertices $\tth,\ttk$ being connected by an edge $\ttg$ iff $\tth = \ttg \ttk$. As simple examples, $\cg_{\zn}$ for $\zn = \lan \ttg \, | \, \ttg^N = \tte\ran $ is a length-$N$ closed cycle; $\cg_{\zz^2}$ for $\zz^2 = \lan \ttx, \, \tty \, | \, \ttx\tty = \tty \ttx\ran$ is a 2D square lattice; and $\cg_{\zz\ast\zz}$ for the free group $\zz\ast \zz = \lan \ttx , \, \tty \, | \, \ran$ is a Bethe lattice with coordination number 4.\footnote{While the exact structure of $\cg_G$ is different for different presentations of $G$, the {\it large scale structure} of $\cg_G$ (scalings of geodesic lengths, graph expansion, and so on) is presentation-independent (see e.g. Ref.~\cite{clay2017office} for background). Therefore we will mostly refer to the Cayley graph of a {\it group}, rather than a particular presentation thereof. }
	
	\begin{figure*}
		\includegraphics[width=\linewidth]{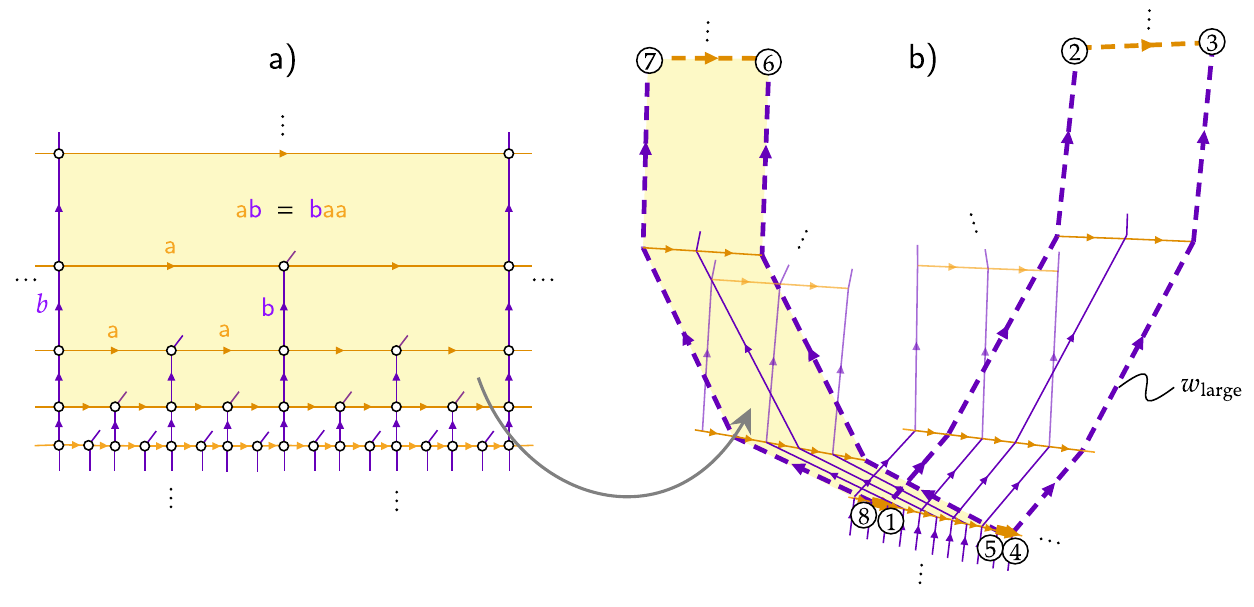}%
		\caption{The group geometry of $\bs(1,2)$. The Cayley graph of $\bs(1,2)$ is constructed from an infinite number of two-dimensional sheets glued together in a tree-like fashion. \textsf{a)} A single sheet of the Cayley graph, which resembles a hyberbolic plane. Orange arrows denote multiplication by $\tta$ and purple arrows multiplication by $\ttb$, with the boundary of each plaquette being the defining relation $\tta\ttb\tta^{-1}\tta^{-1}\ttb^{-1}=e$. The diagonal purple lines denote edges which connect to different sheets; all diagonal edges connected to vertices at the same "height" connect to the same sheet. \textsf{b)} How different sheets are glued together to form the full Cayley graph. The shaded yellow region denotes how a section of a single sheet is embedded in the full Cayley graph. The bold dashed path denotes the path traced out by a ``large'' word $w_\mathrm{large}(n) =\tta \ttb^{-n}\tta\inv \ttb^n \tta\inv\ttb^{-n} \tta \ttb^n$, which is homotopic to the identity ($\vp(\k{w_{\rm large}}) = e$) and possesses an area scaling exponentially with its length. In the figure $n=3$, and the path on the Cayley graph is ordered by points 1 to 8, according to the 8 different components of $w_{\rm large}$ (read from right to left). Thus the point 2 corresponds to the word $\ttb^3$, the point 5 corresponds to $\tta\inv\ttb^{-3}\tta\ttb^3$, and so on. }
		\label{fig:bs_geometry}
	\end{figure*}
	
	It is useful to realize that from any Cayley graph $\cg_G$, we can always construct a related 2-complex by associating oriented faces (or 2-cells) to each of $\cg_G$'s elementary closed loops; the 2-cells have the property that the product of generators around their boundary is a relation in $R$. For the just-mentioned examples, $\zn$ would have one $N$-sided face, $\zz^2$ would have a face for each plaquette of the square lattice (with the generators around the plaquette boundaries reading $\ttx\tty\ttx\inv\tty\inv$), and the free product $\zz\ast\zz$ would have no faces at all (due to its lack of non-trivial relations). The 2-complex thus constructed is known as the {\it Cayley 2-complex} of $G$, and we will abuse notation by also referring to it as $\cg_G$. 
	
	Any group word $w\in W(G)$ naturally defines an oriented path in $\cg_G$ obtained by starting at an (arbitrarily chosen) origin of $\cg_G$ and moving along edges based on the characters in $w$. The endpoint of this path on the Cayley graph is thus the group element $\vp(\k w)$. Additionally, applying local relations in $R$ to $w$ deforms this path while keeping its endpoints fixed. This gives a geometric interpretation of the subspaces $K_{g, L}$: 
	\bea K_{g, L} = \Big\{P| P =\text{length-$L$ path from origin to $g$ in $\cg_G$}\Big\}.\eea
	In particular, the (largest) subspace $K_{e, L}$ is identified with the set of all length-$L$ closed loops in $\cg_G$. The number $N_K(L)$ of Krylov sectors is the number of vertices of $\cg_G$ located a distance $\leq L$ from the origin. 
	
	Having provided a relationship between Krylov subspaces and the Cayley 2-complex, what is the geometric interpretation of the Dehn function? Based on the definition \eqref{groupdehn}, we can restrict our attention to deformations $D(w \sqig \tte^L)$ for $w \in K_{e, L}$, which are simply homotopies that shrink the loop defined by $w$ down to a point. The loop passes across one cell at each step, and the number of steps in $D(w \sqig \tte^L)$ is the area enclosed by the surface swept out by the homotopy. In particular, the Dehn function of a $w$ is 
	\be \dehn(w,\tte^L) = \min_{S\, : \, \p S = w} {\rm Area}(S),\ee 
	where the minimum is over surfaces in $\cg_G$ with boundary $w$; we will thus refer to 
	\be {\rm Area}(w) \triangleq \dehn(w,\tte^L) \ee 
	as the {\it area} of $w$. The Dehn function of the group is then the largest area of a word in $K_{e, L}$:
	\be \dehn(L) = \max_{\text{loops $w$ of length $L$}} \dehn(w,\tte^L).\ee 
	
	This perspective is important as it gives the algorithmic definition of $\dehn(L)$ a geometric meaning.  The large-scale geometry of $\cg_G$ thus directly affects the complexity of the word problem, and consequently the thermalization dynamics of $\dyn_G$. 
	
	\ss{The geometry of $\bs$ and the fragmentation of $\dyn_\bs$}\label{sec:bsel}
	
	The simple examples (discrete groups, abelian groups, free groups) discussed above are all geometrically uninteresting, but $\bs$ group is a notable exception. $\cg_\bs$ has the structure of an infinite branching tree of hyperbolically-tiled planes, and is illustrated in Fig.~\ref{fig:bs_geometry}. 
	To understand how this comes about, recall that $\ttb^{-n}\tta \ttb^n = \tta^{2^n}$. 
	This means that for all $n$, $\ttb^{-n}\tta \ttb^n \tta^{-2^n}$ forms a closed loop in $\cg_\bs$. These closed loops give rise to a tiling of the hyperbolic plane, as shown in Fig.~\ref{fig:bs_geometry}~(a). Letting multiplication by $\tta$ correspond to the motion along the $\uvx$ direction and multiplication by $\ttb$ to the motion along $\uvy$, the hyperbolic structure comes from the fact that to moving by $2^n$ sites along the $\uvx$ direction of the Cayley graph can either be accomplished by direct path (multiplication by $\tta^{2^n}$) or a path that requires exponentially fewer steps which first traverses $n$ steps along the $\uvy$ direction before moving along $\uvx$ (multiplication by $\ttb^{-n}\tta \ttb^n$). 
	
	The full geometry of $\cg_\bs$ is more complicated than a single hyperbolic plane, however. As shown in Fig.~\ref{fig:bs_geometry}~(a), this can be seen from the fact that the word $\ttb\tta\ttb$ cannot be embedded into the hyperbolic plane. Instead, this word must ``branch out'' into a new sheet, which also forms a hyperbolic plane. Fig.~\ref{fig:bs_geometry}~(b) illustrates the consequences of this for the Cayley graph, whose full structure consists of an infinite number of hyperbolic planes glued together in the fashion of a binary tree (formally, the presentation complex of $\bs$ is homeomorphic to $\rr \times T_3$, where $T_3$ is a 3-regular tree). 
	
	The locally hyperbolic structure of $\cg_\bs$ means that the number of vertices within a distance $L$ of the origin---and hence the number of Krylov sectors $N_K(L)$---grows exponentially with $L$. In App.~\ref{app:bs_details} we argue that the base of the exponent is very nearly $\sqrt3$:
	\be N_K(L) \approx \sqrt3^L.\ee 
	Interestingly, despite having exponentially many Hilbert space fragments, it is known that the largest sector $K_{e, L}$ occupies a fraction of the full Hilbert space $\mch$ that decreases {\it sub-exponentially} with $L$: 
	\be \label{id_sector_scaling} \frac{|K_e|}{|\mch|} \sim e^{-\a L^{1/3}},\ee 
	where the numerical results in Fig.~\ref{fig:id_sector_scaling}---obtained by sampling random words and postselecting on membership in $K_{e, L}$---indicate that $\a \approx 1.84$. Models with $\dyn_\bs$ dynamics provide an example (indeed the first that we are aware of) of a strongly fragmented model where the largest sector occupies neither an exponentially nor polynomially small fraction of Hilbert space.
	
	\begin{figure}
		\centering
		\includegraphics[width=.39\textwidth]{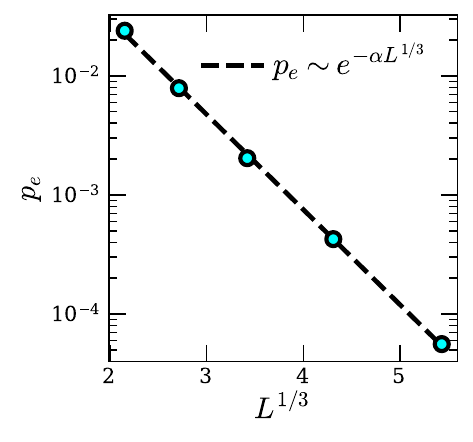}
		\caption{The probability $p_e$ for a randomly chosen word to lie in the identity sector, determined according to the procedure described in App.~\ref{app:bs_details}. The dashed line is a fit to $p_e \propto e^{-\a L^{1/3}}$ with $\a \approx 1.84$. }
		\label{fig:id_sector_scaling}
	\end{figure}
	
	\ss{The Dehn function}
	We finally have the necessary tools explain why the Dehn function of $\bs$ grows exponentially with $L$. 	
	Define the word
	\begin{equation}
		w_\mathrm{large}(n) \triangleq \tta \ttb^{-n}\tta\inv \ttb^n \tta\inv\ttb^{-n} \tta \ttb^n
	\end{equation}
	which is of length $|w_\mathrm{large}(n)| = 4n + 4$, belongs to $K_{e, L}$, and is shown for $n=3$ in Fig.~\ref{fig:bs_geometry} as the thick dashed line in panel~(b).   As a path, this word can be broken into two legs. The first leg moves to the vertex labelled by $\tta^{2^n}$ by first going ``up'' the hyperbolic plane of a given sheet, traversing one step along $\uvx$, and then coming back ``down''. The second leg moves back to the origin, but does so by passing along a {\it different} sheet of the tree. From Fig.~\ref{fig:bs_geometry}, it is clear that the area of the minimal surface bounding $w_\mathrm{large}(n)$ is exponentially large; more precisely it is 
	\be {\rm Area}(w_\mathrm{large}(n)) = 2 \sum_{i=0}^{n-1} 2^i = 2^{n+1}-2,\ee  
	which scales exponentially with $n$. 
	This construction thus shows that $\dehn(L) = \O(2^L)$ \cite{gersten1992dehn}. This bound is in fact tight (see App.~\ref{app:bs_details}), and so  
	\be \dehn(L) = \ct(2^L).\ee 
	From the results of Sec.~\ref{sec:therm_time_bounds}, this implies that $\dyn_G$ has exponentially long mixing, relaxation, and hitting times.

	A more sophisticated treatment needs to be given in order to understand the scaling of the {\it typical} Dehn function $\ob{\dehn}(L)$. 
	To our knowledge, this question has not been answered in the math literature. While we will not provide a completely rigorous proof of $\ob{\dehn}(L)$'s scaling, a combination of physical arguments and numerics---which we relegate in their entirety to App.~\ref{app:bs_details}---indicate that 
	\be \label{avgdehnbs} \ob{\dehn}(L) \sim 2^{\sqrt L}.\ee 
	The rough intuition behind this result is that a typical closed-loop path in $K_{e, L}$ will roughly execute a random walk along the tree part of $\cg_\bs$, reaching a ``depth'' of $\sqrt L$; from this point it is then able to enclose an area exponential in this depth. Giving a rigorous proof of~\eqref{avgdehnbs} could be an interesting direction for future research. 
	
	\ss{Exponentially slow hydrodynamics}
	
	An observation about the word $w_\mathrm{large}(n)$ is that it contains a density wave of $\ttb$s, with the profile of $n_{\ttb,i}$ looking like two periods of a square wave as a function of $i$. Utilizing the fact that ${\rm Area}(w_{\rm large}(n)) \sim 2^n$, 
	one sees that this density wave takes an {\it exponentially long time to relax}, with a thermalization time $t_{\rm th}(n) \sim [{\rm Area}(w_{\rm large}(n))]^2 = \O(2^{2n})$ (where the square of the area comes from the square in \eqref{trelbound}).
	Furthermore,  almost any word with an $n_\ttb$ density wave will admit a similar bound on $t_{\rm th}$. Indeed, consider words ${w_{\rm large}(n,L)}$ obtained by extending $w_{\rm large}(n)$ to length $L > 4n+4$ by inserting $L-(4n+4)$ random characters at random points of $w_{\rm large}(n)$ (we will assume for simplicity that in fact $ w_{\rm large}(n,L) \in K_{e, L}$, but this assumption is not crucial). The only way for ${\rm Area}({w_{\rm large}(n,L)})$ to be significantly smaller than ${\rm Area}(w_\mathrm{large}(n))$ is if the added characters cancel out almost the entirety of the $\ttb$ density wave, or cancel a large number of $\tta$'s located near the peaks and troughs of the density wave. For a random choice of ${w_{\rm large}(n,L)}$ these situations are exponentially unlikely to occur, and we expect 
	\be \label{wlargearea} {\rm Area}(w_{\rm large}(n,L)) = \O(2^n)\ee 
	with probability 1 in the large $n$ limit. 
	The long relaxation time of the density wave is attributed to the long mixing time of $\bs$ because of its large Dehn function (see Sec.~\ref{sec:therm_time_bounds}).  This is remarkable because probing the large scale complexity of $\bs$ only requires studying the dynamics of {\it local} operators, namely those that overlap with $n_\ttb$.  We henceforth will denote the relaxation and mixing times by $t_{th}$.
	
	Of course since $2^n$ is the {\it smallest} possible time needed to map $w_{\rm large}(n)$ to $\tte^{4n+4}$, it gives only a lower bound on $t_{\rm th}$. In the case where $\dyn_\bs(t)$ is generated by a classical Markov process or random unitary circuit dynamics, $\dyn_G(t)$ effectively leads to words executing a random walk on configuration space of loops. Due to the diffusive nature of this random walk, this leads us to expect that the true thermalization time instead scales as the square of its lower bound, namely 
	\be \label{tthwlarge} t_{\rm th}(w_{\rm large}(n)) = O(2^{2n}),\ee
	an estimate that we will confirm shortly in Sec.~\ref{sec:numerics}. 
	
	To examine the relaxation in more detail, consider a state $\k{w_{A,q}}$ which contains a $n_\ttb$ density wave of momentum $q$ and amplitude $A$, but which is otherwise random. By this, we mean that the expectation value of $n_{\ttb,i}$ in $\k{w_{A,q}}$ is (switching to schematic continuum notation) 
	\be n_{\ttb}(x) = A\sin(qx),\ee
	and that $\k{w_{A,q}}$ is chosen randomly from the set of all states in $K_{e, L}$ that have this expectation value (with the restriction to $K_{e, L}$ done purely for notational simplicity). 
	
	The ``depth'' $n_A$ that $w_{A,q}$ proceeds into $\cg_\bs$ is equal to the contrast of the density wave, which we define as the integral of the $\ttb$ density over a quarter period of the density wave: 
	\be n_A \triangleq \int_0^{\pi/2q} n_\ttb(x) = \frac{A}q.\ee 
	We thus expect the density wave defined by $\k{w_{A,q}}$ to have a thermalization timescale of 
	\be t_{\rm th} = O(2^{2n_A}) = O(2^{2A/q}).\ee 
	Note that this exponential timescale is not visible in the standard linear-response limit, in which one takes $A \to 0$ before taking $q \to 0$: the two limits lead to qualitatively different relaxation. In the standard linear-response limit, fluctuations at scale $q$ will decorrelate on the {\it typical} relaxation timescale $\sim 2^{\sqrt{L}}$.

	Beyond $t_{\rm th}$, we would also like to know how the amplitude $A$---or equivalently the contrast $n_A$---behaves as a function of time. To estimate this, consider first how $w_{\rm large}(n)$ relaxes. By the geometric considerations of Sec.~\ref{sec:bsel}, the shortest homotopy reducing $w_{\rm large}(n)$ to the identity word must first map $w_{\rm large}(n) \ra w_{\rm large}(n-1)$, then $w_{\rm large}(n-1) \ra w_{\rm large}(n-2)$, and so on; see Fig.~\ref{fig:word_homotopy} for an illustration. The self-similarity of this process means that the time needed to map $w_{\rm large}(m) \ra w_{\rm large}(m-1)$, namely $t_{\rm th}(w_{\rm large}(m-1))$, is equal to the combined time needed to perform all of the maps $w_{\rm large}(k) \ra w_{\rm large}(k-1)$ with $k<m$. Applying this observation to the density wave under consideration and using \eqref{tthwlarge}, we estimate 
	\be n_A(q,t_{\rm th}(1-2^{-k})) \approx n_A(q,0) - k\ee
	for $k \leq n_A(0)$. Writing the $k$ on the RHS in terms of the time $t = t_{\rm th}(1-2^{-k})$, these arguments suggest that the density relaxes as 
	\be \label{eq:nbt} n_\ttb(q,t) \approx \ct(t_{\rm th}- t) \( n_{\ttb,0} + \log_2 \( 1 - t/t_{\rm th} + 2^{-n_{\ttb,0}}t/t_{\rm th}\)\),  \ee
	where $t_{\rm th} = C2^{2A/q}$ with $C$ an $O(1)$ constant, $n_{\ttb,0}$ is shorthand for $n_\ttb(q,0)$, and the $2^{-n_{\ttb}(q,0)} t/t_{\rm th}$ inside the logarithm ensures that $n_{\ttb}(q,t_{\rm th}) = 0$. 
	
	An interesting consequence of \eqref{eq:nbt} is that the relaxation of the density wave happens ``all at once'' in the large contrast limit (e.g. small $q$ at fixed $A$), meaning that the density profile remains almost completely unchanged until a time very close to $t_{\rm th}$, at which point the density wave is abruptly destroyed. Indeed, define the {\it collapse timescale} 
	\be t_{\rm col}(\ep) \triangleq \min \{ t \, : \, n_A(q,t) < (1-\ep) n_A(q,0)\}\ee 
	as the time at which the collapse of the density wave becomes noticeable within a precision controlled by the constant $0<\ep <1$. Then the form \eqref{eq:nbt} implies that 
	\be \label{suddencollapse} \lim_{\ep \ra 0}\lim_{A/q \ra \infty} \frac{t_{\rm col}(\ep)}{t_{\rm th}} = 1,\ee
	implying that the collapse is instantaneous in the large-contrast limit. 
	
	These arguments show that density waves collapse abruptly, but do not tell us about the statistical distribution of collapse times obtained when considering an ensemble of realizations of the dynamics (specific disorder realizations of random unitary circuits, specific choices of Hamiltonian, etc.). In systems which relax suddenly, the width $\s_{t_{\rm th}}$ of this distribution can be either parametrically smaller (in system size) than the mean collapse time $t_{\rm th}$, or can scale as $\O(t_{\rm th})$. The former case occurs in Markov chains displaying a ``cutoff'' (see e.g. \cite{aldous1986shuffling,diaconis1996cutoff}), which can occur when the chain describes random motion on a highly-connected graph, or when it describes relaxation in a metastable system (where relaxation is caused by rare events that yield a Poissonian scaling $\s_{t_{\rm th}} \sim  \sqrt{t_{\rm th}} = o(t_{\rm th})$). The latter case is typical in systems which relax diffusively, with an unbiased random walk on $\zz^d$ being a typical example. 
	
	From the geometric considerations of this section, it is perhaps not clear which scenario should apply a priori. The numerics of the following section tentatively point to the latter scenario, with the distribution of thermalization times roughly obeying $\s_{t_{\rm th}} \sim t_{\rm th}$. Giving a more precise characterization of this scaling would be interesting to explore in future work. 
	% This behavior is strongly reminiscent of the cutoff phenomenon observed in many types of Markov chains (see e.g. \cite{aldous1986shuffling,diaconis1996cutoff}), and it could be interesting to explore this connection further in future work. 

	\ss{Numerics} \label{sec:numerics} 
	
	\begin{figure*}
		\centering
		\includegraphics[scale=0.285]{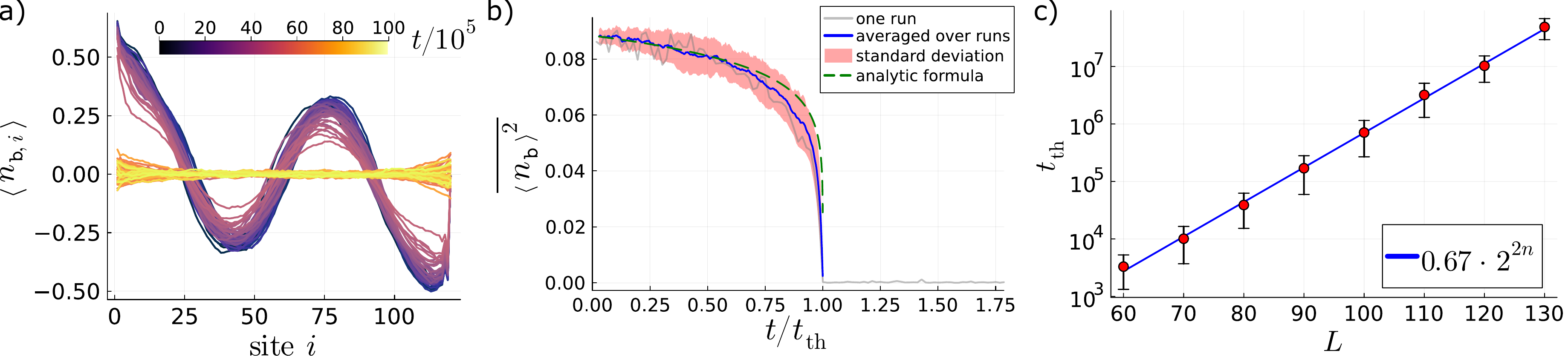}
		\caption{
			Stochastic circuit time evolution of observables associated with the $\ttb$-charge for the system initialized in a state $\k{ w_\mathrm{large}(n=L/10,L)}$, which is a large-area word $w_\mathrm{large}(n=L/10)$ with identities inserted at random places. (a) Time evolution of the spatial profile of $\ttb$-charge, $\expval{n_{\ttb, i}}$, where averaging is performed over a time window of $T = 10^5$ brickwork layers. (b) Time evolution of observable $\overline{\expval{n_{\ttb}}^2}$ from Eq. (\ref{eq:observable}). Thermalization occurs at time $t_\mathrm{th}$, when this observable drops to zero, which corresponds to the $\ttb$-charge density wave collapsing to a flat profile. The run corresponding to panel (a) is shown in grey, while the blue curve corresponds to an average over several independent runs, with the red shade showing the standard deviation. Time for each run has been rescaled by the respective thermalization time. The analytic formula from Eq.\eqref{eq:nbt} is shown in dashed green (rescaled by the value at $t=0$). (c) Thermalization time scales exponentially with the system size $L$ if the density of $\ttb$'s in the initial word is kept fixed, $n = L/10$.}
		\label{fig:bs_t_th}
	\end{figure*}
	
	We now present the results of numerical simulations that let us take a more detailed look into the relaxation of $n_{\ttb}$ and confirm the predictions made in the previous subsection. 
	
	\sss{Stochastic circuits}
	
	Our simulations all treat $\dyn_\bs$ as time evolution under $\bs$-constrained random unitary circuits. Because off-diagonal operators are rendered diagonal after a single step of random unitary dynamics (as was shown in \eqref{mcoevolnmat}), we can without loss of generality focus on the evolution of diagonal operators. 
	In Sec.~\ref{sec:therm_time_bounds}, we showed that the product states associated with diagonal operators evolve in time according to the stochastic matrix $\mcm$ derived in \eqref{mcm1}. 
	Explicitly, since the maximal size of an elementary relation in $\bs$ is $\ell_R= 3$ (e.g. $\tta\ttb\tte = \ttb\tta\tta$), $\mcm$ is most naturally constructed using 3 brickwork layers of 3-site gates: 
	\be \mcm \triangleq \mcm_1 \mcm_2 \mcm_3,\qq \mcm_a \triangleq \bigotimes_{i=0+a}^{\lfloor L/3\rfloor-1} M_{3i + a},\ee 
	where, as in \eqref{mcm1}, the matrices $M_i$ induce equal-weight transitions among all dynamically equivalent 3-letter words: 
	\bea \label{stochasticgate} M_i & \triangleq \sum_{\ttg_1,\dots,\ttg_6 \in S \cup S\inv \cup \{\tte\}} \d_{\ttg_1\ttg_2\ttg_3,\,\ttg_4\ttg_5\ttg_6} \\ & \qq \frac1{|K_{\ttg_1\ttg_2\ttg_3}(3)|}\kb{\ttg_1,\ttg_2,\ttg_3}{\ttg_4,\ttg_5,\ttg_6}_{i,i+1,i+2},\eea 
	where the $1/|K_{\ttg_1\ttg_2\ttg_3}(3)|$ factor ensures that $M_i$ is stochastic, in fact doubly so on account of $M_i^T = M_i$. 
	The steady-state distribution $\k{\pi_g}$ of $\mcm$ within $K_{g, L}$ is accordingly given by the uniform distribution on $K_{g, L}$: 
	\be \k{\pi_g} \triangleq \frac1{|K_{g, L}|} \sum_{w \in K_{g, L}} |w\ran.\ee 
	
	In practice we do not actually diagonalize $\mcm$, but rather use the matrix elements of $\mcm$ to randomly sample updates that may be applied to a computational basis state $\k w$, with the system thus remaining in a product state at all times. A single time step in our simulations corresponds to a single brickwork layer of $\mcm$, i.e. to the application of a single relation at each 3-site block of sites. Since $\lfloor L/3 \rfloor$ relations can be applied at each time step, in these units it is $\dehn(L) / L$---rather than $\dehn(L)$---which lower bounds the mixing and relaxation times of $\mcm$.  
	
	\sss{Slow relaxation}
	
	We start by exploring how long it takes the state $\k{ w_\mathrm{large}(n,L)}$ to relax under $\mcm$'s dynamics, where as above ${ w_\mathrm{large}(n,L)}$ is obtained from $\k{w_{\rm large}(n)}$ by padding it with $L - (4n+4)$ identity characters inserted at random positions. This is done by initializing the system in $\k{w_{\rm large}(n,L)}$ and tracking the local $n_{\ttb,i}$ density over time.
	We focus in particular on the moving average of the $n_\ttb$ density and the fluctuations thereof, defined as 
	\bea \label{eq:observable}
	\lan n_{\ttb,i}\ran(t) & \triangleq \frac{1}{T} \sum_{t'=t}^{t+T} n_{\ttb, i} (t'), \\ 
	\overline{\expval{n_{\ttb}}^2} (t) & \triangleq \frac{1}{L} \sum_{i=1}^L  \lan n_{\ttb,i}\ran^2\eea
	where $n_{\ttb,i}(t) = \delta_{\ttg_i(t), \ttb} - \delta_{\ttg_i(t), \ttb\inv}$ is the $\ttb$-charge on site $i$ at time step $t$ (with $\ttg_i(t)$ the $i$th entry of the state at time $t$), and $T$ is a time window that is small relative to the thermalization timescales of interest, but large enough to suppress short-time statistical fluctuations $n_{\ttb,i}(t)$ (here $\lan\cdot\ran$ denotes averaging over this time window, while $\ob{\,\,\cdot\,\,}$ denotes averaging over space). The equilibrium distribution $\k{\pi_g}\propto \sum_{w \in K_{g,L}} \k{w}$ for $g=e$ satisfies $\lan \pi_e| n_{\ttb,i}|\pi_e\ran \approx 0$ for all $i$ (see App.~\ref{app:bs_details}), and so for initial states $\k w \in K_{e, L}$, any nonzero value of $\lan n_{\ttb,i}\ran(t)$ indicates a lack of equilibration (for $g \neq e$ the average $\lan \pi_g| n_{\ttb,i}|\pi_g\ran$ can be nonzero, and must first be computed in order to diagnose equilibration).

	The evolution of $\lan n_{\ttb,i}\ran(t)$ for a single realization of the dynamics initialized in $\k{w_{\rm large}(n,L)}$ is shown in Fig.~\ref{fig:bs_t_th}~(a) for $n = L/10$ and $L=120$. From this we can clearly see the sudden collapse phenomena predicted above in \eqref{suddencollapse}: the $n_\ttb$ density wave hardly decays at all until very close to $t_{\rm th} \sim 5 \times 10^6$, and which point $\lan n_{\ttn,i}\ran(t)$ rapidly becomes very close to zero. This is quantified in Fig.~\ref{fig:bs_t_th}~(b), which plots the spatially-averaged $n_\ttb$ fluctuations $\ob{\lan n_{\ttb}\ran^2}$ for the same realization. A relatively good fit (dashed line) is obtained using the square of the ``sudden collapse'' function defined in \eqref{eq:nbt}. 
	
	\begin{figure}
		\centering
		\includegraphics[width=.48\tw]{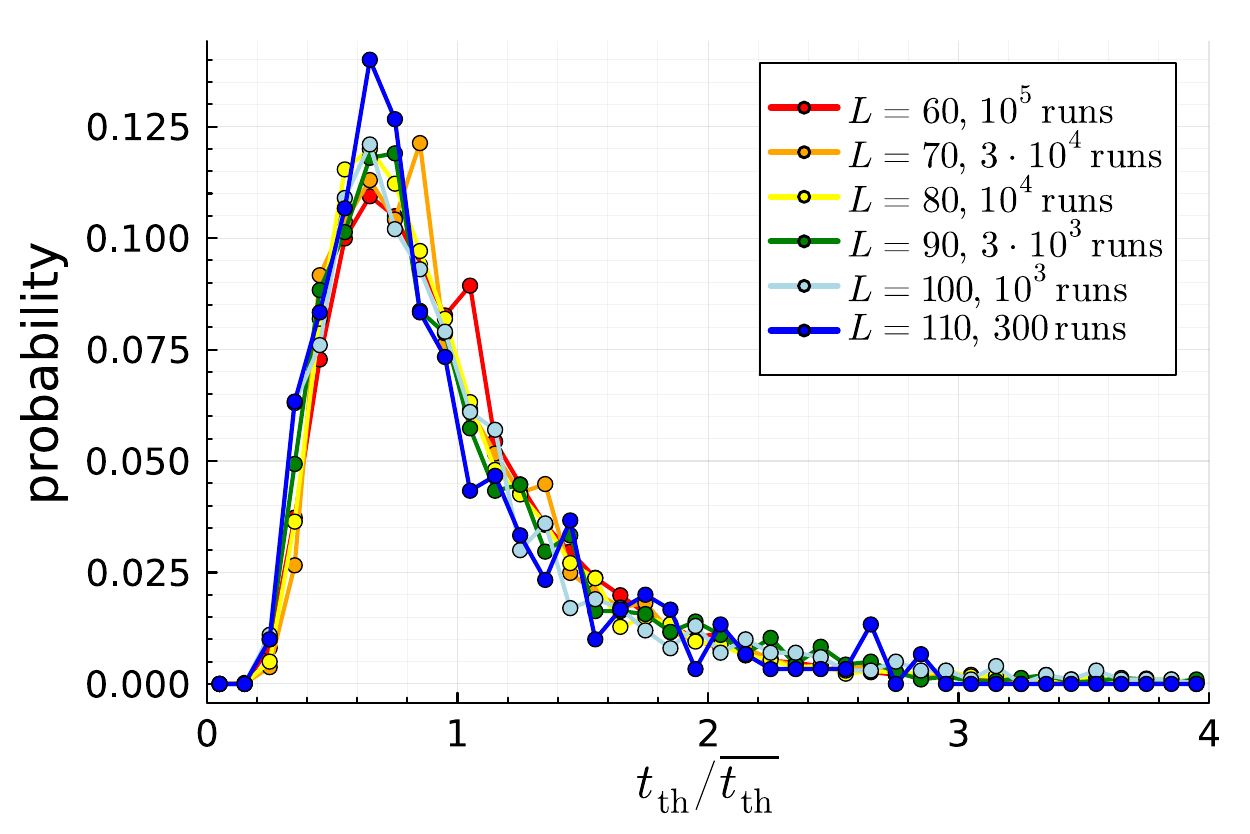 }
		\caption{Distribution of thermalization times $t_{\rm th}$ of the density wave defined by $\k{w_{\rm large}(n,L)}$ with $n=L/10$, shown for different system sizes with 300 circuit realizations. A rather broad distribution is observed.
		}
		\label{fig:tth_dist}
	\end{figure}
	
	We now investigate how the thermalization time of $\k{w_{\rm large}(n,L)}$ scales with $n$, continuing to fix $n/L = 10$ so as to keep both $n,L$ extensive. Operationally, we define the thermalization time as the first time when the magnitude of fluctuations in $n_\ttb$ drop below a fixed fraction of their initial value: 
	\be t_{\rm th} \triangleq \min \{ t\, : \, \ob{\lan n_\ttb\ran^2 }(t) < \frac1{10}\ob{\lan n_\ttb\ran^2 }(0)\}.\ee 
	Fig.~\ref{fig:bs_t_th}~(c) shows the scaling of $t_{\rm th}$ with $L$, which is observed to admit an excellent fit to the predicted scaling of $\sim 2^{2n}$. As shown in Fig.~\ref{fig:tth_dist}, the {\it distribution} of thermalization times across different runs is additionally observed to be rather broad, with a standard deviation that scales approximately in the same way as $t_{\rm th}$.

	While this confirms that the density wave present in $\k{w_{\rm large}(n,L)}$ relaxes on a timescale of $t_{\rm th} \sim 2^{2n}$, it does not show that {\it all} states with $\ttb$ density waves of amplitude $A$ and wavelength $q$ relax on times of order $2^{2A/q}$. This in fact cannot be true, and fast-relaxing density waves always exist. Indeed, consider the word $w_\mathrm{small}(n,L)$ obtained from $w_\mathrm{large}(n,L)$ by replacing all $\tta$s and $\tta\inv$s with $\tte$s. The Dehn time of this word is merely $\dehn(w_\mathrm{small}(n,L)) \sim n^2$, since there are no $\tta$s present to slow down the dynamics of the $\ttb$s (any $\tta\tta\inv$ pairs created in between the segments of the density waves have a net zero number of $\tta$s, and are thus ineffectual at providing a slowdown). 
	This means that the relaxation time of an initial state $\k w$ carrying a $n_\ttb$ density wave cannot be predicted from knowledge of the conserved density alone---one must also have some knowledge about the distribution of $\tta$'s in $\k w$. 
	
	We expect however that $t_{\rm th} \sim 2^{2A/q}$ for {\it generic} states containing an amplitude-$A$, wavelength-$q$ density wave. Indeed, as argued above near \eqref{wlargearea}, this is simply because as long as the value of the $\tta$-charge $\sum_i (\proj \tta_i - \proj{\tta\inv}_i)$ is nonzero in the regions between the segments of the density wave---which will almost always be true for a random density wave state in the thermodynamic limit---the $\tta$s trapped ``inside'' the density wave will take an exponentially long time to escape. Note that this remark applies to a generic density wave state $\k w$, regardless of whether or not $\k w\in K_{e, L}$. 
	
	In Fig.~\ref{fig:bs_b_wave_a_random}, we numerically investigate the relaxation of generic density waves by considering initial states which host density waves of momentum $q = 4\pi/3L$ and amplitude $A = nq$ (with $n$ fixed at $n=L/10$), but which are otherwise random. Compared with $\k{w_{\rm large}(n,L)}$, these random density waves (which generically lie in different $K_{g,L}$ sectors) exhibit a much broader range of thermalization times; some thermalize very quickly, while some never thermalize over our chosen simulation time window (Fig.~\ref{fig:bs_b_wave_a_random}~a). Nevertheless, we still observe the average thermalization timescale to scale exponentially with $n$ (Fig.~\ref{fig:bs_b_wave_a_random}~b). The large sample-to-sample fluctuations of $t_{\rm th}$ make it difficult to reliably extract the exact scaling behavior, but the above reasoning suggests that $t_{\rm th}$ continues to scale as $2^{2n}$ for typical initial states. 
	
	\begin{figure}
		\centering
		\includegraphics[scale=0.4]{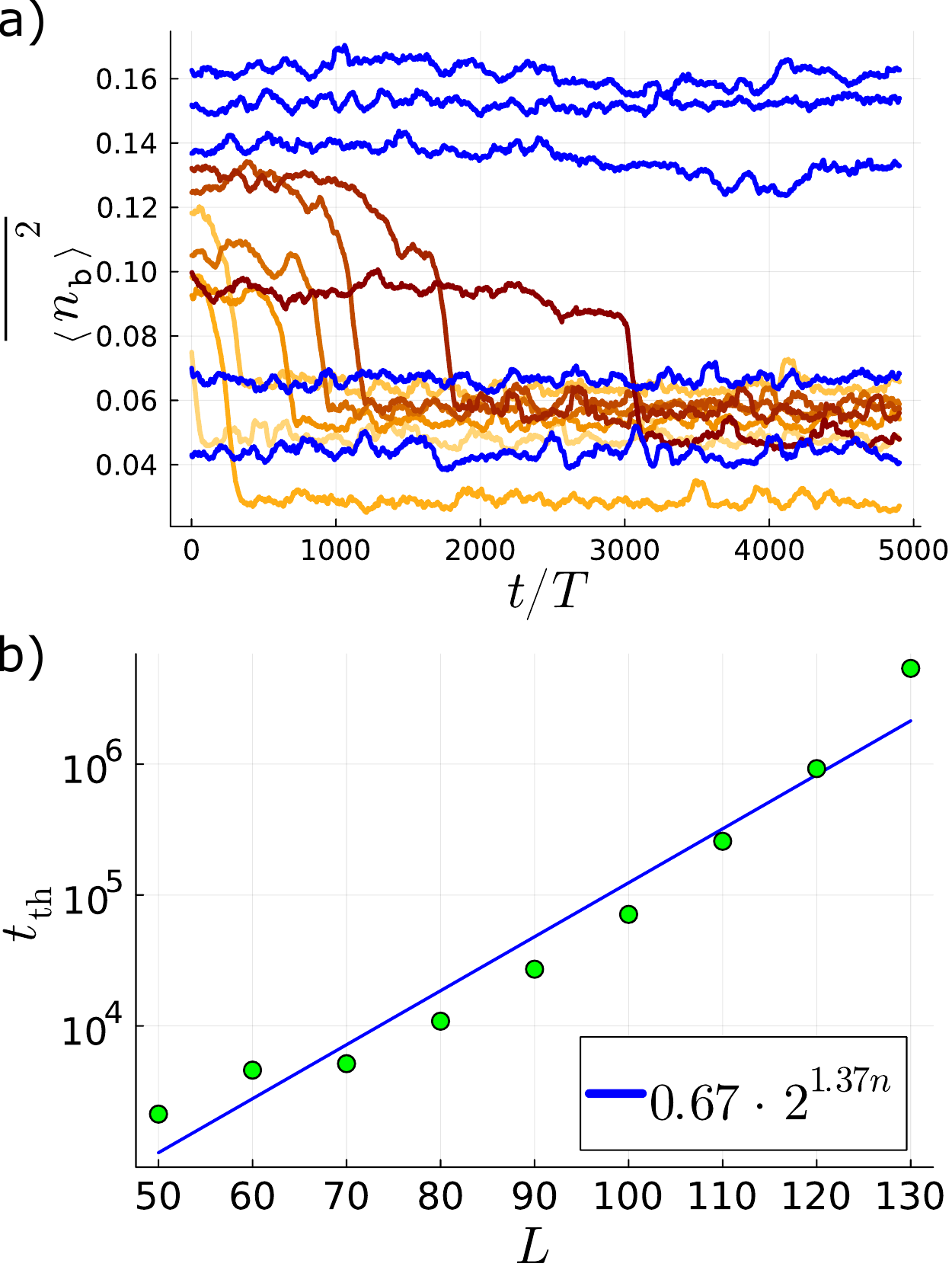}
		\caption{Relaxation of $n_\ttb$ under stochastic circuit dynamics for initial states with random $\ttb$ density waves. The initial states are chosen to be words of the form $w_1\ttb^nw_2 \ttb^{-n}w_3 \ttb^nw_4 \ttb^{-n}w_5$, where $n=L/10$ and the $w_{1,\dots,5}$ are random words containing only the characters $\{\tta, \tta^{-1}, \tte\}$. (a) Time evolution of observable $\overline{\expval{n_{\ttb}}^2}$ from Eq. (\ref{eq:observable}) for several independent runs with random initial states described above and the time averaging window $T=100$. The time at which the density wave collapses and the final equilibrium value of $\ob{\lan n_\ttb\ran^2}$ are seen to change significantly for different choices of initial state, with some runs thermalizing faster than $T$ (low-lying blue lines) and some runs not thermalizing within the whole displayed time window of $5000T$ (high-lying blue lines). 
			(b) Thermalization times $t_{\rm th}$ of random density waves, post-selected on initial states that exhibit a drop in $\ob{\lan n_\ttb\ran^2}$ of at least $75\%$ during the displayed time window. $t_{\rm th}$ is determined as the median across all runs (as opposed to the mean), due to the presence of a long tail in the distribution of $t_{\rm th}$, the statistics of which are shown for $L=80$ in the inset. $t_{\rm th}$ determined in this way is seen to scale exponentially or faster with $n$.}
		\label{fig:bs_b_wave_a_random}
	\end{figure}
	
	\section{Fragile fragmentation and the space complexity of the word problem}\label{sec:ff}
	
	Our discussion in the past few sections has focused on the way in which the time complexity of the word problem enters in the thermalization times of $\dyn_G$. In this section, we turn to the {\it space complexity} of the word problem. 
	As discussed in Sec.~\ref{ssec:word_problem}, the space complexity of the word problem is determined by the maximal amount of space required to map between two words $w,w'\in K_{g, L}$, as diagnosed by the expansion length function $\el_g(L)$ \eqref{explengthg}.  When the expansion length is large, transitioning from $w$ to $w'$ necessarily requires that $w$ first grow to be much larger than its original size before shrinking down to $w'$. When $\el_g(L) > L$, the dynamics thus lacks the spatial resources needed to connect all states which describe the same group element. In this situation, $\dyn_G$ cannot act ergodically in $K_{g,L}$, and thus the $K_{g,L}$ become {\it further} fragmented. Each fragment now contains words that can be reached from some reference word $w$ by derivations that do not involve intermediate words longer than $L$. We call this phenomenon {\it fragile fragmentation} in analogy with the notion of fragile topology in band theory \cite{po2018fragile}: $\dyn_G$ is said to exhibit fragile fragmentation if there are pairs of words $w, w'$ of length $L$ such that $\dyn_G$ on a system of length $L$ does not connect $\ket{w}$ and $\ket{w'}$, but $\dyn_G$ on a larger system of length $L'>L$ connects the ``padded'' words $\ket{w} \otimes \ket{\texttt{e}^{L'-L}}$ and $\ket{w'} \otimes \ket{\texttt{e}^{L'-L}}$.\footnote{For semigroups, we would replace $\texttt{e}$ with a character $0$ which freely moves past other characters.} A schematic illustration of this definition is given in Fig.~\ref{fig:ff_and_jamming}. 
	
	A simple example of this phenomenon that exists in higher dimensions is the jamming transition.  In jammed systems, an ensemble of particles with hard core repulsive interactions can exhibit a phase transition from a low density mobile phase to a high density jammed phase. The analog of fragmentation is the limited configuration space that particles can explore in the jammed phase. When the jammed particles are given more space, their density decreases, and when it drops below a critical value, the dynamics becomes ergodic. If the extra space is subsequently removed ergodicity may again be broken, but the system may find itself in a previously inaccessible microstate.  
	The models we discuss in this section are more drastic examples of this phenomenon: unlike the examples with hardcore particle models, the models we study exihibit jamming even in one dimension, where the analog of the critical jamming density can be polynomially or exponentially small in system size. 
	
	To understand when $\dyn_G$ exhibits fragile fragmentation, we need to compute the expansion length $\el(L)$.\footnote{Our notation in this discussion will assume that $G$ is a group, for which $\el(L)\triangleq \el_e(L) \gtrsim \el_g(L)$ for all $g$. For general semigroup dynamics, one would need to consider the $\el_g(L)$'s separately. } Doing so for a general group can be rather difficult, as $\el(L)$'s definition involves a rather complicated minimization problem. If however we already know the time complexity of the word problem---i.e. if we know the scaling of $\dehn(L)$---it is possible to place a lower bound on $\el(L)$ \cite{gersten2002filling}. Indeed, suppose a word $w\in K_{e, L}$ has an expansion length $\el(w)$, implying that $|w|$ is at most $\el(w)$ during any homotopy from $w$ to $\tte^L$. Then the expansion length of any derivation $D(w\sqig\tte)$ cannot exceed the total number of length-$\el(w)$ words; if it did, there would be at least one state which appeared multiple times in $D(w\sqig\tte)$---which implies that such a derivation cannot be of minimal length. Since the number of length $\el(w)$ words is $(2|S|+1)^{\el(w)}$, we thus have $\dehn(w) \leq (2|S|+1)^{\el(w)}$. By maximizing over all possible $w\in K_{e, L}$ and taking a logarithm, we obtain the general bound 
	\be\label{mtspacetimebound} \el(L) \geq \frac{\ln\(\dehn(L)\)}{\ln(2|S|+1)}.\ee 
	This bound is interesting in that it connects the spatial and temporal complexities of the word problem. 
	It also has the consequence that to find examples with additional ergodicity breaking, we need only find a group with super-exponential Dehn function. We will do this in Sec.~\ref{sec:iteratedbs}, but before doing so, we will warm up by understanding fragile fragmentation and the spatial complexity of the word problem in the simpler case of $\dyn_\bs$ dynamics. A general discussion of fragile fragmentation and its repercussions for thermalization will be given in Sec.~\ref{sec:generalff}.

	\begin{figure*}
		\centering
		\includegraphics[width=.9\tw]{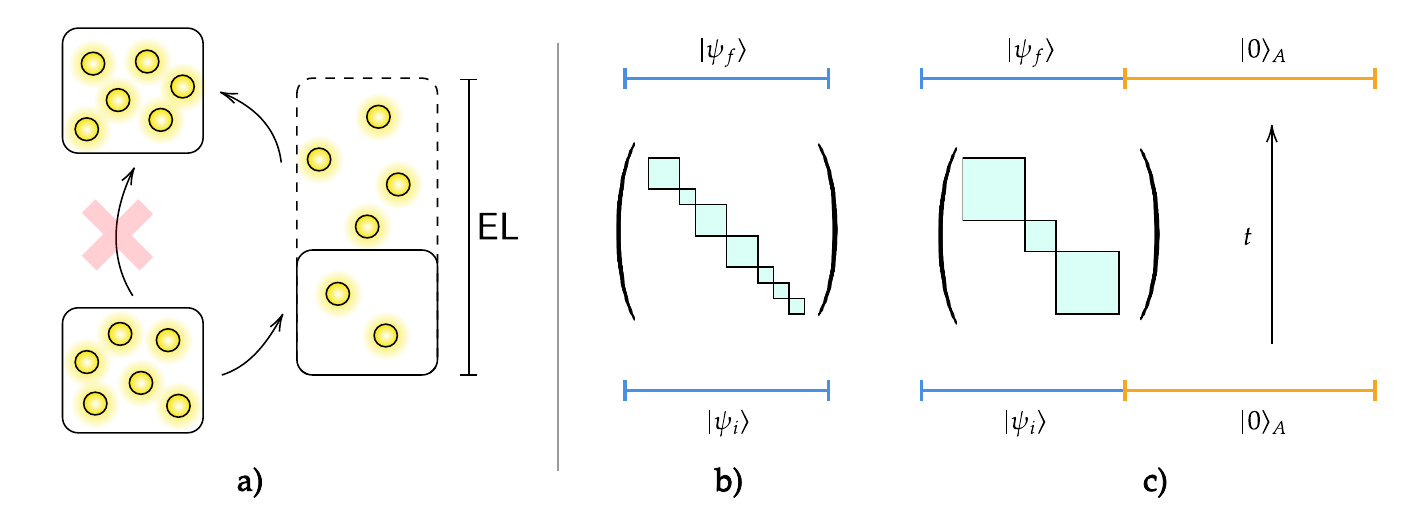} 
		\caption{\label{fig:ff_and_jamming} {\bf a)} A schematic of a jammed system, illustrated by densely-packed repulsively-interacting particles. The interactions and high density prevent the system from rearranging itself with the space available to it. If however the system size is increased for an intermediate period of time---allowing the particles to intermittently occupy a larger region of space before returning to their original volume---all different particle configurations can be reached. The amount by which the system size must be increased for ergodicity to be restored defines the expansion length $\el$. {\bf b)} Fragile fragmentation is the analogue of jamming in our dynamics. For a fixed system size, the dynamics is non-ergodic, but {\bf c)} (some degree of) ergodicity is restored when a reservoir of trivial ancillae $|0\ran_A$ is appended to the system. The analogue of returning to the original system size in the jamming example is played by projecting the ancillae onto their original state $\k0_A$ at the end of the time evolution.    } 
	\end{figure*}

	\subsection{Fragile fragmentation in $\bs$ dynamics}\label{sec:ffbs}
	
	\begin{figure*}
		\centering
		\includegraphics[width=\linewidth]{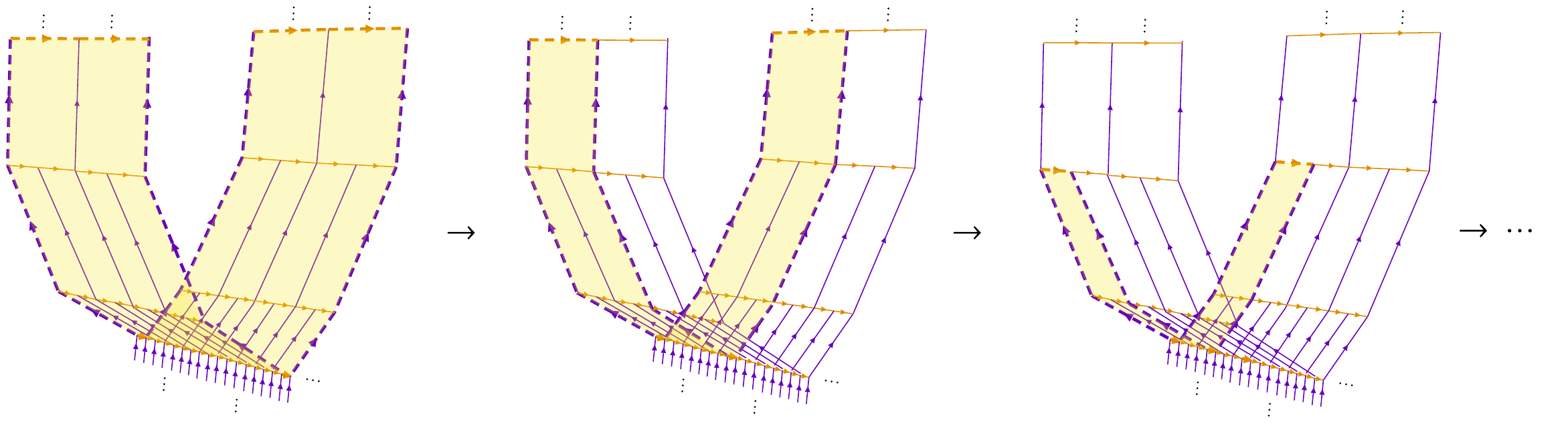}%
		\caption{\label{fig:word_homotopy} How a word $w_\mathrm{large}$ enclosing an exponentially large area can be deformed to the identity word without incurring an exponential amount of expansion. The deformation proceeds by making the loop defined by $w_\mathrm{large}$ narrower along the $\tta$ direction before shrinking the loop in the $\ttb$ direction.  }
	\end{figure*}
	
	We saw in Sec.~\ref{sec:bs} that the word $w_{\rm large}(n)$ encloses an area of $O(2^n)$ on the $\bs$ Cayley graph, leading to a word problem with exponentially large time complexity. We now address the space complexity of the word problem for the $\bs$ group. At first pass, it may seem that the spatial complexity is also exponentially large. Indeed, the naive homotopy mapping $w_\mathrm{large}(n)$ to the trivial word is to bring the two excursions $w_{\rm large}(n)$ makes along the $\ttb$ axis back ``down'' onto the $\tta$ axis. Doing this would cause $w_\mathrm{large}(n)$ to grow to a length of $\sim 2^n$ over the course of the homotopy. 
	
	However, it turns out that $w_{\rm large}(n)$ can be deformed in a way that does not require its size to significantly increase, via the process illustrated in Fig.~\ref{fig:word_homotopy}. As shown in the figure, instead of collapsing the excursions ``down'', we instead first make the loop ``skinnier'' by narrowing its extent along the $\tta$ axis before bringing the excursions ``down'' after they have become low-area enough. 
	
	It is then clear that the length of $w_\mathrm{large}(n)$ does not grow by too much---at least, not by more than a factor linear in $L$---during this homotopy. In App.~\ref{app:bs_details} we prove that at large $L$, 
	\be \label{elbs} \el(L) \sim (1+\a)L\ee 
	where $\a > 0$ is an $O(1)$ constant. Importantly, the fact that $\a>0$ means that 
	as $L$ approaches $4(n+1)$ from above, there will be an $L_* > 4(n+1)$ at which $|w_\mathrm{large}(n)| < L_*$ (so that $w_\mathrm{large}(n)$ can still be fit inside of the system), but $\el(w_\mathrm{large}(n)) > L_*$; in this regime, the density wave defined by $w_\mathrm{large}(n,L_*)$ cannot relax even at infinite times. 
	Therefore, there is a transition at a finite density of $\tte$s in the initial product state between a jammed regime (where homotopies are unable to contract) and an ergodic regime, leading to fragile fragmentation. Since the expansion length $\el(L)\propto L$, the severity of this jamming is comparable to that of conventional jammed systems in higher dimensions. 
	
	\begin{figure}
		\centering
		\includegraphics[scale=0.4]{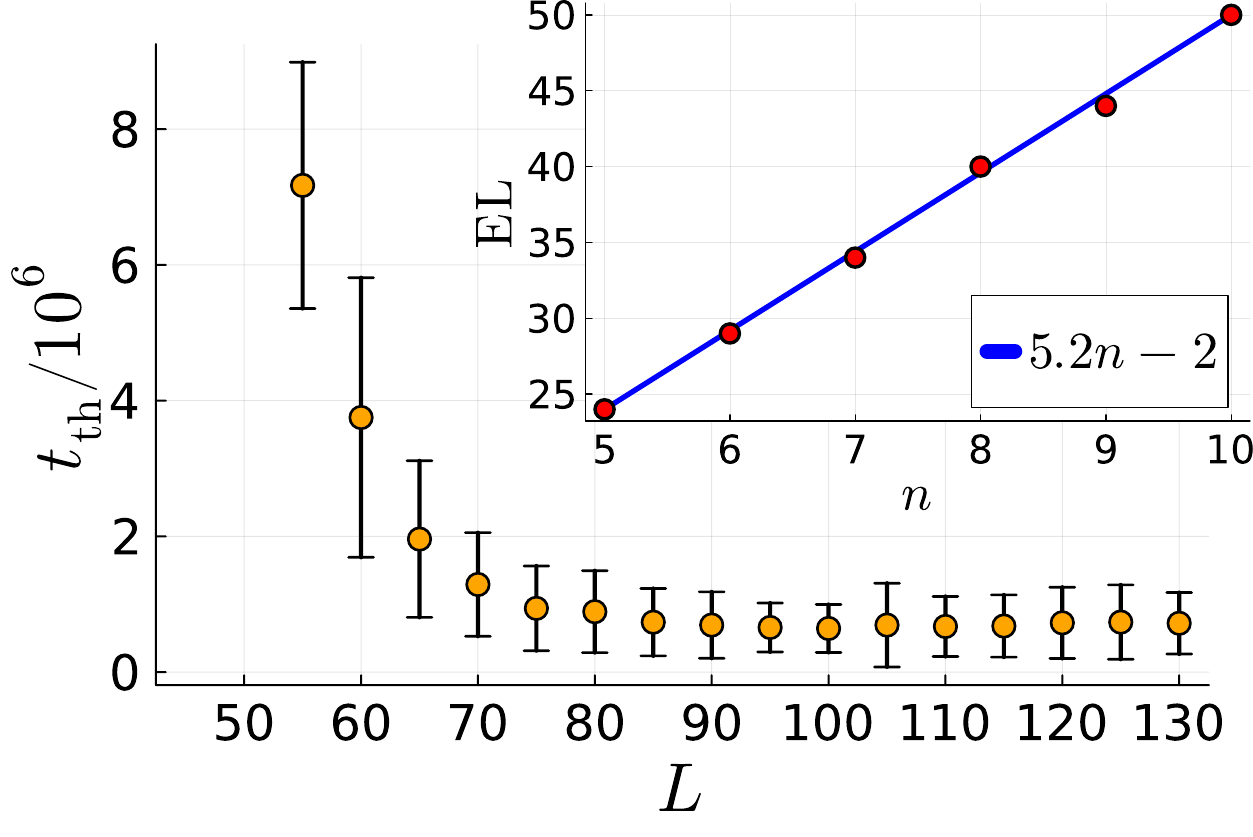}
		\caption{Thermalization time for stochastic $\bs$ dynamics as a function of $L$ for a system initialized in a product state $\k{w_\mathrm{large}(n,L)}$ with $n=10$ (so that the initial states for different $L$ differ only by the density of $\tte$s). The thermalization time diverges at some finite $L$, which defines the expansion length of the word $w_\mathrm{large}(n)$. Each thermalization time is obtained by averaging over several independent runs with the same parameters. For $L=50$, we observed a single thermalizing run, with thermalization time $t_{\rm th}\approx 5\cdot10^8$ (not shown in the plot). No thermalization was observed for $L<50$. (Inset) Expansion length of $w_\mathrm{large}(n)$ for different $n$. For the $\bs$ group, $\el$ is a linear function of $n$ (and therefore, of $L$).
		}
		\label{fig:bs_EL}
	\end{figure}
	
	We can identify $L_*$ numerically simply by decreasing $L$ until $|w_\mathrm{large}(n,L)\ran$ ceases to thermalize. The results of doing this are shown in Fig.~\ref{fig:bs_EL}. In this example, the length of the initial word (without identities) is $\abs{w_\mathrm{large}} = 4(n+1)=44$. We observe that thermalization time diverges as $L$ approaches $L_* \lesssim 50$ (for $L = 50$, we have observed only a single instance of thermalization at a very long time, $t_\mathrm{th} \approx 5\cdot 10^8$), confirming a nontrivial expansion length.

	\subsection{Exponential spatial complexity: Iterated Baumslag-Solitar}\label{sec:iteratedbs}
	
	We now present an example of group dynamics that exhibits fragile fragmentation with a zero density jamming transition: the iterated Baumslag-Solitar group \cite{clay2017office}.  This group is (loosely speaking) constructed by embedding a $\bs$ group inside of itself. We refer to it as $\bs^{(2)}$, and define it via the presentation
	\be \bs^{(2)} = \lan \tta,\ttb,\ttc  \, | \,  \tta\ttb = \ttb\tta^2 , \, \ttb\ttc = \ttc\ttb^2 \ran.\ee 
	$\dyn_{\bs^{(2)}}$ dynamics are thus most naturally realized in spin-3 chains with 3-local dynamics, whose local Hilbert space is obtained from that of the $\bs$ model by appending the two states $\{\k\ttc, \, \k{\ttc\inv}\}$. Note that like $\bs$, $\bs^{(2)}$ has a single conserved $U(1)$ charge given by the density of $\ttc$ generators, $n_\ttc \triangleq \sum_i n_{\ttc, i} = \sum_i (\proj\ttc_i - \proj{\ttc\inv}_i)$. 
	
	Several facts about $\bs^{(2)}$ (and related generalizations thereof) are proven in App.~\ref{app:iterated_bs}. The most important result is that the Dehn function of $\bs^{(2)}$ is a {\it super-exponential} function of $L$:
	\be \dehn(L) \sim 2^{2^{L}}.\ee 
	This can be intuited from the fact that the word $v(n) \equiv (\ttc^{-n} \ttb^{-1} \ttc^n) \tta (\ttc^{-n} \ttb \ttc^n)$ is equivalent to a doubly-exponentially long string of $\tta$s: 
	\bea  v(n) & = (\ttc^{-n} \ttb^{-1} \ttc^n) \tta (\ttc^{-n} \ttb \ttc^n) \\ & = \ttb^{-2^n} \tta \ttb^{2^n} \\ 
	& = \tta^{2^{2^n}},\eea 
	with the length of the RHS being doubly-exponentially larger than that of $|v(n)|$ (and where ``$=$'' in the above denotes equality as elements of $\bs^{(2)}$. By following the same strategy as in the construction of $w_{\rm large}(n)$, we can construct a word $w_{\rm huge}(n)\in K_{e, L}$ whose area grows doubly-exponentially with its length, namely 
	\be w_{\rm huge}(n) = \tta\inv v(n)\inv \tta v(n).\ee 
	Like with $\bs$, the slow dynamics of $\bs^{(2)}$ is manifest in the relaxation of the conserved charge $n_\ttc$, which from the scaling of $\dehn(L)$ we expect to relax with an effective momentum-dependent diffusion constant which is {\it doubly}-exponentially suppressed with $q$.

	The general bound \eqref{mtspacetimebound} implies that $\el(L) \gtrsim 2^L$. In App.~\ref{app:iterated_bs} we prove that this bound is in fact tight, so that  
	\be \el(L) \sim 2^L.\ee 
	Thus unlike $\bs$, there is no way to contract $w_{huge}(n)$ to the identity without it taking up an exponentially larger amount of space.\footnote{Essentially, the best one can do is to first eliminate the $\ttc$s and expand out $w_{huge}(n)$ into $w_{large}(2^n)$, at which point the word lies entirely in the $\bs$ subgroup generated by $\tta,\ttb$ and can be contracted without a further large expansion in length.} This means that the $n_\ttc$ density wave pattern present in the state $\k{w_{huge}(n)}\tp \k{\tte^m}$ will remain present even at {\it infinite} times unless $m \gtrsim 2^n$, i.e. unless the $n_\ttc$ density wave is exponentially dilute. 
	
	We now demonstrate this phenomenon numerically, using an extension of the analysis presented for $\bs$. A direct implementation of the bistochastic circuit \eqref{stochasticgate} is numerically rather expensive when investigating how $n_\ttc$ relaxes, due to the requirement of needing simulations to be run for times doubly exponential in characteristic scale of the $n_\ttc$ fluctuations under study. For this reason we will instead consider a {\it irreversible} modification of \eqref{stochasticgate}. We modify the dynamics so that $\ttc\ttc\inv$ and $\ttc\inv\ttc$ pairs can be annihilated, but not created. This means that our elementary stochastic gates $M_i$ contain terms like $\kb{\tte,\tte,\tte}{\ttc,\ttc\inv,\tte}_{i,i+1,i+2}$ but {\it not} the transpose thereof, with the quantity 
	\be n_{|\ttc|} \equiv \sum_i (\proj\ttc_i + \proj{\ttc\inv}_i)\ee  decreasing monotonically with time. 
	
	The merit of taking this approach is that the irreversible setting allows us to extract lower bounds on the relaxation time of $n_\ttc$ for the reversible setting. Our simulations are run by initializing the system in $|w_{\rm huge}(n,L)\ran$, a version of $w_{\rm huge}(n)$ padded with $L - |w_{\rm huge}(n)|$ $\tte$ characters at random locations, so that $|w_{\rm huge}(n,L)| = L$, and then tracking the time evolution of $n_{|\ttc|}$. The results are shown in Fig.~\ref{fig:itbs_EL} for different values of $n$, which are necessarily very small on account of the doubly-exponential growth of the Dehn function. In the main panel, we show the relaxation time of $|w_{\rm huge}(3,L)\ran$ as a function of $L$, defined by the time at which no $\ttc,\ttc\inv$ characters remain in the evolved word. The inset shows $\el(n)$, defined as the minimal value of $L$ for which relaxation (namely the reaching of a state containing no $\ttc,\ttc\inv$ characters) was observed to occur over 1000 runs of the dynamics. The extracted $\el(n)$ roughly conforms to our expectation of $\el(n) \sim 2^{\a n}$ for an $O(1)$ constant $\a$, although the long timescales required to observe thermalization mean that statistical errors are rather large, and with our current data we should not expect to obtain a perfectly exponential scaling. 
	
	\begin{figure}
		\centering
		\includegraphics[scale=0.4]{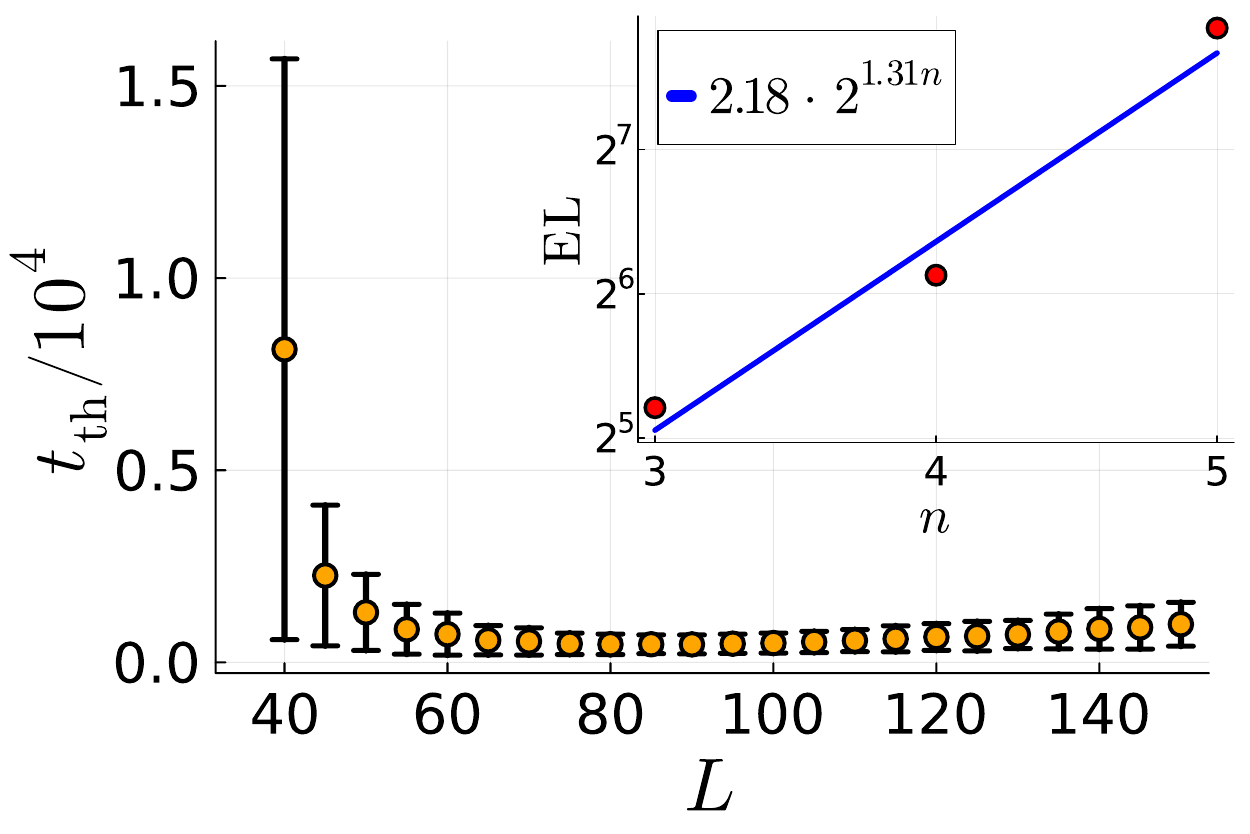}
		\caption{\label{fig:itbs_EL} Thermalization time as a function of $L$ for irreducible $\bs^{(2)}$ dynamics, where $\ttc \ttc^{-1}$, $\ttc^{-1} \ttc$ pairs can be annihilated but not created. The dynamics is initialized in the state $\ket{w_{huge}(n,L)}$ with $n=3$, and the system is considered to have thermalized when no $\ttc,\ttc\inv$ characters remain. {\it Inset:} Expansion length of $w_\mathrm{huge}(n)$ for different $n$, defined as the minimal system size for which thermalization was observed to occur over 1000 runs of the dynamics. An approximately exponential dependence on $n$ is observed.}
	\end{figure}

	\subsection{Fragile fragmentation: generalities} \label{sec:generalff}
	
	The $\bs^{(2)}$ example has a conserved density, so its failure to thermalize manifests itself as the freezing of a conserved density. In general, however, models exhibiting fragile fragmentation need not have conserved densities. Defining fragile fragmentation and finding reliable diagnostics for it in the general case are nontrivial tasks, which we address below. First, we provide a more precise definition of fragile fragmentation, to distinguish it from what we call {\it intrinsic fragmentation}. Second, we present a physical `decoupling' algorithm for detecting whether a system exhibits fragile fragmentation, given access to a large enough reservoir of ancillas. Third, we comment on the ways in which fragile fragmentation manifests itself in the dynamics of a thermalizing system. 
	
	\subsubsection{Defining fragile fragmentation}
	
	We begin by precisely defining the notions of intrinsic and fragile fragmentation in a general context (i.e., without reference to the word problem). For concreteness we specialize to quantum systems evolving under unitary dynamics, specified by a sequence of evolution operators $U_i(\mathcal{A}), i \in \mathbb{Z}_+$, acting on the system plus a collection of ancillas $\mca$, with each ancilla assumed for simplicity to have the same onsite Hilbert space as the system itself. We assume that $U_i$ form a uniform family of time evolution operators that can be defined for any number of ancillas.  Given any initial state $\ket{\psi}$ of the system, and a fixed reference state $\ket{0}_{\mathcal{A}}$ of the ancillas, we define an ensemble of states on the system as 
	\be
	\tau_{\mathcal{A}}(\psi) \equiv \{ \langle 0|_\mathcal{A} \left (U_i \ket{\psi} \otimes \ket{0}_\mathcal{A} \right) \}_{i = 1}^\infty.
	\ee
	In other words, $\tau_\mathcal{A}$ is the ensemble of pure states one gets by evolving the initial state $\ket{\psi} \otimes \ket{0}_\mathcal{A}$ for an arbitrary time, and postselecting on the ancillas being in the final state $\ket{0}_\mathcal{A}$. Note that in principle we could make $\tau_{\mathcal{A}}$ depend explicitly on the state of the ancillas.  However, for group dynamics the most natural choice for this state is $\k0_{\mathcal{A}} = \k\tte_{\mathcal{A}}$, and for semigroups an analogous state can be defined by augmenting the local Hilbert space with a character $\ket{\tte}$ which commutes with all characters.  We therefore content ourselves with studying fragmentation for this particular choice of ancilla state.  We define the Krylov sector of $\ket{\psi}$ extended to $\mathcal{A}$ as 
	\be K_\psi(\mathcal{A}) \equiv \mathrm{span}(\tau_\mathcal{A}(\psi)) \subset \mch_{\rm sys},\ee 
	where $\mch_{\rm sys}$ is the Hilbert space of the system (without ancillas). We furthermore define the {\it intrinsic Krylov sector} associated to $\k{\psi}$ as the limit
	\begin{equation}
		K_{\text{in}, \psi} = \lim_{|\mathcal{A}| \to \infty} K_{\psi}(\mathcal{A}).
	\end{equation}
	Under generic thermalizing Hamiltonian or unitary dynamics, the intrinsic Krylov sector of any $\ket{\psi}$ will be the entire Hilbert space, $K_{{\rm in},\psi} = \mch_{\rm sys}$. Intrinsic fragmentation occurs whenever this is not true, i.e. when there exist distinct initial states that do not mix under the dynamical rules even when the system is attached to an infinitely large bath (undergoing the same dynamics as the system). 
	
	When $|\mathcal{A}|$ is finite, each intrinsic Krylov sector may further shatter into many subsectors. This phenomenon (for $|\mathcal{A}|$ larger than the system) is what we have referred to above as {\it fragile fragmentation}. The expansion length associated with the dynamics is the minimal size of $\mathcal{A}$ below which additional subsectors form.
	
	\subsubsection{Probing fragile fragmentation}
	
	Our definition of fragile fragmentation above makes reference to postselection on the final state of the ancillas being $\ket{0}_\mathcal{A}$. The probability of postselection succeeding is clearly exponentially small in $|\mathcal{A}|$. We now present a more efficient algorithm for (i)~identifying whether a given system exhibits fragile fragmentation, and (ii)~constructing the subspace $K_\psi(\mathcal{A})$ associated with an initial state $\ket{\psi}$ given a maximum expansion length $L+|\mathcal{A}|$.  This procedure is more efficient than naively postselecting on the state of the ancillas in various regimes which we discuss below.
	
	The general algorithm proceeds as follows. We start with the state $\ket{\psi} \otimes \ket{0}_\mathcal{A}$, and evolve it under $\dyn$ acting on the system plus ancillas for some time $t_{\mathrm{th}}$. 
	After time $t_{\mathrm{th}}$, we repeat the following steps many times:
	\begin{enumerate} 
		\item Measure the last site of the system plus ancillas in the computational basis. 
		\item If the outcome is $\tte$, decouple this site from the rest of the system. Otherwise, leave the site coupled.
		\item Run the dynamics for a time $t_{\rm retherm}$ on the system plus remaining ancilla, and go to step 1. 
	\end{enumerate} 
	The iteration stops when all ancilla sites have been decoupled: we know this is always possible since the initial state was originally decoupled from the ancillas. On physical grounds we expect that the probability of getting outcome $0$ in step 1 is $O(1)$ at all times, but for our purposes it suffices for it to scale as  $1/{\rm poly}(L + |\mathcal{A}|)$. 
	
	We first discuss how this algorithm can be used to construct the subspace $K_\psi(\mathcal{A})$. To accomplish this, one sets $t_{\mathrm{th}}$ to be the maximum possible thermalization time for the system plus ancillas, i.e., $t_{\mathrm{th}} \sim \exp(L + |\mathcal{A}|)$. Any fragmentation that persists after $t_{\mathrm{th}}$ will persist to infinite time for the given spatial resources. One can take the re-thermalization time after a measurement, $t_{\rm retherm}$, to be much shorter (i.e., as a low-order polynomial in $L + |\mathcal{A}|$), as the measurement is a single-site perturbation to the equilibrated state, and it is not expected to take more than polynomial time to have a significant amplitude to be in $\k{\tte}$. 
	When the procedure terminates it yields a state $\ket{\psi'}$ that (by hypothesis) is in $K_\psi(\mathcal{A})$. After many runs, the ensemble of generated states spans $K_\psi(\mathcal{A})$.
	
	The procedure we described avoids the exponential overhead of postselection, but still incurs the exponential overhead of {\it mixing}. If we want to reconstruct a state with overlap on all states in $K_{\text{in}, \psi}$, this overhead cannot be avoided. Suppose, however, that we are not interested in full reconstruction of $K_{\text{in}, \psi}$, but just in the simpler task of showing that adding ancillas and removing them (as above) partially lifts the fragmentation of the original system. More generally, suppose that we have constructed $K_\psi(\mathcal{A}_0)$, and want to know if enlarging $\mathcal{A}_0$ to $\mathcal{A}_1$ enlarges the sector. We can run the initial equilibration step to a much shorter time than the full equilibration time. We stop when we have compressed back down to $\mathcal{A}_0$, and check if the resulting state is in $K_\psi(\mathcal{A}_0)$~\footnote{For word problems this can be done via a computational basis measurement; for systems with fragmentation in a non-product state basis, this measurement requires more care.}. Finding a single state that lies outside $K_\psi(\mathcal{A}_0)$ suffices to establish fragile fragmentation.  Thus, detecting fragile fragmentation can be accomplished without requiring the system to fully thermalize or requiring a large bath.
	
	Can this procedure be used to study other properties of the fragmentation?  One additional quantity that can be computed using this method are the geodesic lengths of group elements.  Recall that the geodesic length $|g|$ of an element $g\in G$ is the length of the shortest word which represents $g$, namely $|g| = \min\{ |w| \, : \, \vp(\k w) = g\}$.  To compute this, we repeat this sequential length reducing procedure until the system freezes.  Formally speaking, the system will freeze when $\mel{\texttt{e}}{\rho_{\text{end}}}{\texttt{e}} = 0$ for quantum dynamics (where $\r_{\rm end}$ is the reduced density matrix of the spin at the end of the system), or $p(\texttt{e}_{\text{end}}) = 0$ for classical dynamics, where $p(\cdot)$ denotes a marginal distribution for the last site of the chain. When this occurs, the system size has reached the minimum length needed to support a word in the Krylov sector, therefore yielding the geodesic length of the word.
	
	\subsubsection{Fragile fragmentation and thermalization} \label{ffandthm}
	
	We now discuss the unexpectedly subtle consequences of fragile fragmentation for the evolution of generic states under unitary dynamics that need not be time-independent or have any local conserved densities. To keep our discussion concrete, we will focus on $\dyn_G$ dynamics for some group $G$, although the diagnostics we will arrive at are much more general. As we will see, $G$ being a group (rather than just a semigroup) in fact makes fragile fragmentation particularly hard to detect locally; when $G$ is instead a semigroup, simpler diagnostics exist (see Sec.~\ref{sec:nongroup} for an example). 
	
	In a system exhibiting fragile fragmentation, a random word $w$ of length $L$ will contain many substrings $s$ that have expansion length greater than $L$; in particular, treating a particular substring $s$ as subsystem $A$ and $B = A^c$ as a bath of $\tte$'s, $\ket{s}_A \otimes \ket{\tte}_B$ exhibits fragile fragmentation so long as $|B| \ll \el(|s|)$. 
	In reality, the substring is nested in the system as $\ket{w} = \ket{w_L s w_R}$, but we still claim that the action of $\dyn_G$ can never map $\k w$ to $\k{w'} = \k{w_L s' w_R}$, where $s'$ is a word with expansion length asymptotically greater than $L$ and in a different fragment to $s$, but satisfies $\vp(s) = \vp(s')$. In particular, one might worry that the presence of the environment words $w_{L/R}$ can `catalyze' transitions of $s$, thereby sending $\k w$ to $\k {w'}$ despite $s$ and $s'$ living in different fragile sectors.
	
	A simple argument shows that such catalysis cannot parametrically change the expansion length of a word. Indeed, suppose that by contradiction catalysis can occur. We can append $w_L^{-1}$ to the left and $w_R^{-1}$ to the right, increasing the length of the system by less than $L$. %
	Then the sequence 
	\begin{align}
		\ket{\tte^{2|w_L|} s \tte^{2|w_R|}} &\leftrightarrow \ket{w_L^{-1} w_L s w_R w_R^{-1}} \nonumber\\
		&\leftrightarrow \ket{w_L^{-1} w_L s' w_R w_R^{-1}} \nonumber\\ &\leftrightarrow \ket{\tte^{2|w_L|} s'\tte^{2|w_R}}
	\end{align}
	is allowed by $\dyn_G$. Therefore the space complexity of the transition $s \leftrightarrow s'$ is at most $2L$. For groups with asymptotically superlinear expansion lengths this is a contradiction, and thus such a catalysis cannot occur. 
	
	To summarize, a random word contains large substrings that are frozen in some sense: if the initial state can be written as $\ket{w_L s w_R}$, time evolution under $\dyn_G$ will never produce $\ket{w_L s' w_R}$. An immediate consequence of this fact is that the time-evolved reduced density matrix for a region $A$ has $\langle s | \rho_A(t) | s' \rangle = 0$ at all times if $s,s'$ are in distinct fragile fragments. Furthermore, since both $s$ and $s'$ can be locally generated, through the transition 
	\begin{equation}
		\ket{\tte^{2|s|}} \leftrightarrow \ket {ss^{-1}},
	\end{equation}
	which only requires $2|s| \ll L$ of space, we in general expect $\langle s | \rho_A | s \rangle, \langle s' | \rho_A | s' \rangle$ to be both nonzero. 
	
	In conventional systems that exhibit the jamming transition, one can easily diagnose the jammed phase by computing local autocorrelation functions.  However, in contrast, fragile fragmentation is hard to detect in this way because the frozen substrings can slide around in the system and locally change their configuration (while remaining in the same fragile sector).  Thus, while one can write down a conserved quantity describing the frozen strings, such a quantity will generically be very nonlocal.  Nevertheless, the observation that the dynamics does not connect pairs of states like $s w_R$ and $s' w_R$ regardless of $w_R$ still allows one to construct a reasonable dynamical probe of fragile fragmentation. For any two words $w,w'$ in the same Krylov sector, consider the two-point correlation function 
	\begin{align} 
		C_{ww'}(t) &\equiv \mathrm{Tr}(W X_{ww'}(t)) \\
		&= 2\cdot \text{Re}\sum_{\alpha, \beta} \bra{w \alpha} \dyn_G^{\dagger}(t) \ketbra{w \beta}{w' \beta} \dyn_G(t) \ket{w \alpha} \nonumber
	\end{align} 
	where $W = \ket{w}\bra{w}$ and $X_{ww'}= \ket{w} \bra{w'} + \mathrm{h.c.}$
	are operators which can be fully supported in a subsystem of size $\max(|w|,|w'|)$. Suppose that $\el(w,w')$ is large, so that all derivations between $w$ and $w'$ require large spatial resources. Then, if $\dyn_G$ describes circuit dynamics, $C_{ww'}(t)$ is zero for small $t$ since if both $\bra{w \alpha} \dyn_G^{\dagger}(t) \ket{w \beta}$ and $\bra{w' \beta} \dyn_G(t) \ket{w \alpha}$ are nonzero, then one can transition from $\k{w \beta}$ to $\k{w' \beta}$ in time $2 t$.  Thus, the quantity $\dehn(w \beta, w' \beta)$ is expected to control when we expect this quantity to be nonzero\footnote{For Hamiltonian dynamics, this quantity is never strictly zero, but using similar arguments from Section~\ref{sec:therm_time_bounds} we expect this correlator to take a long time to reach its equilibrium value.}.  Furthermore, for systems where the spacetime bound is saturated (like iterated BS) or close to being saturated, this timescale is dictated by $\el(w,w')$ and can be at most $\sim \exp(\el(w,w'))$.

	Beyond this timescale, the system is able to undergo a large-scale rearrangement that connects $w$ and $w'$ (i.e. $w$ and $w'$ will no longer appear to live in disconnected fragile sectors). By measuring the onset timescale for $C_{ww'}(t)$ to become nonzero as a function of $|w|,|w'|$, one can diagnose the existence of and place bounds on large expansion lengths.  Alternatively, computing $C_{w,w'}(t)$ can be thought of in the following way: prepare an initial state $\rho_0$ where a subregion $R$ is in the pure state $\ket{w}$, time evolve this state to get $\rho$ and measure the expectation value of $X_{w,w'}$ in the reduced density matrix $\rho_R$.
	
	The above prescription suggests a heuristic way to determine the expansion length as follows (we leave various technical details of this proposal to future work).  Define $\Phi_R$ to be a dephasing channel acting on region $R$ of the system (decomposing the system to the form $ABR$ for convenience):
	\begin{equation}
		\Phi_R [\rho] \triangleq \sum_{\alpha_
			{A,B}, w_R, \beta_{A,B}} \rho_{\alpha_A w_R \alpha_B, \beta_A w_R \beta_B} \ketbra{\alpha_A w_R \alpha_B}{\beta_A w_R \beta_B}.
	\end{equation}
	Starting in the state $\rho_0$ described above, one can alternate between time evolving under $\dyn_G$ and applying $\Phi_R$ before measuring the expectation value of $X_{ww'}$.  Call $A$ the subsystem in which the state $\ket{w}$ is initially present.  If $\text{dist}(A, R) \gg \el(w,w')$, then repeatedly dephasing the system should not change the value of $C_{ww'}(t)$ by much.  However, if the dephasing channel is applied within a distance of $\el(w,w')$ from $A$, then we would expect a further suppression of $C_{ww'}(t)$ given that the dephasing eliminates many trajectories mapping $w$ to $w'$.  Finding the location of $R$ where one crosses over between these two behaviors would thus provide an estimate of $\el(w,w')$.
	
	The protocol discussed above is general but somewhat indirect. As we saw in Sec.~\ref{sec:iteratedbs}, in specific examples fragile fragmentation can have more direct and dramatic manifestations. In the next section we show that when the group property is violated, fragile fragmentation generally has easier-to-detect physical consequences: in these cases, there can sometimes be a transition where at small subsystem sizes reduced density matrices are generically full rank, while above a threshold size reduced density matrices have nontrivial kernels.

	\section{Semigroup examples}\label{sec:nongroup}
	
	In the explicit examples of $\dyn_G$ dynamics studied above, $G$ has been taken to have the structure of a group. The phenomena discussed so far are however not limited to models with a group structure; indeed all of the general results obtained in Secs.~\ref{sec:general_semigroup} and \ref{sec:therm_time_bounds} were are valid for any finitely presented semigroup. 
	%	emphasized in Sec.~\ref{sec:general_semigroup}, the most general setting for word problem is that of semigroups. 
	For semigroups, however, the geometric perspective
	adopted above in the discussion of $\dyn_\bs$ is less useful\footnote{To understand why, one may envision some notion of a Cayley 2-complex which labels elements of the semigroup.  However, since there is no concept of an identity element, loops on the graph do not carry the same meaning that they do for groups.  Furthermore, paths or loops on the Cayley 2-complex are no longer freely deformable because the semigroup relations are not invariant under cyclic conjugation, due to the lack of inverses.}. 
	In this section, we introduce two new semigroup models which do not admit a group structure but which nevertheless have word problems exhibiting large time and space complexity, which we establish using combinatorial rather than geometric arguments. The first example has large time and small space complexity, and shares similarity with the $\bs$ model. The second example has both large time and large space complexity, is qualitatively unique to non-group based dynamics, and leads to a more direct characterization of fragile fragmentation than that provided by the general criterion of Sec.~\ref{ffandthm}. 
	
	These models are inspired by the Motzkin spin chain and, more broadly, Motzkin dynamics (see Refs.~\cite{Bravyi, movassagh}).  For the readers' convenience, we briefly summarize Motzkin dynamics.  The local state space includes an identity character $|0\ran$ (which plays the role of $\k\tte$ in our group-based models) as well as left and right parentheses $\k(, \k)$. The dynamics is engineered so that the ``nestedness'' of the parenthesis remains preserved, where nestedness is defined by the number of left parenthesis located to the left of matching right parenthesis; for example, under the dynamics the word `$()$' may evolve to `$()()$', `$(())$' or `$0$', but not to `$)($', `$(($', or `$))$'. These rules can be summarized formally by defining the {\it Motzkin semigroup} as
	\be {\sf Motz} = {\sf semi}\lan 0, (, )\, | \, (0 = 0(, \, )0 = 0), \, () = 00 \ran.\ee 
	Alternatively, we may define a height field $h_i$ which keeps track of the level of nestedness at site $i$, via 
	\be h_i = \sum_{j<i} (\proj{(}_j - \proj{)}_j).\ee 
	The dynamics then preserves both $h_L$ (the net difference between the number of $($ and $)$ parenthesis) and $\min_i h_i$ (which measures the extent of the nestedness). 
	
	\subsection{Star-Motzkin model: slow thermalization} \label{sec:starmotzkin}
	
	Our first example has a local state space which we will label by $\{(, ), 0, \ast\}$. As usual, all of what follows can be applied to Hamiltonian, random unitary, or classical stochastic dynamics. 
	%	To keep the notation concise, we will abstractly define the dynamics in terms of the set of constraints the dynamics must obey. Standard Motzkin dynamics is captured by the relations:
	%	\begin{equation}
		%		(\, ) \lra 0\, 0\hspace{0.5cm}  ( \, 0 \lra 0 \, ( \hspace{0.5cm} ) \, 0 \lra 0\, )
		%	\end{equation}
	%	The allowed moves in the dynamics either create a pair of open/closed parentheses, or freely move a parenthesis past a `0' character.
	%	To slow down the hydrodynamics of this conserved charge, we include an
	The purpose of the extra character $\ast$ is to slow down the dynamics of the parenthesis; this is done by adding to the relations of ${\sf Motz}$ the relations 
	\begin{equation}\label{eq:starrule}
		( \ast \, 0 = \ast \ast (\hspace{0.5cm} 0 \, \ast ) =\, ) \ast \ast \hspace{0.5cm} 0 \, \ast =\ast \, 0.
	\end{equation}
	Thus when a parenthesis moves past a $\ast$ character (in a certain direction), the $\ast$ character is duplicated.
	The combination of these set of rules results in the dynamics which we will refer to as $\dyn_{\ast M}$.  One can readily see that $\dyn_{\ast M}$ exhibits Hilbert space fragmentation.  Indeed, if we ignore the $\ast$ character that was added, the Hamiltonian describes Motzkin dynamics, and thus already possesses fragmentation which cannot be described solely by a conserved parenthesis density.  The sectors of these dynamics are labelled by a sequence of closed parentheses followed by a sequence of open parentheses taking the form $)^m (^n$: this corresponds to the total parenthesis imbalance in the configuration.  When the $\ast$ character is added, a label for the Krylov sectors becomes
	\begin{equation}
		K_{\ell, m, n} = \, \, )^m \ast^{\ell} (^n
	\end{equation}
	where $\ell = 0,1,\cdots,O(L\cdot 2^{\max(m,n)})$.  It is clear that there are exponentially more sectors with the addition of the $\ast$ character, and the structure of fragmentation is thus richer.
	
	At some level $\dyn_{\ast M}$ resembles $\dyn_{\bs}$, as the $\ast$ duplicate every time they are moved past a parenthesis, in a way similar to the duplication of $\tta$s that occurs as they move past $\ttb$s in $\bs$.  However, there are two differences.  The first is that $\ast$ does not have a natural inverse.  The second is that the underlying Motzkin dynamics does not satisfy properties of a group: if the characters $($ and $)$ are identified with a generator and its inverse, then we necessarily must also allow the rule $() \leftrightarrow )($, which is absent in $\dyn_{\ast M}$.  Nevertheless, in App.~\ref{app:nongroup} we show that the word problems for these models exhibit the same scaling of spatial and temporal complexities as in $\dyn_\bs$: 
	\begin{enumerate}
		\item Within the sector $K_{q,0,0}$, the expansion length is linear, $\el(w,w') = O(L)$.
		\item Within the sector $K_{q,0,0}$, there exists words $w, w'$ such that $\dehn(w,w') = O(q)$.
	\end{enumerate}
	Since $q$ can grow to be exponentially large in $L$, this last fact implies slow dynamics.  In fact, one way to study this slow dynamics is to observe that atypical configurations of the $U(1)$ charge corresponding to the parentheses takes a very long time to thermalize.  Regarding the second result, we also provide a crisper characterization for the circumstances under which it takes a long time to transition between two words (see App.~\ref{app:nongroup}).
	
	\subsection{Chiral star-Motzkin model: fragile fragmentation}\label{sec:chstarmotzkin}
	
	We now present an example where the expansion length is exponentially large, implying fragile fragmentation.  In this example, we simply replace the $\ast$ character with a `chiral' version of the character, which we denote as $\triangleright$.  The rules for these new characters are similar to that of $\ast$, except that the rules are only activated when a $\triangleright$ character is adjacent to $)$.  More specifically, we replace the rules in~\eqref{eq:starrule} with
	\begin{equation}
		0 \, \triangleright \, ) = \,)\triangleright  \triangleright \hspace{0.75cm} 0 \, \triangleright = \triangleright \, 0
	\end{equation}
	Note that one could also add another chiral character $\triangleleft$ which only interacts with $)$ and commutes with $\triangleright$, but the necessary physics is already illustrated for $\triangleright$.  
	
	We first discuss the intrinsic Krylov sectors of the dynamics.  In particular, if we append the system is a large bath of $0$'s, then one can show that any configuration can be reduced to the canonical form:
	\begin{equation}
		R_{\vec{k}, \ell, m, n} = \,\,\, )^m \triangleright^{\ell} ( \, \triangleright^{k_1} \, ( \, \triangleright^{k_2}\, ( \, \triangleright^{k_3} \cdots ( \, \triangleright^{k_n}.
	\end{equation}
	Note that we can have a large number of $\triangleright$ characters locked between adjacent `(' because $\triangleright$ characters cannot tunnel past `(' characters.  As a result, we will label the Krylov sectors by the four indices $(\vec{k}, \ell, m, n)$ where $\dim \vec{k} = n$.
	
	Chirality of the $\triangleright$ character plays a crucial role in the large space complexity.  Define a {\it nest} as a collection of parentheses in the form $(((\cdots)))$.  In particular, $\triangleright$ characters embedded in a nest can exit the nest but cannot enter an adjacent one, due to the chirality constraint.  As a result, the transition from $\triangleright^k (((\cdots)))$ to $(((\cdots)))\triangleright^k$ (which subsequently allows $\triangleright^k$ to enter the nest) is not possible unless $(((\cdots)))$ is collapsed.  Collapsing $(((\cdots)))$ can however require exponentially large space if the nest $(((\cdots)))$ contains a large number of $\triangleright$ characters.  Thus, intuitively, an exponentially large bath is needed to unfreeze the system.
	
	In App.~\ref{app:nongroup}, we provide a more rigorous argument for when the expansion length connecting two words $w$ and $w'$ can be exponentially large, and also discuss an interesting consequence of the  fragile fragmentation in this model, which does not have an analog in group dynamics.  In particular, we argue that under unitary time evolution $\rho(t) = e^{-iHt} \rho(0) e^{iHt}$ where $\rho(0)$ is a product state, the subsystem density matrix $\rho_A(t) = \Tr_{A^c}(\rho(t))$ exhibit a {\it transition in its rank} as $|A|$ is increased.  When $|A| \ll \log L$ then $\rho_A(t)$ is of full rank and when $|A| \gg \log L$, then $\rho_A(t)$ is no longer of full rank.  Probing this property in a physically reasonable way is further discussed in App.~\ref{app:nongroup}
	
	\section{Generalizations to two dimensions: group loop models} \label{sec:loop_models}
	
	The discussion thus far has been restricted to 1D models of group dynamics.
	It is natural to wonder whether or not higher-dimensional models with similar behavior can be constructed, especially since the aforementioned phenomena are more prevalent in higher dimensions. 
	
	In this section, we discuss one 2D generalization of our group-based models that in some sense is the most faithful way of embedding the 1D group constraint in a two dimensional system, and which leads to a qualitatively new way of producing glassy dynamics and jamming in 2D. This proceeds by fixing a group $G$\footnote{As we will see shortly, our 2D models are simplest to write down when $G$ is a group rather than a semigroup, although this is not a fundamental restriction.} and considering loop models that possess one flavor of loop for each generator of $G$. Along any one-dimensional reference loop, one can associate a group element corresponding to the product of all of the generators corresponding to loops that the reference loop intersects. The dynamics is engineered so that this group element remains invariant under the dynamics. 
	
	This class of models can be viewed as a broad generalization of the construction in Refs.~\cite{stephen2022ergodicity, stahl2023topologically} (which studied dynamics) and Ref.~\cite{Balasubramanian2023} (which studied ground state properties), and we expect similar robustness of the Hilbert space fragmentation in these models. For brevity's sake we will discuss this construction at a high level---e.g. we will largely use continuum language in order to avoid the notational burden incurred by an explicit lattice description---and will defer a more comprehensive analysis to future work.

	Given a discrete group $G$ and a presentation thereof, the degrees of freedom in our 2D model are associated with directed loops labeled by generators of $G$. In the following, we describe how to place constraints on the dynamics of these loops to produce phenomenology similar to that present in our 1D models. 
	
	We start by observing that each directed reference path $\g$ through space (microscopically, along the lattice) can be associated with a group word $w(\g)$. This word is determined by the ordered labels of the loops that $\g$ intersects. Specifically, when proceeding along $\g$, each time $\g$ intersects a loop labeled by the generator $\ttg_i$, $w(\g)$ is i) multiplied by $\ttg_i$ if the local tangent vector $\vec{r}$ of $\g$ and the local tangent vector $\vec{\ell}$ of the loop can have a cross product $\vec{r} \times \vec{\ell}$ is parallel to $\uvz$; and ii) is multiplied by $\ttg_i\inv$ if $\vec r \times \vec\ell$ is anti-parallel to $\uvz$. 
	
	This observation means that isolated closed loops can be regarded as implementing trivial relations in the words associated to the paths which pass through them, a fact we illustrate pictorially as
	\be \igpfoc{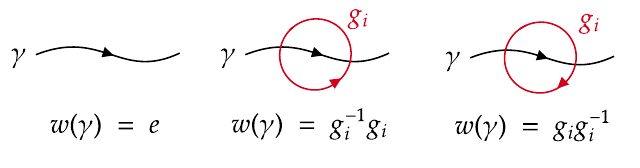} \ee 
	Therefore, we associate processes nucleating a loop with a free expansion ($\tte\tte \ra \ttg \ttg\inv$) and annihilating a loop with a free reduction ($\ttg\ttg\inv \ra \tte\tte$).  With more loops, a more general situation might look like the following: 
	\be \igptc{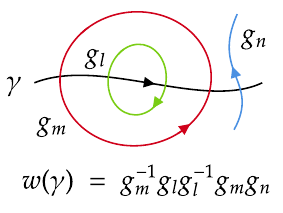}\ee 
	
	We now discuss how to implement relations of the group in terms of the loops.  Suppose the group presentation is indicated by 
	\begin{equation}
		G = \langle \ttg_1, \ttg_2, \cdots, \ttg_n | r_1, r_2, \cdots r_m\rangle,
	\end{equation}
	where each of the $r_i$ are words to be identified with the identity in $G$, and $|r_i| \leq 3$ for all $i$; this restriction on the length of the relations follows from the fact that any group (but {\it not} any semigroup) exhibits a finite presentation satisfying $|r_i| \leq 3$ (see App.~\ref{app:group_props} for the proof).  Writing $r_i = \ttg_m \ttg_n \ttg_{\ell}$, this corresponds to an object which we refer to as a {\it net}:
	\begin{equation}
		r_i = \vcenter{\hbox{\includegraphics[scale = 0.6]{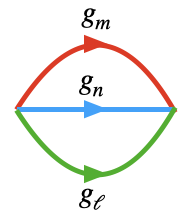}}}
	\end{equation}
	As another example, if $r_i = \ttg_m \ttg_n^{-1} \ttg_{\ell}$ (i.e. one of the generators is replaced with its inverse), then the net looks like
	\begin{equation}
		r_i = \vcenter{\hbox{\includegraphics[scale = 0.6]{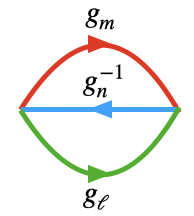}}}
	\end{equation}
	and so on.  Any loop configuration corresponding to one of the above nets can be created or destroyed without changing $\vp(w(\g))$ (the group element associated to $w(\g)$), since for any curve $\g$ that cuts across the net, creating or destroying the net simply corresponds to applying the appropriate relation $r_i$ at some point in the word $w(\gamma)$. 
	
	The dynamics we consider are generic dynamical processes which preserve $\vp(w(\g))$ for all closed curves $\g$, which may be viewed as an unusual type of gauge constraint. Thus the dynamics will include processes which nucleate and destroy nets associated to each $r_i$, and will also include processes where lines are moved, stretched, contracted, and where intersections of lines are moved. It will also include processes which attach a loop with an intersection point of other loops, like so:
	
	\begin{equation}
		\vcenter{\hbox{\includegraphics[scale = 0.55]{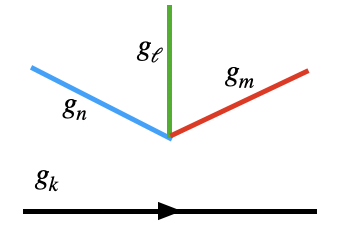}}} \longleftrightarrow \vcenter{\hbox{\includegraphics[scale = 0.55]{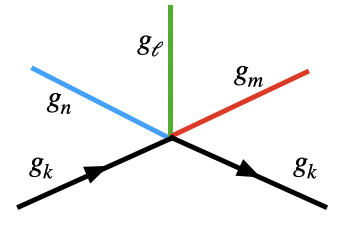}}}
	\end{equation}
	where an analogous deformation occurs for $\ttg_k$ replaced with $\ttg_k^{-1}$ (in which case the arrow is reversed).  
	To avoid problems on the lattice where an unbounded number of joins can occur (requiring unphysically large degrees), we only allow for a join if the degree of the intersection point is below a certain threshold, set by $\max_i |r_i|$.
	
	The dynamical processes described above  
	are sufficient to simulate the full group dynamics.  For example, suppose we want to determine whether the processes we wrote down suffice to simulate $\ttg_m \ttg_n = \ttg_{\ell}^{-1}$ (assuming $r_i = \ttg_m \ttg_n \ttg_{l} = \tte$ is a relation the group).  We can show that this is indeed the case by applying the following sequence of relations:
	\begin{widetext}
		\begin{equation}
			\vcenter{\hbox{\includegraphics[scale = 0.6]{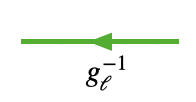}}} \to \vcenter{\hbox{\includegraphics[scale = 0.6]{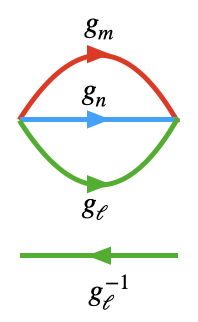}}} \to \vcenter{\hbox{\includegraphics[scale = 0.6]{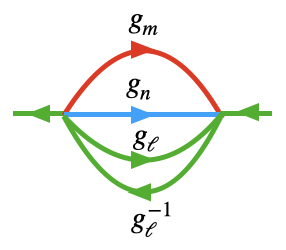}}} \to \vcenter{\hbox{\includegraphics[scale = 0.6]{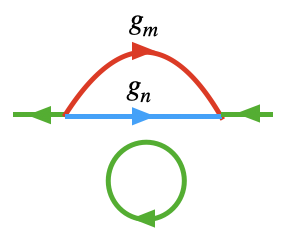}}} \to 
			\vcenter{\hbox{\includegraphics[scale = 0.6]{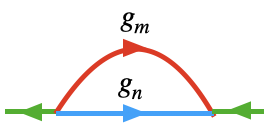}}}
		\end{equation}
	\end{widetext}
	where the first relation creates a net, the second relation corresponds to two joins, the third relation corresponds to two un-joins, and the last relation corresponds to a free reduction.  
	A similar derivation shows that cyclic conjugates of relations (such as $\ttg_{l}^{-1} r_i \ttg_{l} = \ttg_{l}^{-1} \ttg_m \ttg_n = \tte$) can similarly be simulated by the rules we have already discussed.
	
	To summarize, our 2D dynamics contains the following processes: 
	\begin{enumerate}
		\item Processes which deform loops in ways which do not create or destroy loop crossings,
		\item 
		Processes which nucleate and annihilate closed loops (performing free reductions and expansions),
		\item Processes which join a free loop with an intersection of loops\footnote{For dynamics suitably defined on the lattice, processes 1 and 3 are governed by essentially the same process.}, 
		\item For each relation $r_i$, a process which nucleates or annihilates an $r_i$-net. 
	\end{enumerate}
	
	The dynamical processes above were formulated for the case where $G$ is a group, but it is also possible to generalize to the case when $G$ is merely a semigroup, producing models reminiscent of the 2D generalizations of the Motzkin chain studied in Ref.~\cite{Balasubramanian2023}.
	Obtaining a more systematic understanding of the temporal and spatial complexity of the dynamics of these models would constitute an interesting avenue for future work.
	
	One final observation we make is that there is a simple way to map group loop models to what we call {\it tile models}.  Observe that a configuration of loops or nets splits the plane into a set of disjoint tiles (two adjacent tiles are joined by an edge which corresponds to a generator of the group).  Then, it turns out to be possible to label tiles with an element $g \in G$ such that if two tiles have labels $g$ and $h$ and share an edge corresponding to generator $\mathtt{k}$, then $g \mathtt{k} = h$.  For example, if $G = \mathbb{Z}$, each tile will be labelled by an integer and two adjacent tiles have labels which differ by $1$ (which is the generator of $\mathbb{Z}$).  This corresponds to a mapping to a height model, of which the tile model is a generalization of in the case of arbitrary $G$.  This mapping may play a role in understanding the nature of fragmentation in these models, which we leave to future work.

	\section{Discussion}\label{sec:disc}
	
	In this work, we have constructed a number of natural dynamical systems---with local few-body interactions---in which relaxation places anomalously expensive demands on a system's temporal and/or spatial resources. When the models have local conserved densities, the hydrodynamics of these densities is anomalous or frozen; even when conserved densities are absent, we have presented diagnostics for nonergodic behavior. 
	
	Our examples were all constructed in the context of models with intrinsic Hilbert space fragmentation. A natural question is whether the intrinsic fragmentation is essential to their physics.
	In our framework, dynamics without fragmentation is generated by finite presentations of the trivial group, which cannot have a superlinear Dehn function (see App.~\ref{app:primer}).
	This of course does not mean that the dynamics of models without fragmentation cannot be slow, but it does mean that any mechanism for slow thermalization must originate from something other than the Dehn function.

	Our results lend themselves to several natural extensions. Most naturally, the anomalous hydrodynamic relaxation we saw in the $\bs$ model can be extended to other groups with presentations that manifest a conservation law. Whether these groups give rise to new classes of hydrodynamic relaxation is an interesting question for future work (a family of such examples will be presented in~\cite{BLM2024}). Another interesting direction is to investigate the {\it ground states} of Hamiltonians that implement group dynamics. A natural class of frustration-free Hamiltonians can be read off from the transfer matrices of bistochastic Markov processes~\cite{RK}; their ground states are equal-weight superpositions of all the configurations in a sector, and their spectral gaps can be bounded by the Markov-chain gap. The tools developed here may be useful for undertaking a more detailed study of properties of these states.
	
	The family of models we considered is restricted in the sense that the dynamical constraints can be expressed in the computational basis, so that every computational-basis product state is in a definite dynamical sector of Hilbert space. 
	More generally, one can consider constraints 
	associated with a commuting set of projectors with entangled eigenstates. 
	In one-dimensional spin chains, such commuting projectors can be deformed into unentangled projectors by a short-depth unitary circuit. However, this process just corresponds to conjugating the Hamiltonians/unitaries we have explored with short-depth unitary circuits, and yields nothing qualitatively new. Getting something new in one dimension thus requires defining constraints using {\it non}-commuting projectors---which occurs in the (quantum-fragmented) Temperley-Lieb model~\cite{batchelor1991temperley,aufgebauer2010quantum,motrunich}---and hence pursuing this direction requires systematic characterization of such constraints.
	In higher dimensions however, there are sets of commuting projectors (like those of the toric code) whose simultaneous eigenstates are inherently long-range entangled. An interesting task for future work would be to extend our group-theoretic dynamical systems to these models, and explore the resulting entanglement dynamics. 
	
	Finally, it would be interesting to explore the stability of our results with respect to weak violations of the constraints. One way to do this is by breaking the constraints in an isolated region of space, or by bringing a thermal bath in contact with the system at its boundary. In the case of group dynamics, doing so destroys fragmentation, making the dynamics fully ergodic. It furthermore leads to all states being connected by at most $O(L^2)$ steps of the resulting dynamics, since the $O(L)$ elements of a given product state's geodesic word need to be moved at most an $O(L)$ distance to the constraint-free region, at which point they can be changed into the elements of any other geodesic word. It is however still possible for the dynamics in this case to have long (even exponentially long in $L$) thermalization times, due to bottlenecks in Fock space that arise from the finite spatial extent of the constraint-free region. A specific example of this is presented in \cite{hsf_impurities}.

	\section{Acknowledgements}
	
	S.B. and E.L. thank Soonwon Choi and Tim Riley for useful discussions.  S.B. thanks Aram Harrow for helpful discussions on the cutoff phenomenon.  A.K. is grateful to Claudio Chamon and Andrei Ruckenstein for related discussions. S.B. was supported by the National Science Foundation Graduate Research Fellowship under Grant No.~1745302. E.L. was supported by a Miller Research Fellowship. A.K. was supported in part by DOE Grant DE-FG02-06ER46316 and the
	Quantum Convergence Focused Research Program, funded by the Rafik B.
	Hariri Institute at Boston University. S.B., S.G., and E.L. performed some of this work at the Boulder Summer School, supported by NSF DMR-1560837.  This research was also supported in part by the National Science Foundation under Grant No. NSF PHY-1748958 and NSF PHY-2309135, the Heising-Simons Foundation, and the Simons Foundation (216179, LB). 

	\bibliography{refs}
	
	\appendix
	
	\section{Semigroup presentations} \label{app:group_props}
	
	In this section we review some basic notions about discrete semigroup presentations and prove a few small results mentioned in the main text. 
	
	Formally, a discrete semigroup $G$ is a set equipped with a binary associative operation; when $G$ is a group it additionally has a distinguished element that acts as the identity, and each member of the set has a corresponding inverse. It is common to discuss a semigroup $G$ in terms of a specific set of generators $S$ and relations $R$ between them, writing 
	\be G = {\sf semi}\lan S \,| \,R \ran \ee 
	to signify this relationship (in the main text, when $G$ is a group we will write simply $G = \lan S \, | \, R\ran$ and omit inverse generators and the identity from $S$, as well as trivial relations involving the identity from $R$). For example, one might think of the group $\zz$ as being defined by a single generator $S = \{\ttx\}$ which obeys no nontrivial relations. This is however too narrow of a viewpoint, since it is possible for different choices of $S$ and $R$ to produce the same semigroup. As an example, consider the groups 
	\bea G_1 & = \lan \ttx  \, | \, \ran \\ 
	G_2 & = \lan  \ttx , \tty \, | \, \ttx^m = \tty^n, \, \ttx\tty = \tty \ttx \ran \eea 
	where $m,n$ are relatively prime. 
	These two groups are isomorphic, $G_1\cong G_2 \cong \zz$, with the isomorphism associating an element $\ttx^a\tty^b$ with the integer $am+bn$; note that this is true despite the fact that $G_1$ and $G_2$ have a different number of generators and relations. 
	
	A given semigroup in general admits an infinite number of different presentations,
	but below we will prove that the group-theoretic functions defined in the main text---the Dehn function, expansion length, and so on---have asymptotic scaling behaviors that are presentation independent.
	
	We can exploit this fact to choose presentations satisfying some particular desired property. For example, we may be concerned with choosing a model of dynamics where the Hamiltonian or unitary gates are as local as possible. Since the locality of the dynamics is limited by the maximum size of the relations in $R$, we would thus like to minimize the size of the relations. To this end we have the following proposition: 
	
	\begin{proposition}
		Every finitely generated group has a presentation $\lan S | R\ran$ where all relations $r_i \in R$ have length $|r_i| \leq 3$.
	\end{proposition}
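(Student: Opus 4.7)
The plan is to begin from any finite presentation $\lan S \mid R \ran$ of the group and iteratively shorten overlong relations via Tietze transformations until every relation in $R$ has length at most $3$. Concretely, I would pick any $r \in R$ of length $n \geq 4$ and write $r = s_1 s_2 \cdots s_n$. Then I would adjoin to $S$ a fresh symbol $t$ and replace $r$ by the pair of relations
\be
t\inv s_1 s_2 \qquad \text{(length } 3 \text{)}, \qquad t\, s_3 \cdots s_n \qquad \text{(length } n-1 \text{)}.
\ee
This is the composition of two elementary Tietze moves---introducing a new generator together with its defining relation $t = s_1 s_2$, and then substituting $t$ for the subword $s_1 s_2$ inside $r$---so the modified presentation presents the same group.

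To show the procedure terminates, I would introduce the excess functional
\be
\mce(R) \triangleq \sum_{r \in R} \max(0,\, |r|-3),
\ee
and check that each move strictly decreases $\mce$ by one: the replaced relation contributed $n-3$, while the two new relations contribute $0 + (n-4)$. Since $\mce(R)$ is a finite non-negative integer, iteration will terminate in at most $\mce(R)$ steps, at which point every relation has length at most $3$. Finiteness of $S$ is preserved throughout, since each step adds exactly one new generator.

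The main---and essentially the only---obstacle will be to verify rigorously that the proposed Tietze move preserves the group. This is standard: in the modified presentation the relation $t\inv s_1 s_2$ forces $t = s_1 s_2$ as a group element, so $t$ is redundant, and substituting $s_1 s_2$ back for $t$ in the second new relation recovers $r$ exactly. All remaining bookkeeping---finiteness of $S$ and $R$ after each step, and the bounded number of iterations---follows immediately from the construction.
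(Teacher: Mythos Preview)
Your proposal is correct. The iterated Tietze-transformation argument is a standard and clean way to prove this, and your excess functional makes termination transparent.

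The paper argues the same result geometrically rather than algebraically: it takes the Cayley 2-complex of a finite presentation and \emph{subdivides} each 2-cell into triangles, observing that the resulting 2-complex is the Cayley 2-complex of a new presentation in which every relation has length $\leq 3$. The two approaches are really the same move in different clothing---triangulating the $n$-gon attached along a length-$n$ relation introduces $n-3$ new interior edges (your new generators $t$) and $n-2$ triangles (your length-3 relations), which is exactly what your iteration produces. Your version has the advantage of being self-contained and making the finiteness bookkeeping explicit via $\mce(R)$; the paper's version is terser and ties more directly into the Cayley-complex language used elsewhere, but leaves the reader to unpack what ``subdivision'' means at the level of generators and relations. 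Either is fine.

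One small remark: both your argument and the paper's tacitly assume the group is finitely \emph{presented}, not merely finitely generated as the proposition is stated. If $R$ were infinite your excess functional would not be finite, and the subdivision would introduce infinitely many new generators. In the context of the paper this is harmless---every group under consideration is finitely presented---but you might note the assumption explicitly.
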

	\begin{proof}
		Consider the Cayley 2-complex $\cg_G$ of a finitely presentable group $G$ (see Sec.~\ref{sec:bs} for a brief discussion of its definition) obtained from a finite presentation $P$. While the exact structure of $\cg_G$ depends on $P$, $\cg_G$ can always be subdivided to obtain a simplicial complex where each 2-cell has 3 edges. Since each 2-cell in the complex corresponds to a relation, $|r_i| \leq 3$ for all $r_i \in R$ in the subdivided complex, thereby defining a presentation $P'$ whose relations all have length $\leq 3$. Since $P$ is finite this subdivision is completed after only a finite number of steps, and $P'$ is consequently also finite. 
	\end{proof}
	
	Note that while the Cayley 2-complex of a {\it semi}group can also be subdivided, the lack of translation invariance in a semigroup's Cayley complex means that the resulting subdivision may yield a presentation with infinitely many generators (an illustrative example is to compare the semigroup $\nn \times \nn$ with the group $\zz\times \zz$). 
	
	As mentioned above, the group theoretic properties we are interested in from the point of view of group dynamics are largely insensitive to the choice of presentation. For example, recall the growth function $N_K(L) \triangleq |\{ g,\, |g| \leq L\}|$ defined in \eqref{groupgrowth} of the main text, which measures the number of dynamical sectors as a function of system size. The scaling of $N_K(L)$ with $L$ is independent of the choice of presentation, allowing us to meaningfully talk about the growth function of a semigroup, rather than of a presentation:
	\bpro \label{prop:lambdaindep}
	Let $N_{K,P}$ denote the growth function for a particular presentation $P = {\sf semi}\lan S | R\ran$ of $G$. Then 
	\be N_{K,P} \sim N_{K,P'} \ee 
	for all finite presentations $P,P'$ of $G$. 
	\epro 
	\bproof
	Let $P = {\sf semi}\lan a_1, \dots, a_{|S|}  \,|\, R\ran$ and $P' ={\sf semi}\lan a_1',\dots,a'_{|S'|} \, | \, R'\ran$. Then since both $P,P'$ present $G$, each $a_i'$ can be expressed as a product of a finite number of $a_i$. Let $n_{PP'}$ denote the maximal number of generators of $P$ that appear when writing the $a_i'$ in terms of these generators. Let also $|g|_P$ denote the geodesic distance of $g\in G$ with respect to the presentation $P$. Then $|g|_{P'} \leq n_{PP'} |g|_P$. Thus 
	\be N_{K,S'}(L) \leq N_{K,S}(n_{PP'}L).\ee 
	We may also perform a similar rewriting of generators of $P$ in terms of those of $P'$.  After running the same argument, we find that there exist $O(1)$ constants $n_{PP'}, n_{P'P}$ such that 
	\be N_{K,{P'}}(L/n_{P'P}) \leq N_{K,P}(L) \leq N_{K,P'}(n_{PP'}L),\ee 
	and hence $N_{K,P}\sim N_{K,{P'}}$. 
	\eproof 
	
	Similar reasoning can be applied to show that the Dehn function and expansion length (see Sec.~\ref{sec:general_semigroup} or the following appendix for definitions) of a semigroup are presentation-independent: 
	\bpro\label{prop:dpresenindep}
	Let $\dehn_P(L)$, $\el_P(L)$ be the Dehn function and expansion length of a semigroup with a finite presentation $P = \lan S | R\ran$. Then if $P,P'$ are any two such finite presentations, 
	\bea \dehn_P(L) & \sim \dehn_{P'}(L) \\ 
	\el_P(L) & \sim \el_{P'}(L). \eea 
	\epro 
	
	\ms 
	
	\begin{comment}
		As mentioned in Sec. \ref{sec:nongroup} of the main text, more general types of non-group dynamics can be defined by monoid or semigroup constraints. Given any group $G$, we may always obtain a semigroup from $G$ by removing the inverses of each element of $G$'s generating set. This is possible in general relies on the fact that we can always choose a presentation $\lan S | R\ran$ of $G$ for which no inverses appear in any of the relations $R$: 
		
		\begin{proposition}
			Consider a finitely presented group $G$ with generating set $S = \{a_1,\dots,a_n\}$ and relations $r_1,\dots,r_m$. Then a finite {\it semigroup presentation} of $G$ exists, and can always be taken to have $n' \leq n+1$ generators. 
		\end{proposition}
		\begin{proof}
			The proof proceeds by explicit construction. Define the generating set $S' \equiv S \cup a_{n+1}$, and consider a set of relations $R' = \{r_1',\dots,r_m',r_{m+1}'\}.$ Let 
			\be r_{m+1}' \equiv a_{n+1} a_n \cdots a_1,\ee 
			so that $a_{n+1} = a_1\inv \cdots a_n\inv$. We see that we may thus write any of the $a_{i\leq n}\inv$ in terms of $a_{n+1}$ and the $a_{i\leq n}$ (and {\it not} their inverses). Let $r_{i\leq n}'$ be obtained from $r_i$ by replacing all occurrences of an $a_i\inv$ with a product of $a_{n+1}$ and an appropriate number of the $a_{i\leq n}$. Then $\lan S' \, | \, R'\ran$ is a semigroup presentation for $G$ with $n+1$ generators. 
		\end{proof}
		
	\end{comment}
	
	\section{A primer in geometric group theory} \label{app:primer}
	
	In this appendix we state and prove some useful facts about the geometry and complexity of finitely presentable discrete {\it groups}. 
	Many of the statements derived below are well-known results in the math literature, and we have tried to provide citations to the original works when appropriate. A particularly accessible review of background material relevant to the discussion to follow can be found in Ref.~\cite{clay2017office}; a more advanced reference is Ref.~\cite{sapir2011asymptotic}. As a small notational convenience, in the following the notation $w \sim g$ will be used as shorthand to denote that the word $w$ evalutes to $g$; in the main text this was written as $\vp( \k w) = g$: 
	\be w\sim g \quad \lra \quad  \vp(\k w) = g.\ee   
	
	We will mostly be interested in infinite groups, since finite ones have trivial large-scale geometry (in a sense soon to be made precise). Finitely presentable infinite semigroups are of course very easily constructed; indeed it is easily verified any group presentation where the number of generators exceeds the number of nontrivial relations\footnote{By a ``trivial relation'' we mean a free reduction/expansion, namely a relation of the form $\tta \tta\inv = \tte$.} will generate an infinite group. Free groups (on $n>1$ generators) and Abelian groups in some sense define opposite extremes, since the free group has a Cayley graph that is embeddable in the hyperbolic space $\mathbb{H}^n$, while Abelian groups have Cayley graphs that are embeddable in $\mathbb{R}^n$. Most of the interesting cases for us correspond to when an intermediate amount of `Abelian-ness' is introduced to the non-Abelian free group. 
	
	\sss*{Time complexity: the Dehn function}

	The definition of the Dehn function \eqref{dehng} in the main text relates only to the (worst-case) complexity of deforming a given word $w\in K_e$ into the identity word (recall that $K_{g, L}$ is the set of length-$L$ words that represent the element $g$, namely $K_g = \{ w \, | \, |w| = L, g(w) = g\}$). We claimed in the main text that focusing on the complexity of words in $K_e$---as opposed to $K_g$ for $g\neq e$---was done without loss of generality. We now prove that studying the complexity of the word problem in $K_{g\neq e}$ does indeed not produce anything which is not already captured by $\dehn(L)$: 	\bpro \label{prop:sectorindep}
	For a given element $g$ of geodesic distance $|g| \leq L$, define the $g$-sector Dehn function as 
	\be \dehn_g(L) \triangleq \max_{w,w'\in K_{g, L}} \dehn(w,w')\ee
	where $\dehn(w,w')$ is the minimum number of applications of relations needed to transform $w$ into $w'$.  
	Then 
	\be \dehn_g(L) \sim \dehn_e(L) \triangleq \dehn(L)\ee 
	for all $g$. 
	\epro 
	\bproof For two words $w_{1,2}$ in the same $K_g$ sector, any deformation (aka {\it based homotopy}) of $w_1$ to $w_2$ gives a deformation between the length-$2L$ word $w_1w_2\inv \in K_e(2L)$ and $e$. Thus the minimal number of steps needed to relate $w_1$ to $w_2$ cannot be asymptotically smaller than the minimal number of steps needed to deform $w(w')\inv$ to $e$. This implies 
	\be \dehn(w(w')\inv,\tte)\lesssim \dehn(w,w')\ee 
	where $\lesssim$ denotes equivalence up additional contributions linear in $L$.\footnote{These additional contributions occur because we need $O(L)$ steps just to get rid of trivial words like $ww\inv$. However, an extra $L$ steps can be important when we do not allow words of length greater than $L$ to appear at any stage of the deformation (which we very well may want to do on physical grounds).} This means that $\dehn(L) \lesssim \dehn_g(L)$. Conversely, since we can deform $e$ to $w_1\inv w_2\sim e$ in time $\sim \dehn(w_1w_2\inv)$, $w_1$ can be deformed into $w_1(w_1\inv w_2) = w_2$ in time $\lesssim \dehn(L)$. Thus we also have $\dehn(w,w') \lesssim \dehn(w_1w_2\inv)$, so up to factors of order $L$, we have 
	\be \dehn(w_1w_2\inv) \sim \dehn(w,w') \implies \dehn_g(L) \sim \dehn(L).\ee
	\eproof 
	The above argument means that---if the amount of space available to us is not restricted---all sectors have asymptotically the same worst-case complexity. If we however require that all words involved have length $\leq L$, Prop.~\ref{prop:sectorindep} is modified to read 
	\be \dehn_g(L) \lesssim \dehn(L).\ee 
	In this case, all sectors $K_{g, L}$ for which the geodesic length\footnote{Recall that the geodesic length of a group element is defined by the length of the shortest word representing $g$, namely $|g| = \min\{w\, : \, \vp(\k w) = g\}$.} of $g$ satisfies $|g|/L < 1$ in the $L\ra\infty$ limit will still have $\dehn_g(L) \sim \dehn(L)$. On the other hand, sectors where $|g|/L =1$ as $L\ra\infty$ will not have enough ``free space'' for the above argument to work, and will thus have $\dehn_g(L) < \dehn(L)$---each word in this sector must have at least $|g|$ locations allotted to represent $g$, giving it less room to form large loops in the Cayley graph with the remaining locations (in the case where fragile fragmentation occurs [see Sec.~\ref{sec:ff}], this remains true provided we define $\dehn(w,w') = 0$ if $w_1,w_2$ are in different subsectors of $K_{g, L}$). Sectors where $|g|/L =1$ as $L\ra\infty$ are however necessarily exponentially smaller in size than those with $|g|/L < 1$ (see Prop.~\ref{prop:ksizes}), and hence a random state---which with high probability is associated to a geodesic word where $L-|g|$ diverges as $L\ra\infty$---will be in a sector with $\dehn_g(L) \sim \dehn(L)$ with high probability. 
	
	We now provide examples of simple groups and their Dehn functions as well as discuss which sorts of groups we can expect to have large Dehn functions. A basic result is that only infinite non-Abelian groups can have interesting Dehn functions, due to the following easily verified statements: 
	\begin{fact}\label{ex:finitedehn}
		All finite groups have $\dehn(L) \lesssim CL$ for some (presentation-dependent) constant $C$ related to the diameter of the group's Cayley graph.\footnote{Since finite groups have trivial geometry (in mathematical language, they are quasi-isometric to the trivial group), this provides another sense in which $O(L)$ contributions to Dehn functions are trivial. }
	\end{fact}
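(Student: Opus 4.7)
The plan is to give an explicit linear-time algorithm that reduces any length-$L$ word in $K_{e}$ to the trivial word using only relations of $R$, with a number of applications bounded by a constant (depending on $G$) times $L$. The key input is that $G$ being finite makes its Cayley graph have finite diameter $d = \max_{g \in G} |g|$, and there are only finitely many loops of bounded length.

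First, for each group element $g \in G$ I would fix once and for all a geodesic word $\bar g$ of length $|g| \leq d$ representing $g$, with $\bar e$ being the empty word. Given an input word $w = \tta_1 \tta_2 \cdots \tta_L \in K_{e, L}$, let $g_k \triangleq \tta_1 \cdots \tta_k \in G$ denote the partial products, so that $g_0 = g_L = e$. I would then process $w$ from left to right, after $k$ steps transforming the prefix $\tta_1 \cdots \tta_k$ into $\bar{g_k}$, padded on the right by $k - |\bar{g_k}|$ copies of the identity character $\tte$. Concretely, the inductive step replaces the prefix $\bar{g_{k-1}} \tte^{k-1-|\bar{g_{k-1}}|} \tta_k$ with $\bar{g_k} \tte^{k - |\bar{g_k}|}$.

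The cost of a single inductive step is controlled as follows. The word $\bar{g_{k-1}} \, \tta_k \, \bar{g_k}^{-1}$ is a loop in the Cayley $2$-complex of length at most $2d+1$. Because $G$ (and hence $R$) is finite, there are only finitely many such loops, and each has some finite Dehn area. Let $C_0$ be the maximum of these areas over all loops of length $\leq 2d+1$; this is a finite constant depending only on the presentation. Handling the $\tte$ padding requires only an additional $O(d)$ applications of the trivial relation $\tte\tta = \tta\tte$ (and $\tta\tta\inv = \tte\tte$ to expand lengths when $|\bar{g_k}| < |\bar{g_{k-1}}|+1$), so each inductive step costs at most $C \triangleq C_0 + O(d)$ applications of relations in $R$.

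Summing over the $L$ inductive steps, the total number of relation applications used is at most $C L$. Since the output after $L$ steps is $\bar{g_L} \tte^{L-|\bar{g_L}|} = \tte^L$ (using $g_L = e$, hence $\bar{g_L}$ is empty), this shows $\dehn(w, \tte^L) \leq C L$ for every $w \in K_{e, L}$, and so $\dehn(L) \leq CL$ as claimed. The main, and really only, subtlety is organizing the identity-padding so that words have a well-defined length at each step; given the built-in $\tte$ generator this is essentially bookkeeping, and the substantive content is the finite-diameter / finite-loop-count observation above.
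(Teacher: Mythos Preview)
Your combing argument---maintaining a geodesic representative $\bar g_k$ of each prefix and absorbing one letter at a time at bounded cost---is correct in spirit and is a standard route to linear Dehn function for finite groups. The paper does not actually prove this Fact in place, but its intended argument (sketched later, in the proof that finite groups have $\el(L) < L+C$) is different: it writes $w = \prod_j w_j$ as a concatenation of closed subloops with $|w_j| \leq N = |G|$, using pigeonhole on the $N$ vertices of the Cayley graph, and then reduces each bounded-length $w_j$ separately. Both arguments rest on the same finiteness, but yours is more explicitly algorithmic and ties the constant to the diameter $d$, while the paper's makes the decomposition into elementary loops more visible.

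There is, however, a real slip in your bookkeeping. With the padding on the \emph{right} of $\bar g_{k-1}$ as you specify, the next letter $\tta_k$ is separated from $\bar g_{k-1}$ by $k-1-|\bar g_{k-1}|$ identity characters, and bringing them into contact costs $\Theta(k)$ applications of $\tte\,\tta \leftrightarrow \tta\,\tte$, not $O(d)$; summed over $k$ this yields $O(L^2)$ rather than $O(L)$. The fix is immediate: pad on the \emph{left} instead, maintaining the invariant $\tte^{\,k-|\bar g_k|}\,\bar g_k\,\tta_{k+1}\cdots\tta_L$. Then $\bar g_{k-1}$ is already adjacent to $\tta_k$, the inductive step rewrites the length-$(\leq d{+}1)$ block $\bar g_{k-1}\,\tta_k$ as $\tte^{\,|\bar g_{k-1}|+1-|\bar g_k|}\,\bar g_k$ at cost bounded by your $C_0$, and the newly produced $\tte$'s merge with the existing left-padding at zero cost. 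This recovers the claimed $CL$ bound. (Equivalently: in the variable-length convention the appendix explicitly adopts---where $\tte$'s may be freely deleted---the padding issue never arises and your argument goes through as written.)
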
 
	\begin{fact}
		$\dehn(L) \lesssim L^2$ for any Abelian group.\footnote{The $L^2$ comes from applying relations to move around generators past other ones and next to their inverses. }
	\end{fact}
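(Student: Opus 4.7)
By Proposition~\ref{prop:dpresenindep}, the asymptotic scaling of $\dehn(L)$ is presentation-independent, so we may work with any convenient finite presentation. Since every finitely generated Abelian group is isomorphic to $\zz^k \oplus F$ for some finite Abelian $F$, I will fix the presentation
\[ G = \lan \tta_1,\dots,\tta_n \, | \, \tta_i\tta_j = \tta_j\tta_i \text{ for all } i,j; \, r_1,\dots,r_m \ran, \]
where the $r_\a$ are a finite list of torsion-type relations (e.g.\ $\tta_i^{k_i} = \tte$ for the finite part, expressed in commutator-free form). Both $n$ and $m$ are $O(1)$ constants depending only on $G$.

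Given a word $w \in K_{e,L}$, the plan is a straightforward bubble-sort argument. First, I would use only the commutation relations to sort $w$ into the canonical form
\[ \tta_1^{p_1}\tta_1^{-q_1}\tta_2^{p_2}\tta_2^{-q_2}\cdots \tta_n^{p_n}\tta_n^{-q_n}, \]
with all occurrences of each generator (and its inverse) collected together at a fixed position. A bubble sort of a length-$L$ sequence of $n=O(1)$ letter types requires at most $\binom{L}{2}=O(L^2)$ adjacent transpositions, and each transposition is a single application of a relation $\tta_i\tta_j = \tta_j\tta_i$ (or its inverse form). Thus sorting costs $O(L^2)$ steps.

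Second, within each block $\tta_i^{p_i}\tta_i^{-q_i}$, I apply the free reduction $\tta_i\tta_i^{-1}=\tte\tte$ repeatedly, at a cost of $\min(p_i,q_i)\le L$ steps per generator, so $O(L)$ steps total. Since $w\in K_{e,L}$, when $G$ is torsion-free this already leaves the all-$\tte$ word. When $G$ has torsion, the membership $w\in K_e$ forces $p_i - q_i \equiv 0 \pmod{k_i}$ in the finite factor, so I can clean up the remaining $|p_i-q_i|\le L$ surplus letters by applying each torsion relation $r_\a$ at most $L/|r_\a|$ times, padded by $\tte$'s to conserve length; this again costs $O(L)$ steps. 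Combining the two phases gives $\dehn(L)\le C_G L^2$ for a constant $C_G$ depending only on the presentation, which is the claimed bound.

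There is no real obstacle here beyond bookkeeping: the only subtlety is ensuring the torsion clean-up can be implemented with relations of length bounded by a presentation-dependent constant, which is automatic once a finite presentation has been fixed. (If one insists on a presentation with $\ell_R\le 3$ as guaranteed by the proposition in App.~\ref{app:group_props}, the torsion relations can always be refined to that size at the cost of introducing $O(1)$ auxiliary generators, without affecting the $O(L^2)$ scaling.)
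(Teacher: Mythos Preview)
Your proposal is correct and follows precisely the approach the paper sketches in its footnote: use the commutation relations to bubble-sort each generator next to its inverses (at $O(L^2)$ cost), then cancel. Your treatment is simply a more careful elaboration, handling the torsion part explicitly and invoking presentation-independence to justify working in a standard presentation; the paper leaves all of this implicit.
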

	Groups which are {\it too} non-Abelian also have quadratic Dehn functions:
	\begin{proposition}
		Free groups have $\dehn(L) \sim L^2$. 
	\end{proposition}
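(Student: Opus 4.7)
The approach will be to show the matching upper and lower bounds $\dehn(L) \leq L/2$ and $\dehn(L) \geq L/2$, using the fact that a free group $F_n = \langle a_1,\ldots,a_n \mid \,\rangle$ has no non-trivial relations, so the set $R$ consists only of free reductions/expansions of the form $a_i a_i^{-1} \leftrightarrow \tte\tte$ (and the same with $i \leftrightarrow i^{-1}$). The lower bound is immediate since a free reduction lowers word length by $2$, a free expansion raises it by $2$, and we need to go from length $L$ to length $0$ (or equivalently $L$ copies of $\tte$): hence at least $L/2$ applications are required, giving $\dehn(L) \gtrsim L$.

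For the upper bound, I would invoke the standard fact that in a free group, each group element is represented uniquely by a \emph{freely reduced} word, i.e.\ a word containing no adjacent pair $a_i a_i^{-1}$ or $a_i^{-1} a_i$. Consequently, any non-empty word $w \in K_{e,L}$ must contain at least one such adjacent inverse pair (otherwise $w$ would be a non-empty reduced representative of the identity, contradicting uniqueness). This justifies a greedy algorithm: repeatedly locate an adjacent inverse pair in $w$ and apply the corresponding free reduction $a_i a_i^{-1} \to \tte\tte$. Each step strictly decreases $|w|$ by $2$ while keeping $w$ in $K_{e, L}$ (where we pad with $\tte$s as necessary to keep the total length fixed), and the resulting shorter word is again in $K_e$ by construction, so the same argument applies recursively. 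After exactly $L/2$ steps the word consists entirely of $\tte$s, establishing $\dehn(w,\tte^L) \leq L/2$ for any $w \in K_{e,L}$ and hence $\dehn(L) \leq L/2$.

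Combining the two bounds gives $\dehn(L) = L/2 \sim L$. The main subtlety is the uniqueness of freely reduced representatives, which is what forces every identity-representing word to admit at least one local free reduction; this is a classical result (the normal form theorem for free groups), so I would cite it rather than reprove it. A minor point is that the argument above works verbatim whether or not one includes $\tte$ in the generating set, since $\tte$ commutes with everything and contributes no obstruction to the reduction; similarly, the presentation-independence of the Dehn function (Prop.~\ref{prop:dpresenindep}) ensures that the conclusion is a statement about the free group itself rather than an artifact of the canonical presentation.
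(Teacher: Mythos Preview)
Your proof is correct and takes essentially the same approach as the paper, which argues via the tree structure of the Cayley graph that any closed path must backtrack at every step and hence contracts in $O(L)$ moves. Your phrasing via the normal form theorem is the algebraic equivalent of the paper's geometric statement; you are slightly more explicit in giving the exact constant $L/2$ and in stating the matching lower bound, which the paper leaves implicit.
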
 
	\begin{proof}
		This follows from the fact that the Cayley graphs of free groups are trees, and thus any word $w\in K_e$ is a (potentially backtracking) path on the tree which contains no nontrivial loops. Such paths can be contracted to the trivial path by applying $O(L)$ relations $\ttx\ttx\inv = \tte\tte$, and $O(L^2)$ relations $\ttx\tte=\tte\ttx$.\footnote{In the mathematical literature, a convention is often used where $\dehn(w)$ is the number of relations needed to reduce $w$ to the identity $\tte$, allowing the length of the word to change in the process (so that e.g. $\tte\tte$ may be replaced by $\tte$). For physical models of group dynamics, where the total length of the word is always fixed, $\dehn(w)$ will generically be longer than the value obtained with this convention by a factor of $L$, corresponding to moves of the form $\tte \tta = \tta \tte$ that need to be performed during the reduction of $w$ to $\tte^L$. }
	\end{proof}
	
	To look for interesting Dehn functions we thus need infinite groups with a moderate amount of Abelianness, which (roughly speaking) possess nontrivial loops at all length scales. One can start by finding examples of groups where $\dehn(L)$ scales as a higher order polynomial. It was shown in Ref. \cite{gersten2003isoperimetric} that these turn out to be virtually nilpotent groups. The simplest example is: 
	\begin{example}
		The discrete Heisenberg group 
		\be \sfH_3 = \lan x,y,z\, |\, [x,y]z\inv, \, [x,z], \, [y,z] \ran \ee 
		has $\dehn(L) \sim L^3$ \cite{gersten2003isoperimetric}. 
	\end{example} 
	
	The group $\bs(1,2)$ \cite{baumslag1962some} studied in detail in the main text is the simplest group known to the authors with exponential Dehn function, whose properties we discuss in detail in App.~\ref{app:bs_details}.  
	
	\sss*{Space complexity: the expansion length}
	
	As discussed in the main text, the {\it space} complexity of the word problem is given by the maximal size of words that one must encounter when reducing a $w \in K_e$ to the identity, as measured by the expansion length $\el(w)$, a quantity originally introduced by Gromov in Ref.~\cite{gromov1992asymptotic}. 
	If $\el(w) > |w|$, then $w$ must expand by a nontrivial amount while being reduced to the identity. 
	
	As we did with the area, we may use the expansion length to define a distance metric between any two words $w_{1,2}$ that represent the same group element. Instead of using $\el(w_1w_2\inv)$ (which we do not do on account of the homotopy relating $w_1$ to $w_2$ needing to be properly based), we define the {\it relative expansion length} $\el(w,w')$ between two words $w_1\sim w_2$ by replacing $e$ by $w_2$ in the above definition: 
	\be \el(w,w') \triangleq \min_{\{\De_{w_1\ra w_2}\}} \max_t |\De_{w_1\ra w_2}(t)|,\ee 
	where the $\De_{w_1\ra w_2}(t)$ are all paths in the Cayley graph with endpoints fixed at $e$ and $g(w_1) = g(w_2)$. When the homotopy is trivial, i.e. when $w_1 = w_2$, we define the relative expansion length to vanish.

	The expansion length of a {\it group} can be defined as the worst-case spatial complexity of words in $K_e$, as was done in our definition of the Dehn function: 
	\be \el(L) \triangleq \max_{w \in K_{e, L}} \el(w).\ee 
	A comprehensive survey of geometric properties of $\el$ can be found in Ref.~\cite{riley2006filling}.
	
	Similarly to the Dehn function, the asymptotic scaling of $\el(L)$ is independent of the choice of (finite) presentation. 
	Also as with the Dehn function, considering expansion in other $K_g$ sectors does not yield anything new. Just as with proposition~\ref{prop:sectorindep}, one can similarly show that 
	\bpro \label{prop:explength_sectorindep}
	For a given $g$ of geodesic distance $|g|\leq L$, define the expansion lengths 
	\be \el_g(L) \triangleq \max_{w_1,w_2\in K_{g, L}} \el(w,w').\ee 
	Then 
	\be \el_g(L) \sim \el_e(L) \triangleq \el(L)\ee 
	for all $g$. 
	\epro 
	
	\ms 
	
	Most groups one is familiar with have $\el(L) \lesssim L$. For example, it is easy to see that all Abelian groups have $\el(L) \lesssim  CL$ for some constant $C$. One can also show 
	\bpro 
	All finite groups have $\el(L) < L + C$ for some constant $C$. 
	\epro 
	\bproof
	The proof proceeds according to the same one that would be used in demonstrating the correctness of example~\ref{ex:finitedehn}. Let $w \in K_e$ be a length-$L$ closed loop in the Cayley graph of a finite group, and let $N$ be the number of vertices in the Cayley graph. Then since $N$ is finite, we may write 
	\be w = \prod_j w_j, \qq w_j \sim e, \, |w_j| \leq N,\, \, \forall \, j, \ee 
	since a path in the Cayley graph can only reach at most $N$ different vertices before returning to its starting point. Let $M = \el(N)$. If we append $M$ identity characters to the end of $w$, we can use them to turn any one of the $w_j$ into $e$ without increasing the length of the appended word. Since we can do this for all of the $w_j$, we thus have $\el(L) = L + M$ as claimed. 
	\eproof   
	Furthermore, we will see later that despite having exponential time complexity ($\dehn(L) \sim 2^L$), $\bs(1,2)$ only has linear spatial complexity ($\el(L) \sim L$). This is part of a more general result that {\it asychronously combable groups} (which $\bs(1,2)$ is an example of) have $\el(L) \lesssim  L$ \cite{gersten1993isoperimetric}.
	
	It is however relatively straightforward to construct examples where $\el(L)$ grows faster than linearly. This is done simply by finding groups with Dehn functions that scale as $\dehn(L) = \omega(2^L)$. This is due to the ``spacetime'' bound mentioned in the main text:
	\begin{proposition}[\cite{gersten2002filling}]
		For a finitely presentable group generated by $n_g$ generators, 
		\be\label{spacetimebound} \dehn(L) \lesssim (2n_g + 1)^{\el(L)}.\ee 
	\end{proposition}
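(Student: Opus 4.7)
The plan is to exploit the obvious information-theoretic bound: if no intermediate word in a derivation ever has length greater than $M$, then that derivation cannot be longer than the total number of words of length at most $M$, which is bounded by $(2n_g+1)^M$ (the $2n_g+1$ coming from the $n_g$ generators, their inverses, and the identity character $\tte$).

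First I would fix a word $w \in K_{e,L}$ that achieves the worst case, namely $\dehn(w,\tte^L) = \dehn(L)$, and pick a derivation $D^*(w \sqig \tte^L)$ that realizes the minimal expansion length, so that $\el(D^*) = \el(w) \leq \el(L)$. Every intermediate word $u_i \in D^*$ satisfies $|u_i| \leq \el(L)$, and therefore the set of all possible intermediate words is contained in the finite set
\begin{equation}
\mathcal{W}_{\el(L)} \triangleq \{ u \in W(S) \,:\, |u| \leq \el(L)\},
\end{equation}
whose cardinality satisfies $|\mathcal{W}_{\el(L)}| \leq (2n_g+1)^{\el(L)+1}/(2n_g)$.

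Next I would show that $D^*$ can be assumed to visit each element of $\mathcal{W}_{\el(L)}$ at most once. Suppose some word $u$ appears twice in $D^*$, at positions $i < j$; then deleting the intermediate segment $u_i \to u_{i+1} \to \cdots \to u_j$ yields a strictly shorter derivation $D'$ from $w$ to $\tte^L$, whose expansion length is no larger than that of $D^*$. Iterating this shortening terminates in a repetition-free derivation $\tilde D$ with $\el(\tilde D) \leq \el(L)$, whose length is therefore bounded by $|\mathcal{W}_{\el(L)}|$. Since $\dehn(w,\tte^L)$ is defined as the minimum over all derivations, we obtain
\begin{equation}
\dehn(L) = \dehn(w,\tte^L) \leq |\tilde D| \leq (2n_g+1)^{\el(L)},
\end{equation}
absorbing the geometric-series constant into the $\lesssim$ notation.

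The argument is essentially complete at this point and contains no hard step; the one point that deserves a careful sentence is that the repetition-removal preserves the property of being a valid derivation (it does, because the first occurrence of $u$ can be spliced directly onto the continuation after the second occurrence, and consecutive steps still differ by a single relation in $R$). If anything is an obstacle, it is only notational: one should state cleanly that $\el$ in the formulation used here means the expansion length of an optimal-expansion (not optimal-length) derivation, since our shortening procedure operates on exactly such a derivation and never increases its expansion length.
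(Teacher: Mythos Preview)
Your proposal is correct and follows essentially the same counting argument as the paper: bound the number of possible intermediate words by $(2n_g+1)^{\el(L)}$ and observe that a derivation with no repeated words cannot exceed this length. Your version is in fact slightly more careful than the paper's, since you explicitly start from a minimal-\emph{expansion} derivation and then shorten it, whereas the paper implicitly assumes the minimal-\emph{length} derivation already has expansion length $\el(w)$; your repetition-removal step closes that small gap.
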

	This can be rewritten as $\el(L) \gtrsim \log_{2n_g+1} \dehn(L)$, which grows superlinearly if $\dehn(L)$ grows super-exponentially. 
	\bproof 
	For $w\in K_e$ with expansion length $\el(w)$, the number of words that $w$ can possibly visit as it is reduced to the identity is $|\{ a_i, a_i\inv , e\}|^{\el(w)} = (2n_g + 1)^{\el(w)}$. Since the shortest reduction of $w$ to $e$ cannot visit a given word $w'$ more than once, the number of steps in the reduction is (often very loosely) upper bounded by $(2n_g + 1)^{\el(w)}$. 
	\eproof 
	
	\sss*{Growth rates of groups} 
	
	A simple measure of a group's geometry is how fast the group grows, namely how the number of elements within a distance $L$ of the origin of the Cayley graph grows as $L$ is increased. To this end, let $\mcb(L)$ denote the radius-$L$ ball centered at the origin of the Cayley graph, namely 
	\be \mcb(L) \triangleq \{g \, |\, |g| \leq L\},\ee 
	and define the {\it growth function} (or {\it group volume}) as $N_K(L) \triangleq |\mcb(L)|$. Physically, $N_K(L)$ places a lower bound on the number of Krylov sectors that $\dyn_G$ possesses (in the absence of fragile fragmentation, the number of Krylov sectors equals $N_K(L)$). 	
	A basic property of a group is the asymptotic scaling of $N_K$ with $L$. It is straightforward to $N_K\sim O(1)$ for finite groups and $N_K \sim {\rm poly}(L)$ for Abelian groups.
	Free groups provide the simplest examples where $N_K \sim \exp (L)$, which is the maximum possible growth rate.\footnote{Groups with growth rate intermediate between polynomial and exponential exist --- the ``simplest'' examples are groups arising from automorphisms of trees \cite{grigorchuk2008groups} --- but it is conjectured that all such groups have only infinite presentations. }
	
	We now ask what implications group expansion has for the time and space complexity measures introduced above. It turns out that exponential growth is needed in order to have $\dehn(L)$ growing faster than ${\rm poly}(L)$: 
	\begin{proposition} \label{thm:gromov} 
		If a finitely presented
		group $G$ possesses a superpolynomial Dehn function, then $G$ has exponential growth.
	\end{proposition}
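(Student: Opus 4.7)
The plan is to prove the contrapositive: a finitely presented group whose growth is not exponential has polynomial Dehn function. The conclusion then follows by flipping the implication.

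The argument proceeds in two steps, each invoking a deep black-box theorem. First I would invoke Gromov's polynomial growth theorem, which asserts that a finitely generated group with polynomial growth is virtually nilpotent, i.e.\ contains a nilpotent subgroup of finite index. Second, I would invoke the theorem of Gersten, Holt and Riley, which bounds the Dehn function of a finitely generated nilpotent group of nilpotency class $c$ by $L^{c+1}$; this polynomial bound is preserved under passage to finite-index extensions, so any virtually nilpotent group has polynomial $\dehn(L)$. Concatenating, polynomial growth forces virtual nilpotence, which forces polynomial $\dehn(L)$. Taking the contrapositive yields superpolynomial $\dehn(L) \Rightarrow$ non-polynomial $N_K(L)$. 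I would not attempt to reprove either Gromov's theorem (which ultimately rests on Gleason--Yamabe and is far beyond the scope of a self-contained argument) or Gersten--Holt--Riley; both are standard citations.

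The main obstacle is the last step of strengthening ``non-polynomial growth'' to ``exponential growth'' as claimed in the proposition. Strictly, the chain above only rules out polynomial growth, not intermediate growth. For finitely presented groups, the absence of intermediate growth is a longstanding open question, but the dichotomy between polynomial and exponential growth is established in every known case --- for elementary amenable groups by Milnor--Wolf, for linear groups by the Tits alternative, for hyperbolic groups trivially, and so on. The Grigorchuk groups (noted in the excerpt) realize intermediate growth but are only infinitely presented, and so do not violate the conjectural dichotomy for finitely presented groups. Under this standard hypothesis the argument is complete; absent it, what the plan actually establishes unconditionally is the slightly weaker statement that superpolynomial Dehn function forces superpolynomial (rather than exponential) growth. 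I do not see a route to the proposition exactly as stated that bypasses both the Gromov--Gersten--Holt--Riley machinery and the growth dichotomy.
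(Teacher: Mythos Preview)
Your approach matches the paper's proof exactly: both argue by contrapositive, invoke Gromov's theorem to get virtual nilpotence from polynomial growth, and then cite the polynomial Dehn function bound for (virtually) nilpotent groups. Your identification of the gap is correct and in fact sharper than the paper's treatment --- the paper's proof only establishes that superpolynomial $\dehn(L)$ implies non-polynomial growth, and the jump to \emph{exponential} growth tacitly relies on the same conjectural absence of intermediate growth among finitely presented groups that you flag (the paper acknowledges this conjecture in a footnote elsewhere but does not address it in the proof itself).
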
 
	\bproof 
	The contrapositive of this proposition follows from Gromov's theorem that all finitely generated groups with polynomial growth are virtually nilpotent (namely have a nilpotent subgroup of finite index) \cite{gromov1981groups}. Since virtually nilpotent groups have the same Dehn functions as nilpotent ones, we may combine Gromov's theorem with the fact that all nilpotent groups have $\dehn(L) \sim L^d$ for some $d$ \cite{gersten2003isoperimetric} to arrive at the result. 
	\eproof 
	
	Note that the converse to proposition~\ref{thm:gromov} is obviously false, as free groups on more than one generator provide examples of groups with exponential growth but with polynomial (in fact $\dehn(L) \sim L$) Dehn functions. 
	
	\ms 
	
	We would also like to know the sizes of the different sectors $K_g$. 
	One result along these lines is that the sector sizes must get small as $|g|$ gets large: 
	\bpro \label{prop:ksizes}
	Define 
	\be \mcd \triangleq \sum_{g \, | \, |g|\leq L} |K_{g, L} | = (2n_g+1)^L\ee 
	as the total number of words of length $L$.
	Then the size of $K_{g, L}$ is upper bounded as 
	\be \frac{|K_{g, L}|}\mcd \leq C \exp\( - c \frac{|g|^2}L \)\ee
	for some  $g,L$-independent constants $C,c$. 
	\epro 
	
	In the context of groups (like $\bs$) with exponential growth, this tells us that almost all Krylov sectors are exponentially smaller than the largest ones. 
	
	\bproof 
	The proof follows from connecting the counting of walks in $K_{g, L}$ with the heat kernel on the Cayley graph. Consider the symmetric simple lazy walk on the Cayley graph, and let $p_L(g,h)$ be the probability that a length-$L$ walk starting at $h$ ends at $g$. Then 
	\bea p_L(g,h)  & = \frac1{2n_g+1} \( p_{L-1}(g,h) + \sum_{k \in \p g} p(k,h)\) \\
	& = p_{L-1}(g,h) + \frac1{2n_g+1} \sum_{k\in \p g} (p(k,h) - p(g,h)),\eea 
	where the sum over $k$ runs over the neighbors of $g$ in the Cayley graph. By translation invariance of the Cayley graph, we can fix $h=e$ without loss of generality, and will simply write $p_L(g)$ for $p_L(g,e)$. 
	
	The previous equation can be written more succinctly as 
	\be \d_L p = \De p,\ee 
	where $\De$ is the normalized graph Laplacian and $\d_L$ denotes the discrete derivative along the ``time'' direction determined by $L$. Thus the different sizes of the $K_{g, L}$ are determined by using the heat equation to evolve a delta function concentrated on $e$ for a total time of $L$. The claim we are trying to prove then follows from estimates of the discrete heat kernel Greens function developed in the graph theory literature, see \cite{davies1992heat} for a review. 
	\eproof 
	Various other facts follow from the observation that $p_L(x)$ obeys the heat equation. For example, it implies that $p_L(x)$ obeys strong maximum and minimum principles, which guarantees that all local maxima and minima of $p_L(x)$ occur on the boundaries of its domain of definition. It also means that since in groups of exponential growth almost all group elements in $\mcb(L)$ have geodesic dist stance close to $L$, almost all sectors $K_{g, L}$ will contain a number of elements exponentially smaller than $\mcd$. 
	
	In addition to the above asymptotic bound, we can also prove that $K_e$ is always the largest sector: 
	\bpro \label{prop:largest_sector}
	For all $L$ and all $g, \, |g|\leq L$, we have\footnote{Note that in order for this proposition to be true, we need the words in $K_g$ to be drawn from the alphabet $S \cup S\inv \cup \{e\}$, where $S$ is the generating set. If we were to just use $S\cup S\inv$ then it obviously cannot be strictly true, as in this case one could e.g. have a bipartite Cayley graph such that $|K_{e, L}| = 0$ for odd $L$.}
	\be |K_{e, L}| \geq |K_{g, L}|.\ee 
	\epro 
	
	\bproof 
	We aim to show that $p_L(e,e) \geq p_L(e,g)$ for all $g,L$. For simplicity of notation, let $L\in 2\nn$. Then using $p_L(g,h) = p_L(gk,hk)$ for all $g,h,k$ on account of transitivity of the Cayley graph, we have 
	\bea p_L(e,e) & = \sum_g p_{L/2}(e,g)^2 \\
	& = \sqrt{\sum_h p_{L/2}(e,h)^2} \sqrt{\sum_{h'} p_{L/2}(g,h')^2} \\
	& \geq \sum_h p_{L/2}(e,h) p_{L/2}(h,g) \\ 
	& = p_L(e,g),\eea 
	where in the third line we used Cauchy-Schwarz.
	\eproof   
	
	It is also possible to make statements about the absolute size of $K_e$. In particular, $|K_e|$ admits different bounds depending on the growth rate of the group:  
	\begin{proposition}[\cite{woess2000random}] \label{fact:bounds_on_mcke}
		For a group with growth function $N_K(L)$ scaling as a polynomial of $L$,  
		\be L^{d_1} \lesssim V(L) \lesssim L^{d_2} \implies (L\log L)^{-d_2/2} \lesssim   \frac{|K_e|}{\mcd } \lesssim L^{-d_1/2}. \ee 
		For a group with exponential growth, 
		\be \frac{|K_e|}\mcd  \lesssim e^{-L^{1/3}}.\ee 
	\end{proposition}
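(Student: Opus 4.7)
The plan is to reduce the entire statement to standard heat-kernel estimates for the lazy simple random walk on the Cayley graph $\cg_G$. As established in the proof of Proposition~\ref{prop:ksizes}, one has the identification $|K_{e,L}|/\mcd = p_L(e,e)$, where $p_L$ is the $L$-step return probability of the walk with uniform transition weight $1/(2n_g+1)$ on every edge plus a lazy self-loop. Both bounds then become statements about the on-diagonal decay of the discrete heat kernel on $\cg_G$, a subject for which there is a well-developed machinery in the theory of random walks on groups.

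For the polynomial-growth case I would first establish the upper bound $p_L(e,e) \lesssim L^{-d_1/2}$ by invoking a Nash-type inequality: the volume lower bound $V(r) \gtrsim r^{d_1}$ is known to be equivalent (up to constants) to a Nash inequality of the form $\|f\|_2^{2+4/d_1} \leq C\,\|f\|_1^{4/d_1}\,\mce(f,f)$, where $\mce$ is the Dirichlet form of the walk; iterating this along the heat semigroup yields $\|p_L\|_\infty \lesssim L^{-d_1/2}$, and combining with $p_L(e,e) = \sum_g p_{L/2}(e,g)^2 = \|p_{L/2}(e,\cdot)\|_2^2$ gives the stated decay. For the matching lower bound, I would combine the Gaussian on-diagonal estimate $p_L(e,g) \lesssim L^{-d_1/2}\exp(-c|g|^2/L)$ with the observation that it concentrates $p_L$ in the ball $\mcb(r)$ for $r\sim\sqrt{L\log L}$; then Cauchy--Schwarz gives
\[
p_L(e,e) = \sum_g p_{L/2}(e,g)^2 \;\gtrsim\; \frac{\bigl(\sum_{g\in\mcb(r)} p_{L/2}(e,g)\bigr)^2}{|\mcb(r)|} \;\gtrsim\; \frac{1}{V(r)} \;\gtrsim\; (L\log L)^{-d_2/2},
\]
using the upper bound $V(r)\lesssim r^{d_2}$ at the final step.

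For the exponential-growth case, the tool is the Coulhon--Saloff-Coste theorem, which converts an isoperimetric profile of $\cg_G$ into heat-kernel decay. Exponential growth of $G$ together with the vertex-transitivity of $\cg_G$ forces a Cheeger-type profile of the form $|\partial\Omega| \gtrsim |\Omega|/\log|\Omega|$ for all finite $\Omega \subset \cg_G$. Feeding this $\phi(v) \sim \log v$ into the Coulhon--Saloff-Coste integral $U(t) = \int_1^t \phi(v)^2\, dv/v \sim (\log t)^3$ yields $p_L(e,e) \lesssim 1/U^{-1}(L) \sim \exp(-c L^{1/3})$, which is precisely Varopoulos's classical bound for groups of superpolynomial growth.

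The main technical hurdle is the isoperimetric step in the exponential-growth argument: one must argue that an exponentially growing Cayley graph always has a logarithmic-type Cheeger profile, which requires some care regarding the interplay between amenability and growth (for non-amenable groups one actually gets the much stronger estimate $p_L(e,e)\lesssim e^{-cL}$, so the $e^{-L^{1/3}}$ bound is the universal worst case, attained in the amenable exponential-growth regime). By contrast, the polynomial-growth arguments are largely routine once the Nash inequality and the Gaussian on-diagonal upper bound are in hand; both are now standard for random walks on graphs with polynomial volume growth, and I would simply quote them rather than rederive them.
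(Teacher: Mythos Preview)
The paper does not actually prove this proposition; it is quoted directly from the literature (Woess's monograph on random walks on infinite graphs and groups), so there is no ``paper's own proof'' to compare against. Your sketch is essentially the standard argument underlying the cited result: the identification $|K_{e,L}|/\mcd = p_L(e,e)$ is precisely the heat-kernel correspondence already used in the paper in the proofs of Propositions~\ref{prop:ksizes} and~\ref{prop:largest_sector}; the Nash-inequality route for the polynomial-growth upper bound and the Cauchy--Schwarz concentration argument for the lower bound are both textbook; and the exponential-growth case is Varopoulos's theorem, obtained exactly as you say via the Coulhon--Saloff-Coste isoperimetric machinery applied to the logarithmic profile. Two small remarks: what you call the ``Gaussian on-diagonal estimate'' in the lower-bound step is really the full off-diagonal Gaussian upper bound, and strictly speaking that estimate needs a doubling condition on the volume, not just polynomial growth---but for groups this is automatic via Gromov's theorem, so the gap is cosmetic. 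Your outline is correct and faithful to the cited source.
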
 
	In the main text, we saw numerically that this bound is saturated for $\bs$ (see Fig.~\ref{fig:id_sector_scaling}). This means that while $\dyn_\bs$ has exponentially slow relaxation, it is as `close' to being weakly fragmented as possible.  
	
	We can also connect the above bound with a previous result to derive: 
	\begin{corollary}
		Groups with superpolynomial Dehn functions have exponentially many group sectors for words of a fixed length, and have an identity sector $K_e$ that contains a fraction of all length-$L$ words that scales at most as $e^{-L^{1/3}}$. 
	\end{corollary}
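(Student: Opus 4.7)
The plan is to chain together the two propositions that were just established. Both conclusions of the corollary are consequences of the implication ``superpolynomial Dehn function $\Rightarrow$ exponential growth'' (Proposition~\ref{thm:gromov}), which is the crucial input.

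First I would handle the claim about the number of sectors. Fix a group $G$ with $\dehn(L)$ growing faster than any polynomial. By Proposition~\ref{thm:gromov}, the growth function obeys $N_K(L) \gtrsim c^L$ for some $c>1$. It remains to observe that the number of distinct nonempty $K_{g,L}$ (dynamical sectors realized by length-$L$ words) is exactly $N_K(L)$: the inclusion of the identity character $\tte$ in the generating alphabet lets us pad a geodesic word for any $g$ with $|g|\le L$ to length exactly $L$, so every such $g$ is represented, while $|g|>L$ forces $K_{g,L}=\emptyset$ by the triangle inequality on geodesic length. Hence the number of sectors for length-$L$ words is at least $c^L$, which is exponential.

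Second, for the identity-sector bound I would just invoke Proposition~\ref{fact:bounds_on_mcke}. Since exponential growth has been established in the previous step, the exponential-growth branch of that proposition applies and gives
\begin{equation}
\frac{|K_e|}{\mcd} \;\lesssim\; e^{-L^{1/3}}
\end{equation}
directly.

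In this sense the corollary is a genuine corollary: all the technical work has already been done in Propositions~\ref{thm:gromov} and~\ref{fact:bounds_on_mcke}, and no further estimates are required. The only subtlety worth flagging explicitly is the padding argument in the first step, which relies on our convention that $\tte$ is part of the onsite alphabet (as specified in Section~\ref{sec:general_semigroup}); without it, a parity or divisibility obstruction on $L$ could in principle reduce the count of realized sectors for fixed $L$, though the exponential scaling would still survive. There is no substantial obstacle to overcome.
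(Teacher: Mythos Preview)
Your proposal is correct and is essentially identical to the paper's proof, which simply says ``This follows by combining proposition~\ref{fact:bounds_on_mcke} with proposition~\ref{thm:gromov}.'' You have merely spelled out the two-step chain (superpolynomial Dehn $\Rightarrow$ exponential growth $\Rightarrow$ both conclusions) in slightly more detail, including the padding observation for counting sectors, but there is no substantive difference in approach.
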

	
	\bproof 
	This follows by combining proposition~\ref{fact:bounds_on_mcke} with proposition~\ref{thm:gromov}. 
	\eproof

	\section{Semigroup dynamics and undecidability} \label{app:undecide}

	While not relevant to the examples studied in the main text, it is amusing to note that the word problem for semigroups is not just computationally hard, but can even be undecidable. For example, a foundational result in computability theory is the undecidability of the semigroup word problem, as first proven by Markov~\cite{markov1948impossibility}. This result implies the existence of finitely-presented semigroups for which $\dehn(w,w')$ grows faster than any recursive function. For such groups, one can never be sure whether or not two words $\k w, \k{w'}$ are in the same $K_g$, showing that establishing the number of dynamical sectors is impossible in general. 
	
	Remarkably, rather simple examples of semigroups with undecidable word problems are known, with the required $|\mch_{\rm loc}|$ being as low as five. An explicit example from Ref.~\cite{tseitin1958associative} provides a rather simple example of a semigroup with an undecidable word problem on the five-element generating set $S = \{ \ttv,\ttw,\ttx,\tty,\ttz\}$, and the seven relations 
	\begin{align}
		R = \{ \ttv\ttx=&\ttx\ttv, \ttv\tty=\tty\ttv, \ttw\ttx=\ttx\ttw, \ttw\tty=\tty\ttw,\nonumber\\&\ttx\ttz=\ttz\ttx\ttv,\tty\ttz=\ttz\tty\ttw,\ttx^2\ttv=\ttx^2\ttv\ttz\}.
	\end{align}
	The word problem is also undecidable when one specifies further to the setting of a finitely presented {\it group}~\cite{novikov1955algorithmic, boone1959word}. Going further still, the {\it Adyan-Rabin theorem}~\cite{rabin1958recursive,adyan1955algorithmic} states that nearly all ``reasonable'' properties of finitely-presented groups---their Dehn times, whether or not they are finite or Abelian; even whether or not they are a presentation of the trivial group---is undecidable. 
	
	As a corollary, the existence of Hilbert space fragmentation itself is therefore undecidable: even if ergodicity looks to be broken for all system sizes below $L$, for some group presentations one can in general never be sure that it will not be restored at system size $L+1$. This result compliments recent work on the undecidability of physical problems relating to local Hamiltonians, e.g. computing spectral gaps, ground state phase diagrams, and so on \cite{cubitt2015undecidability,bausch2020undecidability,shiraishi2021undecidability,bausch2021uncomputability}.

	\section{No complete symmetry labels for non-Abelian $G$} \label{app:nosyms} 
	
	In this appendix, we will prove that the Krylov sectors of $\dyn_G$ can not be associated with the quantum numbers of any global symmetry if $G$ is non-Abelian (as is the case for all of the examples of interest). In fact we will actually prove a more general result. To state the result, we will define a {\it locality-preserving unitary} as any unitary operator $\mcu$ which is such that for all local operators $\mco$, the conjugated operator 
	\be \mco^\mcu \triangleq \mcu^\da \mco \mcu\ee
	is also local. The generators of any global symmetry are locality-preserving unitaries, but for us we will not need to assume that $\mcu$ commutes with $\dyn_G$. 
	
	We will prove the following result: 
	\begin{prop}
		When $G$ is non-Abelian, the Krylov sectors of $\dyn_G$ cannot be fully distinguished by the eigenvalues of any set of locality-preserving unitaries. 
	\end{prop}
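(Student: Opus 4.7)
The plan is to show that any collection $\{\mcu_i\}$ of locality-preserving unitaries whose eigenvalues separate the Krylov sectors gives rise to a matching collection of group homomorphisms $\lambda_i : G \to U(1)$ with trivial joint kernel. Since any homomorphism into the Abelian group $U(1)$ factors through the abelianization $G^{\mathrm{ab}} = G/[G,G]$, the joint kernel necessarily contains $[G,G]$, and for non-Abelian $G$ this contradicts injectivity. First I would make the hypothesis precise: in order for a Krylov sector to be unambiguously labeled by an eigenvalue tuple, each $\mcu_i$ must act as a single scalar $\lambda_i(g) \in U(1)$ on $K_g$, and the map $g \mapsto (\lambda_i(g))_i$ must be injective. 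In particular, every computational-basis product state $\ket{w}$ is a simultaneous eigenstate of all of the $\mcu_i$, and after absorbing an overall phase into each $\mcu_i$ we may normalize $\lambda_i(\tte) = 1$.

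The central step is to show that each $\lambda_i$ is a group homomorphism via a locality-based factorization. Fix $g,h\in G$ with representative words $w_g,w_h$. For $N, M$ much larger than the spreading radius of $\mcu_i$, the product state
\[
\ket{\psi_{g,h}} \;=\; \ket{\tte^{N} w_g \tte^{M} w_h \tte^{N}}
\]
lies in $K_{gh}$ and therefore has $\mcu_i$-eigenvalue $\lambda_i(gh)$. On the other hand, because $\mcu_i$ preserves locality, once $M$ exceeds twice its spreading radius its action on $\ket{\psi_{g,h}}$ decomposes into three pieces acting independently near $w_g$, near $w_h$, and on the surrounding $\tte$-background. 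Comparing with the analogous decompositions for the eigenstates $\ket{\tte^{\cdots} w_g \tte^{\cdots}}$, $\ket{\tte^{\cdots} w_h \tte^{\cdots}}$, and $\ket{\tte^{L}}$ and matching the three factorized eigenvalues produces the identity $\lambda_i(gh)\,\lambda_i(\tte) = \lambda_i(g)\,\lambda_i(h)$, which with our normalization yields $\lambda_i(gh) = \lambda_i(g)\lambda_i(h)$. Since the codomain $U(1)$ is Abelian, each $\lambda_i$, and hence the joint tuple, descends to a map on $G^{\mathrm{ab}}$.

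To conclude: when $G$ is non-Abelian the commutator subgroup $[G,G]$ is nontrivial, and one may take, for example, $g = \tta\ttb$ and $h = \ttb\tta$ for any pair of non-commuting generators, so that $g \neq h$ while $gh^{-1} = [\tta,\ttb] \in [G,G]$. These lie in distinct Krylov sectors yet satisfy $\lambda_i(g) = \lambda_i(h)$ for every $i$, contradicting injectivity. The main technical obstacle is the factorization claim in the second step. The bare definition of a locality-preserving unitary permits a nonzero GNVW index (a shift composed with a finite-depth circuit), and a shift does not admit the required factorization. The scalar-on-each-sector hypothesis rescues the argument, however: a net shift cannot fix each of the many distinct computational-basis product states in $K_{\tte}$ up to a common phase, since any two such shift-translated product states are orthogonal. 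Thus $\mcu_i$ must have trivial GNVW index and be expressible as a finite-depth local circuit, for which the factorization is transparent: when $M$ exceeds twice the circuit depth, the gates touching the $w_g$ region, those touching the $w_h$ region, and those touching only the $\tte$-background commute and can be evaluated on their respective single-defect or vacuum eigenstates separately.
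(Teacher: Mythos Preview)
Your overall strategy---showing that any sector-labeling function $\lambda_i:G\to U(1)$ must be a group homomorphism and hence factor through $G/[G,G]$---is correct, and in fact gives a cleaner conceptual statement than the paper's. The paper only establishes the special case $\lambda_a(\ttg\tth)=\lambda_a(\tth\ttg)$ for a pair of generators, which suffices but obscures the underlying reason. Both arguments rest on the same locality-based factorization of an eigenvalue attached to a two-defect product state.

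However, your execution of that factorization has a gap. The claim that the gates touching the $w_g$ region, those touching the $w_h$ region, and those touching only the $\tte$-background \emph{commute} is false for circuits of depth $\geq 2$: a later-layer gate supported entirely on $\tte$-sites can share a site with an earlier-layer gate whose support overlaps $w_g$, and these generically fail to commute. A two-layer brickwork already furnishes counterexamples. Consequently one cannot ``evaluate the three pieces separately'' in the way you describe.

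The clean fix---and this is what the paper actually does---avoids decomposing $\mcu_i$ altogether. Write $\ket{\psi_{g,h}}=\mco_g\mco_h\ket{\tte^L}$ for local operators $\mco_{g},\mco_h$ and use only that conjugation by $\mcu_i$ preserves locality:
\[
\lambda_i(gh)=\bra{\tte^L}\mco_g^\da\mco_h^\da\,\mcu_i\,\mco_g\mco_h\ket{\tte^L}
=\lambda_i(e)\,\bra{\tte^L}(\mco_g^\da)^{\mcu_i}(\mco_h^\da)^{\mcu_i}\,\mco_g\mco_h\ket{\tte^L}.
\]
Since $(\mco_g^\da)^{\mcu_i},\mco_g$ are supported near $w_g$ and $(\mco_h^\da)^{\mcu_i},\mco_h$ near $w_h$, the correlator factorizes in the product state $\ket{\tte^L}$, yielding $\lambda_i(gh)\lambda_i(e)=\lambda_i(g)\lambda_i(h)$. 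An equivalent route, perhaps closer in spirit to what you intended: your hypothesis already forces each $\mcu_i$ to be \emph{diagonal} in the computational basis; a locality-preserving diagonal unitary has a phase $\phi(w)$ whose single-site variations depend only on the local neighborhood, and the additive factorization of $\phi$ is then immediate. Either way the GNVW discussion becomes unnecessary---diagonality trivially implies vanishing index.
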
 
	
	\begin{proof} 
		We will argue by contradiction. Assume that there existed a set of unitaries $\mcu_a$ whose expectation values in a given computational basis product state $\k w$ allowed one to determine the group element $\vp(\k w)$ associated with $w$ (and therefore the $K_{g,L}$ that $\k w$ belongs to). Consider the states  
		\be \k{w_{\ttg\tth} } = \k{\tte^l \ttg \tte^m \tth \tte^n},\ee
		with $l,m,n$ all proportional to the system size $L$.
		We can write 
		\be \k{w_{\ttg\tth}} = \mathcal{O}_{\ttg,l} \mathcal{O}_{\tth, l+m+1} \ket{\tte^{l+m+n+2}},\ee 
		where $\mco_{\ttg,i} = \kb\ttg\tte_i + h.c$. Our goal will be to show that $\k{w_{\ttg\tth}}$ cannot be an eigenstate of any of the $\{\mcu_a\}$ if $\ttg\tth\neq \tth\ttg$. 
		
		Denoting $\ket{\tte^{l+m+n+2}}$ as $\ket{\tte}$ for simplicity, and defining $\k{w_{\ttg\tth;a}} =  \mathcal{O}_{\ttg,l} \mathcal{O}_{\tth, l+m+1} \mcu_a  \ket{\tte^{l+m+n+2}}$, we have 
		\bea 
		\lan w_{\ttg\tth;a} | \mcu_a | w_{\ttg\tth} \ran & = \lan\tte  | \mcu_a^\da \mco_{\ttg,l+1} \mco_{\tth,l+m+2}\mcu_a \mco_{\ttg,l+1} \mco_{\tth,l+m+2} |\tte\ran \\ 
		& = \lan\tte  | \mco^{\mcu_a}_{\ttg,l+1} \mco^{\mcu_a}_{\tth,l+m+2} \mco_{\ttg,l+1} \mco_{\tth,l+m+2}  |\tte\ran.
		\eea 
		Since we have assumed $m$ to be extensive and $\mcu_a$ to be locality-preserving, we will always have $m+1 >2 \, {\rm supp}(\mco^{\mcu_a}_{\ttg,i})$  for sufficiently large system sizes, and so the above expectation value splits as 
		\bea \lan w_{\ttg\tth;a} | \mcu_a | w_{\ttg\tth} \ran & = \lan \tte | \mco^{\mcu_a}_{\ttg,l+1}\mco_{\ttg,l+1} |\tte\ran \\ 
		& \qq \times  \lan \tte | \mco^{\mcu_a}_{\tth,l+m+2} \mco_{\tth,l+m+2} | \tte \ran. \eea 
		In particular, in the translation invariant case where $\lan \tte | \mco^{\mcu_a}_{\ttg,l+1}\mco_{\ttg,l+1} |\tte\ran$ is independent of $l$, we have 
		\bea  \lan w_{\ttg\tth;a} | \mcu_a | w_{\ttg\tth} \ran = \lan w_{\tth\ttg;a} | \mcu_a | w_{\tth\ttg} \ran.\eea 
		Taking $\ttg,\tth$ to be any two generators such that $\ttg\tth \neq \tth\ttg$ as group elements, the above then shows that states in different $K_{g,L}$ cannot be distinguished by eigenvalues of the $\{\mcu_a\}$, provided that i) the above assumption about translation invariance can be removed, and ii) $\k{w_{\ttg\tth}}$ is not orthogonal to $\k{w_{\ttg\tth;a}}$ for all choices of $l,m,$ and all choices of $\ttg,\tth$ such that $\ttg\tth\neq\tth\ttg$. i) is dealt with by noting that when the $\mcu_a$ are not translation invariant (as is the case for modulated symmetries), we may exploit the fact that the sectors to which $\k{w_{\ttg\tth}}$ and $\k{w_{\tth\ttg}}$ belong are independent of $l,m$, and hence $l,m$ can be varied without changing the eigenvalues of the respective states under the $\{\mcu_a\}$. For ii), we need only note that 
		\bea \lan w_{\ttg\tth;a} | w_{\ttg\tth} \ran & =  \lan \tte | \mcu_a^\da \mco_{\ttg,l}^2 \mco_{\tth,l+m+1}^2  | \tte \ran \\
		& =   \lan \tte | \mcu_a^\da (\proj{\tte}_l + \proj{\ttg}_l) \\ 
		& \qq \times (\proj{\tte}_{l+m+1} + \proj{\tth}_{l+m+1}) | \tte \ran \\
		& = \lan \tte | \mcu_a^\da | \tte \ran,\eea 
		which is nonzero since $\ket{\tte}$ is an eigenstate of $\mcu_a^\da$ by assumption (on account of its definite symmetry charge), and since unitarity guarantees that $\mcu_a^\da$ has no zero eigenvalues. 
	\end{proof}

	\section{The geometry and complexity of Baumslag-Solitar groups} \label{app:bs_details}
	
	In this section we state and prove some facts about the group geometry of the Baumslag-Solitar group $\bs(1,2)$ and some of its simple generalizations. The original work in which these groups were defined is Ref.~\cite{baumslag1962some}. We will always work with the presentation
	\be \bs(1,2) = \lan \tta,\ttb \, | \, \tta\ttb = \ttb\tta\tta\ran.\ee 
	Everything we say below can be readily modified to work for the generalixed groups $\bs(1,q) = \lan \tta,\ttb \, |\, \tta\ttb = \ttb\tta^q\ran$, but for concreteness we will specify to $q=2$ throughout, and as in the main text will write $\bs$ for $\bs(1,2)$. 
	A useful fact about our chosen presentation for $\bs$ is that it admits the the matrix representation \cite{clay2017office}
	\be \label{matrep} \tta = \bpm 1 & 1 \\ 0 & 1 \epm ,\qq  \ttb = \bpm 1/2 & 0 \\ 0 & 1\epm,\ee 
	which allows the word problem to be solved in linear time and is helpful in numerical studies of $\bs$'s group geometry.\footnote{For $\bs(1,q)$, the $1/2$ in $\ttb$ is replaced by $1/q$.}
	
	Our notation will carry over from the previous section on general aspects of group geometry. 
	We will also introduce the notation $n_b(w)$ to denote the net number of $b$s that appear in a word $w$, namely 
	\be n_b(w) \triangleq \sum_{i=1}^{|w|} (\d_{w_i,b} - \d_{w_i,b\inv}).\ee

	\ss*{Worst-case complexity}
	
	We will start by analyzing the worst-case time and space complexity of words in $\bs$, which we do by computing the Dehn and expansion length functions. 
	
	To orient ourselves, it is helpful to recall that the Cayley graph of $\bs$ is homeomorphic to the product of the real line with a 3-regular tree (for $\bs(1,q)$ it is a $(q+1)$-tree), with the depth of a given word $w$ along the tree being controlled by the relative number of $b$s and $b\inv$s that $w$ contains (see Fig.~\ref{fig:bs_geometry} of the main text).  
	We will refer to this tree as the {\it b-tree}, and will use terminology whereby multiplying by $b$ ($b\inv$) moves one ``up'' (``down'') on the sheet of the $b$-tree one is currently at, while multiplying by $a$ ($a\inv$) moves one  ``right'' (``left'') within the sheet (along the {\it a-axis}). Because of the hierarchical nature of the graph, multiplying by $b$ --- namely moving deeper into the $b$-tree --- moves one to ``larger scales'', while multiplying by $b\inv$ does the opposite. 
	
	\begin{figure}
		\centering
		\includegraphics[scale=0.4]{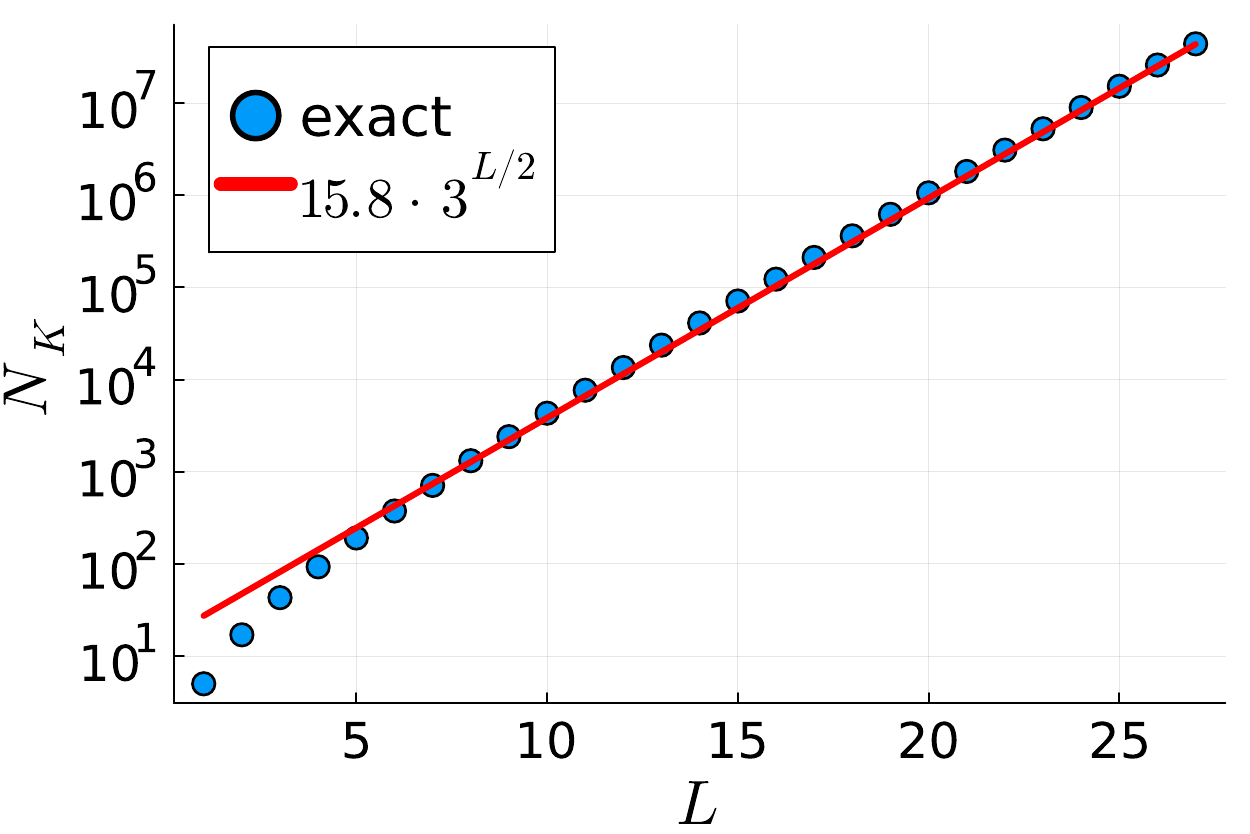}
		\caption{The number $N_K(L)$ of Krylov sectors, equal to the number of distinct $\bs$ group elements whose geodesic length is less than or equal to $L$.
		}
		\label{fig:bsgrowth}
	\end{figure}
	
	Because of the tree structure, it is clear that the group volume
	\be N_K(L) \sim \l^L\ee 
	grows exponentially with $L$ for some constant $\l>1$. A naive guess for the value of $\l$ is as follows. First, we realize that the exponential growth comes from the tree structure, and that motion by one node on this tree is always possible though the use of at most two group generators (since to perform an arbitrary move on the tree one must multiply by either $b$, $b\inv$, or $ab$). Since the number of points at depths $d \leq L$ of a 3-regular tree goes like $3^L$, we therefore estimate $N_K(L) \sim 3^{L/2}$, giving $\l \approx \sqrt3$. This estimate is actually extremely close to the numerically computed scaling, as we show in Fig.~\ref{fig:bsgrowth}.

	\sss*{Dehn function} 
	
	We now turn to computing the Dehn function and expansion length of $\bs$. The Dehn function must be large, since it is easy to construct words with $\dehn(w) \sim 2^{|w|}$ \cite{gersten1992dehn}: 
	
	\bpro \label{prop:wbig}
	Define $w_n \triangleq b^{-n} a b^{n}$, so that $w_n \sim a^{2^n}$.\footnote{$w_n$ is nearly a geodesic for $a^{2^n}$; the true geodesic with our presentation is instead $b^{-(n-1)} a^2 b^{n-1}$, which is one character shorter than $w_n$. } Then 
	the word 
	\be \label{wbig} w_{big} = w_n a w_n\inv a\inv\ee  
	has area 
	\be \label{bigarea} \dehn(w_{big}) = 2^{n+1}-2 \sim 2^{|w_{big}|}.\ee  
	
	\epro 
	
	A visual illustration of this statement is given in Fig.~\ref{fig:bs_geometry}. The proof, which is a condensed version of the original proof in Ref.~\cite{gersten1992dehn}, is as follows: 
	
	\bproof 
	
	To determine $\dehn(w_{big})$, we need to find the minimal number of relations needed to turn $w_{big}$ into the identity word. Geometrically, this corresponds\footnote{Up to the aforementioned $O(|w_{big}|^2)$ contributions.} to the number of 2-cells in the Cayley complex $\cg_\bs$ that form a minimal spanning surface $S_{w_{big}}$ with $w_{big}$ as its boundary. 
	Note that the loop defined by $w_{big}$ is entirely contained within two sheets of $\cg_\bs$. It is furthermore clear that if we only consider bounding surfaces $S_{w_{big}}$ contained within these two sheets, the minimal bounding surface has area 
	\be 2\sum_{k=0}^{n-1}2^k =  2^{n+1} - 2.\ee 
	Therefore we need only show that this area cannot be reduced by considering surfaces that extend into other sheets. The reader can verify that this is true, since a 2-cell of $S_{w_{big}}$ that lives on any other sheet will necessarily make an unwanted contribution to $\p S_{w_{big}}$. More formally, we can recognize that, being homeomorphic to the product of $\rr$ with a 3-tree, the Cayley 2-complex is contractible, and thus the $S_{big}$ found above is the unique minimal bounding surface. 
	\eproof

	Note that by considering a homotopy which shrinks $w_{big}$ down to the identity by first making the loop narrower along the $a$ axis before shrinking it along the $b$ axis, the length of the word does not parametrically increase. This is an intuitive explanation for the following fact: 
	\begin{fact}\label{fact:bsexp}[\cite{gersten1993isoperimetric}]
		$\bs$ has linear expansion length, $\el(L) \sim L$. 
	\end{fact}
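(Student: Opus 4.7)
The plan is to establish $\el(L) \gtrsim L$ trivially — any word in $K_{e,L}$ has length $L$, so intermediate words in a derivation have length at least $L$ — and then focus on the upper bound, constructing, for every $w \in K_{e,L}$, an explicit derivation $D(w \sqig \tte^L)$ whose intermediate words all have length at most $CL$ for some absolute constant $C$. The key structural ingredient I would exploit is that $\bs(1,2)$ is an HNN extension of $\langle \tta \rangle \cong \zz$ with the stable letter $\ttb$, so every element $g$ admits a Britton normal form $c_g = \ttb^{-p}\, u(\tta)\, \ttb^q$ where $u(\tta)$ is a reduced $\tta$-word subject to the usual no-pinch condition. A routine computation in the matrix representation \eqref{matrep} shows that this normal form can be taken to be a geodesic, so $|c_g| = |g|$.

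The strategy is to use these normal forms as an \emph{asynchronous combing}: I would verify that for any generator $\mathtt{s} \in \{\tta^{\pm 1}, \ttb^{\pm 1}\}$, the pair of words $c_g$ and $c_{g\mathtt{s}}$ can be traversed (each side allowed to pause) so that the two traversal points label group elements at bounded distance in $\cg_\bs$ at every moment. Given the combing, the linear bound on $\el$ follows by a standard van Kampen argument: write $w = \mathtt{s}_1 \cdots \mathtt{s}_L \in K_{e,L}$ and consider the sequence of combings $c_e, c_{\mathtt{s}_1}, c_{\mathtt{s}_1\mathtt{s}_2}, \ldots, c_{\mathtt{s}_1 \cdots \mathtt{s}_L} = c_e$. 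Interpolating between consecutive combings using the asynchronous fellow-traveller property yields a sequence of local $R$-moves that transforms $w$ into $\tte^L$; at each step the intermediate word is a concatenation of a prefix of the current combing (length $\leq L$) with the suffix of $w$ not yet processed (length $\leq L$), so every intermediate word has length $O(L)$.

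The main obstacle is establishing the asynchronous fellow-traveller property itself — this is where the hierarchical tree-of-sheets geometry of $\cg_\bs$ genuinely enters. The subtle case is right-multiplication by $\ttb^{-1}$: the normal form $\ttb^{-p} u(\tta) \ttb^q$ gets modified by one application of the pinch $\ttb \tta^{2k} \ttb^{-1} = \tta^k$, which can halve (or, in the other direction, double) the length of the $\tta$-block. Done synchronously this breaks fellow-travelling, but allowing the ``longer'' side to pause while the ``shorter'' side waits lets the $\tta$-block be processed one pinch at a time, keeping the distance between the two traversal points bounded by an absolute constant throughout. This case analysis — enumerate $\mathtt{s}$ and the sign of $q$, and check that the pinch can be executed with bounded excursion — is the only place real work is needed; the rest of the argument is bookkeeping.

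Once the upper bound is in hand, combining with the trivial lower bound gives $\el(L) \sim L$ as claimed, and in fact the derivation constructed above has the additional virtue that it also realises the optimal $\el(w_{\mathrm{large}}(n)) = O(n)$ homotopy of Fig.~\ref{fig:word_homotopy}, since the combing naturally processes the outermost $\ttb^{\pm 1}$ pair first and recurses inward.
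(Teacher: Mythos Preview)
The paper does not actually give its own proof of this fact: it is cited from \cite{gersten1993isoperimetric}, with only the intuitive ``narrow before shrink'' homotopy of Fig.~\ref{fig:word_homotopy} offered as explanation, and a remark in App.~\ref{app:primer} that the result follows from $\bs(1,2)$ being asynchronously combable. Your high-level strategy---exhibit an asynchronous combing and deduce $\el(L)\lesssim L$ by the standard van Kampen interpolation---is therefore exactly the route the paper points to.

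There is, however, a genuine gap in your implementation. You assert that the Britton/canonical form $c_g = \ttb^{-p}\,u(\tta)\,\ttb^q$ ``can be taken to be a geodesic, so $|c_g|=|g|$.'' This is false. The paper's own citation of Burillo--Elder (the proposition around \eqref{geoest}) gives $|g(w_{knl})| \sim k+l+\log_2|n|$, whereas the word length of the canonical form is $|w_{knl}| = k+|n|+l$. Thus the canonical form can be \emph{exponentially} longer than the geodesic: $\tta^{2^n}$ has canonical form of length $2^n$ but geodesic length $\sim 2n$. With this choice of combing, the step ``each intermediate word is a combing prefix (length $\leq L$) concatenated with a suffix of $w$ (length $\leq L$)'' collapses, because the combing prefix can itself have length $2^{O(L)}$.

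The repair is to use combing lines that are themselves of length $O(|g|)$---for instance, the geodesic-type paths that encode the $\tta$-exponent in binary via $\ttb^{\pm1}$ excursions (this is the asynchronous automatic structure of \cite{epstein1992word}, and is implicit in the Burillo--Elder geodesic description). With \emph{that} combing your fellow-traveller argument does go through, and the delicate $\ttb^{-1}$ case you flag is then precisely the ``narrow before shrink'' move of Fig.~\ref{fig:word_homotopy}. So the architecture of your proof is right; only the specific combing must be replaced.
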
 
	
	The previous proposition immediately implies the existence of exponentially many (in $L$) elements of $K_{e, L}$ with exponentially large (also in $L$) area. The Dehn function thus grows at least as $\dehn(L) \gtrsim 2^L$. We now show that this bound is in fact tight. The existing proof of this fact in the mathematical literature appears to be to note that $\bs$ is an ``asychronously automatic'' group \cite{epstein1992word}, and that any such group has $\dehn(L) \lesssim 2^L$ \cite{baumslag1991automatic}. In the following, we give a more elementary proof:
	\bpro 
	$\bs$ has exponential Dehn function: 
	\be \dehn(L) \sim 2^L.\ee 
	\epro 
	\bproof 
	The construction above tells us that $\dehn(L) \gtrsim 2^L$. A matching upper bound can be proven using an algorithm which converts an input word to a particular standard form.
	
	This is done
	by simply moving all occurrences of $b$ in $w$ to the left and all occurrences of $b\inv$ to the right, duplicating $a$ and $a\inv$s along the way as needed and (optionally, for our present purposes) eliminating $bb\inv$ pairs as they are encountered. 
	
	It is clear that at most an exponential in $L$ number of additional $a$s are generated during this process of shuffling the $\ttb$s and $\ttb\inv$s around (as usual, this means exponential up to a polynomial factor, here linear in $L$), so that the resulting word is 
	\be \label{bs_canform} w' = \ttb^k w_a \ttb^{-l},\ee 
	where $w_a$ is a word containing only $\tte$, $\tta,\tta\inv$ and with length $|w_a| \lesssim 2^L$. Since $w\sim e$, we know that $k=l$ and that $w_a\sim \tte$. Thus an additional $2^L$ relations suffice to reduce $w_a$ to $\tte$ and hence $w'$ to $\tte$ -- thus, $2^L$ is also an upper bound on $\dehn(L)$. 
	\eproof 
	
	We saw in Fact~\ref{fact:bounds_on_mcke} that $\bs$ has linear expansion length, meaning that there exists constants $C,D$ such that $\el(w) \leq C L + D$ for all $w \in K_{e, L}$. The argument in Ref.~\cite{gersten1993isoperimetric} however does not tell us whether $C=1$ or $C>1$, which is need for determining whether or not $\bs$ exhibits fragile fragmentation. The following proposition answers this question in the affirmative: 
	\begin{proposition}\label{prop:bs_linear_exp}
		There exists constants $\a,D$ with $\a>0$ such that 
		\be \label{elbsapp} \el(L) \geq (1+\a)L + D. \ee 
	\end{proposition}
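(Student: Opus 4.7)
The plan is to exhibit an explicit family of identity words whose null-homotopies demonstrably require transit through words of length strictly greater than their own. I would take the family $w_n \equiv w_{\mathrm{large}}(n) = b^{-n} a b^n a b^{-n} a^{-1} b^n a^{-1}$ of Proposition~\ref{prop:wbig}, which has length $L_n = 4(n+1)$ and bounds a combinatorially unique spanning disk $S_n \subset \cg_\bs$ (unique because $\cg_\bs$ is contractible, being the universal cover of the presentation complex of an HNN extension of $\mathbb{Z}$ with edge groups $\langle a \rangle, \langle a^2 \rangle$; cf. Fig.~\ref{fig:bs_geometry}). As a result, every null-homotopy of $w_n$ is in bijection with a total order on the $2^{n+1}-2$ pentagonal cells of $S_n$ compatible with the partial order ``a cell may be peeled only when it lies on the current disk boundary.''

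The argument then tracks the evolution of the boundary word length along any such ordering. The only length-changing local moves are variants of $ab \lra ba^2$, so each peel either (i) directly reduces the length by $1$ via a $b^{-1}ab \to aa$ move applied at a current ``apex'' cell, or (ii) must be preceded by a sequence of ``push'' applications of $ab \to ba^2$ (each contributing $+1$ to the length) in order to expose the contracting relation locally, followed by a single $-1$ move. The combinatorial core of the argument is that, because $S_n$ has width doubling at each successive level, peeling the sub-pyramid rooted at any level-$k$ cell forces the boundary word to pass through a configuration requiring at least $\Omega(1)$ additional push-characters beyond those already present, and summing across the $n$ distinct levels at which peelings must occur yields a total forced overhead of $cn = \Omega(L_n)$, giving $\el(L_n) \geq (1+\alpha)L_n + D$.

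The main obstacle is ruling out amortization across distant peeling stages, where the $+1$ length fluctuations at different levels might overlap so that only the maximum (not the sum) appears in any single intermediate word. I would handle this via a ``depth-profile'' invariant tailored to $\bs$: the extra characters created by push moves at level $k$ necessarily sit at depths $\leq n-k-1$ in the evolving word, and cannot be transported to a distant peeling site at level $k' \neq k$ without incurring an additional push sequence of comparable size. An alternative strategy, which might sidestep the explicit case analysis, is to combine the spacetime bound~\eqref{mtspacetimebound} with the refined counting of identity-sector words in~\eqref{id_sector_scaling}; although a naive application only yields $\el(L) \gtrsim L \log_5 2$, restricting the count to words actually reachable from $w_n$ in a null-homotopy of $S_n$ may tighten this to the asserted $(1+\alpha)L$ scaling.
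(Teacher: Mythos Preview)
Your approach via direct peeling combinatorics is plausible in outline but has a real gap at the amortization step, which you correctly flag as the crux. The ``depth-profile'' invariant you sketch is not sharp enough as stated: saying that push-characters created at level $k$ ``sit at depths $\leq n-k-1$'' does not by itself rule out a schedule in which all level-$k$ pushes are created and then immediately consumed by level-$k$ contractions before any level-$(k\pm 1)$ work begins, keeping the maximum intermediate length within $O(1)$ of $L_n$ throughout. To close the argument you must exhibit a \emph{single moment} at which the boundary word is forced to carry $\Omega(n)$ extra characters simultaneously, and nothing in your writeup establishes this. (A secondary issue: null-homotopies are not literally in bijection with shellings of $S_n$, since free expansions, free reductions, and backtracking are permitted; one can argue these only lengthen intermediate words, so restricting to shellings is harmless for a lower bound on $\el$, but this should be said.) Your spacetime-bound alternative, as you already concede, only yields $\el(L)\gtrsim L\log_5 2 < L$ and so cannot reach $\alpha>0$.

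The paper's argument bypasses the peeling combinatorics with a geodesic-length observation that directly supplies the ``single moment'' you are missing. By contractibility of $\cg_\bs$, any null-homotopy of $w_{\rm large}(n)$ must at some stage produce an intermediate loop that passes through the vertex $\tta^m$, for each $0\leq m\leq 2^n$. A based loop through $\tta^m$ decomposes into two arcs from $e$ to $\tta^m$, hence has length at least $2|\tta^m|$ (geodesic length). The problem therefore reduces to maximizing $|\tta^m|$ over $m$: any geodesic for $\tta^m$ with $m\sim 2^n$ must reach height $\sim n$ in its sheet (contributing $\sim 2n$ $\ttb$-characters), and choosing $m$ with alternating binary digits, $m=2^{n-1}+2^{n-3}+\cdots$, forces $\sim n$ additional $\tta$-characters. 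This gives $|\tta^m|\sim 3n$ and hence $\el(w_{\rm large}(n))\geq 6n = \tfrac{3}{2}L$, i.e.\ $\alpha=1/2$ explicitly. This route is both shorter and yields the constant.
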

	\begin{proof}
		It is enough to show that $\el(w) \geq (1+\a)L+D$ for a particular length-$L$ word $w$, which we choose to be $w_{big}$ in \eqref{wbig} with $n$ such that $L = 4(n +1)$. By the contractibility of the Cayley 2-complex $\cg_\bs$, for all $0\leq m\leq 2^n$, any homotopy from $w_{big}$ to the identity must pass through the vertex $\tta^m$ of $\cg_\bs$. Therefore 
		\be \el(w_{big}) \geq \max_{0\leq m \leq 2^n} 2|\tta^m|,\ee
		where $|\tta^m|$ is the geodesic distance (not word length) of $\tta^m$. 
		
		We now argue that there exists an $2^{n-1} < m < 2^n$, an $\a>0$, and a constant $c_1$ such that $|\tta^m| > (2+2\a)n+c_1$. By the contractibility of $\cg_\bs$, any  geodesic of $\tta^m$ will be contained within a single sheet. Furthermore, since $m$ is exponentially large in $n$ in the worst case, the geodesic should reach a height of at least $n-c_2$ on the sheet, where $c_2$ is a sufficiently large $O(1)$ constant.  If the geodesic does not reach such a height, then it must make a much larger number of steps in the $\tta$ direction resulting in a larger perimeter. Letting $n_{|\ttb|}$ ($n_{|\tta|}$) be the number of $\ttb,\ttb\inv$ ($\tta,\tta\inv$) characters  that appear in the geodesic, this shows that $n_{|\ttb|} \geq 2(n-c_2)$, and so $\a \geq 0$. 
		
		To compute the perimeter, we have to determine the number of steps the smallest length word makes in the $\tta$ direction in order to reach $\tta^m$.  Since at its highest point $n_{|\ttb|}/2$ steps down along the $\ttb$ direction are needed, we can intersperse any of these $n_{|\ttb|}/2 \geq n - c_2$ steps with at least one step along the $\tta$ direction.  Note that if two consecutive steps along the $\tta$ direction are taken, these can be pulled past the previous $\ttb$ step to form a single $\tta$ step, thereby reducing the total length of the loop.  Therefore, the minimal length loop has $\ttb$ steps interspersed with at most a single $\tta$ step.  Since there are exponentially many (in $n$) choices of $m$, and since each $\tta^m$ has a unique geodesic (i.e. minimal sequence of $\tta$ and $\ttb$ steps), a simple application of the pigeonhole principle guarantees that $\exists m$ such that $\geq 0.98 n$ steps are along $\tta$.  This can be verified because the total number of placements of less than $0.98 n$ steps along $\tta$ in a sequence of $n_{|\ttb|} \geq (n-c_2)$ steps along $\ttb$ is $\ll 2^n$.  Therefore, there is a $c_4$ such that $\max_{0\leq m \leq 2^n} 2|\tta^m| \geq 2(2 + 0.98)n + c_4 = 5.96 n + c_4$, meaning that \eqref{elbsapp} is satisfied with $\a = 0.49$. 
		
		\begin{comment}
			The value of $\a$ is then determined by $n_{|\tta|}$. Since there are exponentially many (in $n$) choices of $m$, and since each $\tta^m$ has a unique geodesic, most choices of $m$ must yield a geodesic with $n_{|\tta|}$ proportional to $n$. More precisely, for a given $m$ the corresponding value $n_{|\tta|}$ is determined by the runtime of a binary search algorithm applied to $m$, which is maximized for 
			\be m = 2^n - 2^{n-1} + 2^{n-2} - \cdots = 2^{n-1} + 2^{n-3} + \cdots\ee 
			In this case $n_{|\tta|} = n + c_3$ for yet another constant $c_3$, and so there is a $c_4$ such that $\max_{0\leq m \leq 2^n} 2|\tta^m| = 6n + c_4$, meaning that \eqref{elbsapp} is satisfied with $\a = 1/2$. 
		\end{comment}
	\end{proof}
	
	While this demonstrates that words in $K_{e, L}$ must expand by an amount proportional to $L$ before being mapped to the identity word, this statement is only meaningful in the large $L$ limit, and in practice $\el(L)/L$ can be very close to $1$ for modest choices of $L$ (as was seen in the numerics of Sec.~\ref{sec:numerics}). 
	
	\ss*{Average-case complexity}\label{app:bs_average} 
	
	We now turn to examining the {\it average}-case complexity of words in $\bs$ (or more precisely, the complexity of typical words in $\bs$). Unfortunately, only a few results on average-case Dehn functions are known (one example for nilpotent groups is Ref.~\cite{young2008averaged}), and average-case complexity has not been studied for groups with superpolynomial Dehn function. We emphasize that our analysis in this section is {\it not} rigorous and relies on several reasonable claims backed up by some numerical evidence. Letting $\typdehn(L)$ denote the Dehn function of a {\it typical} word in $K_{e, L}$, we claim
	\begin{claim}\label{claim:typdehn}
		With probability $1-\epsilon$, a randomly chosen word in $K_{e,L}$ for $\bs$ has a Dehn function 
		\be \typdehn(L) \gtrsim 2^{f(\epsilon) \sqrt L}.\ee 
		for $f(\epsilon) \to 0$ as $\epsilon \to 0$.
	\end{claim}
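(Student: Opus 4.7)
The approach is to model a uniformly random word $w \in K_{e,L}$ as a simple random walk of length $L$ on the Cayley graph of $\bs$ conditioned to return to the identity, and then show that such a conditioned walk typically reaches depth $\sqrt{L}$ in the $\ttb$-tree and encloses area exponential in its maximum height. Using the matrix representation \eqref{matrep}, the condition $w \sim \tte$ decomposes into a height constraint $h_L \equiv -n_b(w) = 0$ and a horizontal constraint $x(w) := \sum_{s:\, w_s = \tta^{\pm 1}} \sigma_s\, 2^{-h_{s-1}} = 0$. Because the weights $2^{-h}$ decay exponentially in $h$, the horizontal constraint is dominated by the low-height $\tta^{\pm1}$-moves and has a negligible effect on the marginal law of $(h_t)$; conditional on $w \in K_{e,L}$, the height profile is therefore well-approximated by a length-$L$ lazy random walk bridge on $\mathbb{Z}$.

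With this model in place, standard diffusive estimates for random walk bridges give, for each $\epsilon > 0$, the existence of $c(\epsilon) > 0$ with $c(\epsilon) \to 0$ as $\epsilon \to 0$ such that
\[
\Pr\!\Big( \max_t h_t \;\ge\; c(\epsilon)\sqrt{L} \Big) \;\ge\; 1 - \epsilon/2.
\]
This handles the first half of the claim. It remains to show that such a deep excursion generically forces the Dehn area to be at least $2^{\alpha c(\epsilon)\sqrt{L}}$ for some constant $\alpha > 0$, yielding the claim with $f(\epsilon) = \alpha c(\epsilon)$.

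For the area lower bound I would argue as follows. Fix the first up-excursion of the height process reaching depth $d = c(\epsilon)\sqrt{L}$, during which the walk makes $\Theta(d)$ height steps and, in expectation, a constant fraction of its $\tta^{\pm 1}$-characters at height $d$. Since the horizontal constraint $x(w)=0$ imposes essentially no condition on $\tta$-content at heights of order $d$ (the weights $2^{-d}$ being negligibly small), the placement of $\tta$, $\tta^{-1}$, $\tte$ at the top of the excursion is approximately uniform, and with probability $\ge 1-\epsilon/2$ there is at least one $\tta^{\pm 1}$ at height $d$ whose contribution is not locally cancelled at that height. Such an uncancelled move creates a horizontal displacement $\sim 2^{-d}$ that must ultimately be absorbed by moves at lower heights, and by the contractibility of the Cayley $2$-complex this absorption proceeds through a filling disk whose area is $\gtrsim 2^d$, exactly as in the $w_{\mathrm{large}}(n)$ construction of Proposition~\ref{prop:wbig}.

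The main obstacle is making the notion of ``uncancelled $\tta$ at height $d$'' into a rigorous lower bound for the minimal filling area. The naive statement—that any surplus $\tta^{\pm1}$ at height $d$ forces area $\gtrsim 2^d$—is subtle because, a priori, local cancellations elsewhere in $w$ or interactions between distinct sub-excursions could shortcut the filling. The cleanest way around this is to exploit that the Cayley $2$-complex is contractible (homeomorphic to $T_3 \times \mathbb{R}$, cf.\ Sec.~\ref{sec:bsel}), so the minimal filling disk is unique and can be read off combinatorially from $w$; one can then express the area as a sum of level contributions and lower-bound the total by the contribution from height $d$, together with a second-moment estimate over the conditional distribution of $\tta$-placements to show the bound holds with the claimed probability. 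To our knowledge this analysis has not been carried out rigorously in the geometric group theory literature even for $\bs(1,2)$, and we regard a fully rigorous proof as an interesting direction for future work.
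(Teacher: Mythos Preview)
Your high-level strategy matches the paper's---show the height profile reaches $\sim\sqrt{L}$, then argue this forces area $\gtrsim 2^{\sqrt{L}}$---but your area mechanism has a genuine gap, not merely an incompleteness. First, a sign issue: with $h_t=-n_b(w_{1..t})$ the matrix representation \eqref{matrep} gives $A_t=2^{h_t}$ and hence $B=\sum_s\sigma_s\,2^{+h_{s-1}}$, not $2^{-h_{s-1}}$; the horizontal constraint is dominated by \emph{high}-$h$ moves, so your stated reason for why conditioning on $B=0$ leaves the height marginal unchanged is backwards (the conclusion may still hold---the paper's numerics in Fig.~\ref{fig:postselected_geodist} support it---but not for this reason). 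More importantly, an ``uncancelled $\tta^{\pm1}$ at maximal height $d$'' does \emph{not} force ${\rm Area}(w)\gtrsim 2^d$. Take $w=\ttb^{-n}\tta\,\ttb\,\tta^{-2}\,\ttb^{n-1}$: the height reaches $n$, there is a single uncancelled $\tta$ at height $n$, the word lies in $K_e$, yet ${\rm Area}(w)=1$ (one application of $\tta\ttb=\ttb\tta^2$ plus free reductions). The cancellation happens one level down \emph{on the same sheet}, and the filling disk never approaches the $\tta$-axis.

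The paper's heuristic supplies the missing ingredient: the tree structure of $\cg_\bs$, which your $\mathbb{Z}$-valued height process cannot resolve. It splits $w=w_Lw_R$, argues each half contains a Dyck-type excursion of depth $\sim\sqrt{L}$, and then---crucially---argues that with high probability the two excursions lie on \emph{different sheets} of the $b$-tree, since their projections onto the $3$-regular tree have branch point $Br(w_L,w_R)=O(1)$ (see the analysis around Fig.~\ref{fig:branchpoints}). Contractibility of the Cayley $2$-complex then forces any filling to sweep through the wedge between the sheets down at the $\tta$-axis, giving area $\sim 2^{\sqrt{L}}$. The same-sheet/different-sheet dichotomy is exactly what distinguishes the area-$1$ counterexample above from $w_{\rm large}(n)$, and it is invisible in your formulation.
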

	{\it Rigorously } proving this claim---which we believe is within reach---constitutes an interesting direction for future research. 
	
	\sss*{Distribution of geodesic lengths for random words}
	
	$\typdehn(L)$ measures the typical area of words in $K_{e, L}$. The difficulty in computing $\typdehn(L)$ lies in the fact that it is hard to sample words from $K_{e, L}$, due to the constraint that such words form closed loops in the Cayley 2-complex $\cg_\bs$. A much easier task is to determine the geometry of typical length-$L$ paths in $\cg_\bs$, regardless of whether these paths form closed loops. Understanding this easier problem will allow us to build intuition for computing the scaling of $\typdehn(L)$. 
	
	The precise question we address is this: given a random length-$L$ word $w$, what is the geodesic distance $|g(w)|$ of $w$? We can efficiently obtain an answer numerically by randomly sampling words $w$ and approximately computing their geodesic distances. In order for this procedure to be efficient, it is helpful to realize that any word may always be brought into the following canonical form: 
	\begin{proposition}[\cite{burillo2015metric}]
		Any word in $\bs$ can always be mapped to a unique standard form 
		\be \label{wknl} w_{knl} = b^k a^n b^{-l},\qq k,n,l\in \zz, \, \, n,l\geq 0,\ee 
		where $n$ can be even only if at least one of $k,l$ are zero.\footnote{If both are nonzero then we can simplify $b^k a^{2n} b^{-l} \sim b^{k-1}a^n b^{-(l-1)}$.}
	\end{proposition}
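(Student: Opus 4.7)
The plan is to establish existence via an explicit rewriting algorithm driven by the defining relation $ab = ba^2$, then deduce uniqueness from the faithful matrix representation in \eqref{matrep}. For existence, I extract three rewriting moves from the defining relation: the two shuttle moves $a^n b = b\, a^{2n}$ and $b^{-1} a^n = a^{2n}\, b^{-1}$ (each valid for any $n$), and the reduction $b\, a^{2m}\, b^{-1} = a^m$ (valid only when the central $a$-exponent is even). Starting from an arbitrary word, I iteratively apply the shuttle moves to migrate every $b$ toward the far left of the word and every $b^{-1}$ toward the far right, coalescing consecutive $a$-symbols and cancelling any $bb^{-1}$ or $b^{-1}b$ pairs that end up adjacent. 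The migration terminates because each shuttle move strictly decreases a suitable monovariant, such as the sum over all $b$-symbols of the distance from the right end of the word plus the sum over all $b^{-1}$-symbols of the distance from the left end, leaving a word of the form $b^k a^n b^{-l}$. I then apply the reduction move whenever $k \geq 1$, $l \geq 1$, and $n$ is even; each application decreases $k + l$ by two, and the procedure halts at a word obeying the stated parity condition.

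For uniqueness, direct matrix multiplication in \eqref{matrep} gives
\begin{equation}
	b^k a^n b^{-l} = \bpm 2^{l-k} & 2^{-k}\,n \\ 0 & 1\epm,
\end{equation}
so any canonical triple $(k,n,l)$ determines the pair $(\mu,\nu) \triangleq (2^{-k}n,\, l-k)$; conversely, given $(\mu, \nu)$, I recover $(k, n, l)$ by writing $\mu$ in its lowest-terms dyadic form $\mu = n/2^k$ (with $n$ odd whenever $k \geq 1$) and setting $l = k + \nu$. The parity condition ``$n$ even implies $k=0$ or $l=0$'' is exactly what rules out the redundant substitution $(k,n,l)\leftrightarrow(k+1,2n,l+1)$ corresponding to the inverse of the reduction move. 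Faithfulness of the matrix representation then promotes this to a bijection between admissible triples and elements of $\bs$, yielding uniqueness.

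The main obstacle will be the careful case analysis needed to verify that the rewriting algorithm always terminates in a triple $(k,n,l)$ satisfying the prescribed sign constraints simultaneously with the parity condition --- the intermediate forms produced by migration need not satisfy the final sign constraints, and sorting out when signs can be absorbed into adjacent powers of $a$ requires checking several cases. A secondary subtlety is confluence of the rewriting, i.e., path-independence of the resulting canonical form, but this is automatic once uniqueness is established via the matrix representation, since every valid rewriting path must terminate at the same canonical form determined by the matrix.
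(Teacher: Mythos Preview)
Your approach matches the paper's: existence via the same $b$-left/$b^{-1}$-right shuttling referenced around \eqref{bs_canform}, and uniqueness via the faithful matrix representation \eqref{matrep}. The one loose end is your explicit inverse map: writing $\mu = n/2^k$ in lowest dyadic terms and setting $l = k+\nu$ can produce $l < 0$ --- take the element $a^3 b^2$, where $(\mu,\nu)=(3,-2)$ and your recipe gives $(k,n,l)=(0,3,-2)$. In this boundary case the canonical triple is instead $(k,n,l)=(-\nu,\,\mu\cdot 2^{-\nu},\,0)$ (here $(2,12,0)$), with $n$ even but permitted since $l=0$; this is exactly the ``$B\in\zz \Rightarrow k=0$ or $l=0$'' branch the paper spells out immediately after the proposition. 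You already flag the sign-constraint case analysis as your main obstacle, so you have correctly located where this patch belongs.
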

	\begin{proof} 
		This follows from the arguments around \eqref{bs_canform} or use of the matrix representation \eqref{matrep}.
	\end{proof}

	Since $k,n,l$ are unique, the $w_{knl}$ serve as a set of canonical representatives for each $g$. To get the geodesic of an arbitrary word, we first reduce it to this form, and then make use of the fact that 
	\begin{proposition}[\cite{burillo2015metric}]
		The geodesic distance of a word $w$ is determined by the exponents $k,n,l$ appearing in its canonical form $w_{knl}$ as 
		\be \label{geoest} 
		\frac12(k+l+\log_2|n|) \leq 
		|g(w_{knl})| \leq 4(k+l+\log_2|n|+1)\ee 
		when $n\neq 0$, and $|g(w_{knl})| = |k-l|$ if $n=0$.
	\end{proposition}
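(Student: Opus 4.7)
The plan has two parts: the upper bound will follow from an explicit construction, and the lower bound will come from tracking the matrix representation \eqref{matrep} along any candidate word.

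For the upper bound with $n\neq 0$, I will write $|n|=\sum_{j=0}^r 2^{e_j}$ in binary ($e_0>\cdots>e_r\ge 0$, $r\le\log_2|n|$) and set $\epsilon=\sgn(n)$. Using the amplification identity $\ttb^{-m}\tta\ttb^m = \tta^{2^m}$ (derived from $\tta\ttb=\ttb\tta\tta$), a Horner-style factorization gives
$$
\tta^n \;=\; \ttb^{-e_0}\,\tta^{\epsilon}\,\ttb^{\,e_0-e_1}\,\tta^{\epsilon}\,\cdots\,\ttb^{\,e_{r-1}-e_r}\,\tta^{\epsilon}\,\ttb^{\,e_r}
$$
of length $2e_0+r+1\le 3\log_2|n|+2$. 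Prepending $\ttb^k$ and appending $\ttb^{-l}$ then produces a word for $w_{knl}$ of length at most $k+l+3\log_2|n|+2\le 4(k+l+\log_2|n|+1)$.

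For the lower bound, let $w'=s_1\cdots s_L$ be any word with $\vp(w')=\vp(w_{knl})$, split as $L=L_a+L_b$ by letter type. Define the running $\ttb$-exponent $m_j:=-\log_2[(s_1\cdots s_j)_{11}]$; then $m_\bullet$ is a $\{+1,-1,0\}$ walk with $m_0=0$, $m_L=k-l$, incrementing at each $\ttb$, decrementing at each $\ttb^{-1}$, and stationary at each $\tta^{\pm 1}$. Computing the $(1,2)$-entry of $s_1\cdots s_L$ yields
$$
n \;=\; \sum_{i\,:\,s_i\in\{\tta,\tta^{-1}\}} \epsilon_i\,2^{k-m_{i-1}}.
$$
The triangle inequality forces $|n|\le L_a\cdot 2^{k-m^*}$ with $m^*=\min_j m_j$; combined with $-m^*\le(L_b+l-k)/2$ (the walk cannot descend more steps than its count of $\ttb^{-1}$'s permits), this produces $L_b\ge 2\log_2|n|-(k+l)-2\log_2 L_a$. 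When $n$ is odd, a $2$-adic valuation argument applied to the same identity forces some index $i$ to satisfy $k-m_{i-1}\le 0$, so $m^{\#}:=\max_j m_j\ge k$ and hence $L_b\ge k+l$ (the walk must ascend to height $\ge k$ and end at $k-l$). Averaging the two bounds on $L_b$ via $\max(X,Y)\ge(X+Y)/2$ and then optimizing $L=L_a+L_b$ over $L_a$ (whose optimizer is $O(1)$) will yield $L\ge\tfrac{1}{2}(k+l+\log_2|n|)$.

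The main obstacle will be recognizing and justifying the $2$-adic constraint $m^{\#}\ge k$ for $n$ odd, which supplies the ``$k+l$'' half of the bound; without it the $m^*$-based estimate alone degenerates in the regime $\log_2|n|\ll k+l$ relevant for words like $\ttb^k\tta\ttb^{-l}$. The residual cases should be routine refinements: $n=0$ reduces to $L\ge|k-l|$ from the $(1,1)$-entry alone, while $n$ even (which by the canonical-form uniqueness clause forces $k=0$ or $l=0$, so that $k+l=|k-l|$) requires only the $m^*$-based estimate. The handling of signed cancellations via the triangle inequality and the optimization over $L_a$ are mechanical.
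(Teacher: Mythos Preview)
The paper does not prove this proposition; it is quoted from \cite{burillo2015metric} and used as a black box for the numerical geodesic estimates in App.~\ref{app:bs_details}. So there is no paper proof to compare against, and your proposal stands as an independent argument.

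Your upper-bound construction via the binary expansion of $|n|$ and the Horner-style telescoping product for $\tta^n$ is correct and yields the stated constant $4$. For the lower bound, the identity $n=\sum_i \epsilon_i 2^{k-m_{i-1}}$ extracted from the $(1,2)$ matrix entry is correct, as are the two key consequences you draw: the triangle-inequality estimate combined with $-m^*\le d=(L_b+l-k)/2$ giving bound~(A), and (for $n$ odd) the $2$-adic observation that some $m_{i-1}\ge k$, forcing $u\ge k$, $d\ge l$, and hence $L_b\ge k+l$ (bound~(B)).

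There is a slip in the combining step. Averaging the two $L_b$-bounds as you describe produces $L_b\ge\log_2|n|-\log_2 L_a$, in which the $(k+l)$ contribution has cancelled; optimizing over $L_a$ then gives only $L\ge 1+\log_2|n|$, not the claimed $\tfrac{1}{2}(k+l+\log_2|n|)$. The repair is immediate: retain $L\ge L_b\ge k+l$ from (B) as a separate lower bound on $L$ itself, and average at the level of $L$ rather than $L_b$. From $L\ge\max\bigl(k+l,\,1+\log_2|n|\bigr)\ge\tfrac{1}{2}\bigl(k+l+1+\log_2|n|\bigr)$ the desired inequality follows. With this adjustment your argument is complete, and the residual cases ($n=0$; $n$ even with $k=0$ or $l=0$) go through as you indicate.
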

	
	All that remains is to find a way of determining $k,n,l$ given an arbitrary word $w$. This is done using the matrix representation \eqref{matrep}, in which $w_{knl}$ becomes 
	\be w_{knl} = \bpm 2^{l-k} & n2^{-k} \\0 & 1 \epm.\ee 
	Therefore to find $k,n,l$ for an input string $w$, we proceed as follows. We first find the matrix corresponding to $w$ by explicit matrix multiplication, yielding a result of the form $\bpm A &  B \\ 0 & 1\epm$. We know right away that 
	\be \log_2 A = n_b(w) = k-l,\ee 
	the number of $b$s contained in $w$ minus the number of $b\inv$s. Then: 
	\begin{itemize}
		\item If $B\in \zz$, either $k=0$ or $l=0$. Which one of these scenarios holds depends on $\sgn(n_b)$: if $n_b<0$ then $k=0, l = -n_b$ and $n = B$, while if $n_b > 0$ then $l=0, k=n_b$ and $n=2^{n_b}B$. 
		\item If $B \not\in \zz$, then $k>0$, and $k$ is determined by the number of significant figures after the decimal point when $B$ is represented in binary,\footnote{There is a small subtlety here, since this approach is incorrect if $l=0$ and $n = 2^m n_o$ with $m>0, n_o \in 2\zz+1$. However in this case it is easy to show that the naive values of $(k,n,l)$ one obtains from this procedure in this case are $(k,n,l)_{\rm naive} = (k-m, n_0, -m)$. Since $l_{\rm naive} < 0$, this mistake is easy to identify and correct for.} which then allows one to determine $l$ and $n$. 
	\end{itemize}
	
	\begin{figure}
		\centering 
		\includegraphics[width=.35\tw]{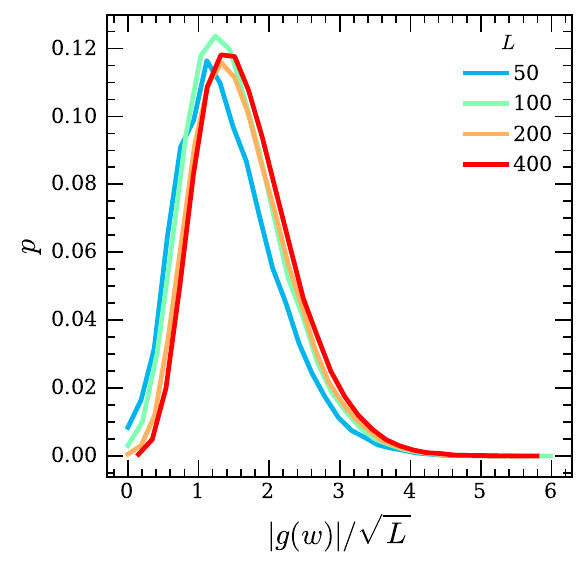}
		\caption{\label{fig:geo_dist} Histogram of the geodesic length estimate $|g(w)| = k+l+\log_2(|n|+1)$ of randomly-chosen length-$L$ words in $\bs$. The scaling collapse indicates that typical words have a geodesic length scaling as $\sqrt L$. }
	\end{figure}
	
	Numerically implementing the above procedure for several values of $L$ gives the histogram shown in Fig.~\ref{fig:geo_dist}, where for simplicity we plot an estimate of $|g(w)|$ as $k+l+\log_2(|n|+1)$, which has the same asymptotic scaling as the true value of the geodesic. 
	From this we see that the probability of getting $|g(w)| = 0$ (namely getting $w \in K_e$) is suppressed (we have already seen that ${\rm Pr}(w\in K_e) \sim e^{-L^{1/3}}$), and that the typical geodesic distance goes as $|g(w)|\sim \sqrt L$. 
	
	\begin{figure}
		\centering 
		\includegraphics[width=.35\tw]{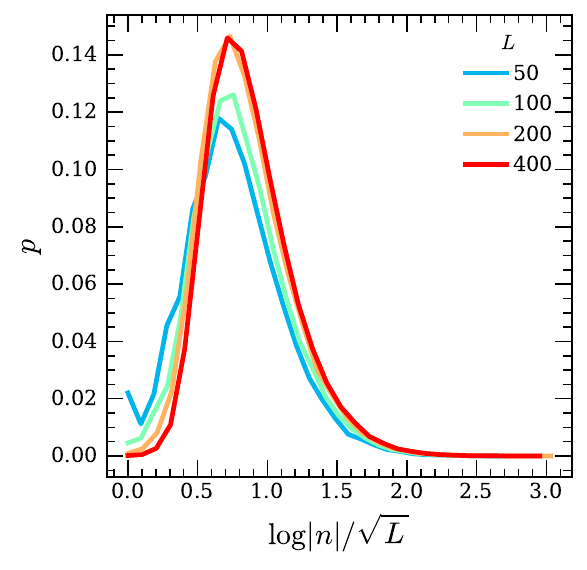}
		\caption{\label{fig:lognplot} Histogram of the distance $|n|$ that a randomly-chosen length-$L$ word proceeds along the $\tta$ axis of $\bs$'s Cayley graph. The scaling collapse indicates that typical words have $|n| \sim 2^{\sqrt L}$. }
	\end{figure}
	
	Since we are sampling random words, $n_b(w) = k-l$ will converge to a Gaussian of width $\sqrt L$; thus a typical word will reach a depth of $\sqrt L$ on the $b$-tree part of the Cayley graph. We also expect the $k+l$ contribution to the geodesic estimate \eqref{geoest} to scale as $\sqrt L$; indeed this can be numerically verified to be the case. The fact that a random word typically has $|g(w)| \sim \sqrt L$ then means that the $\log |n|$ contribution to the geodesic estimate scales as $\log|n| \lesssim \sqrt L$. This means that words which traverse {\it exponentially} far along the $a$ axis of the Cayley graph (namely those with $\log|n|\sim L$) are rare, meaning that we should not expect the Dehn time of typical words to be close to the worst-case result. 
	To determine whether $\log |n| < \sqrt L$ or $\log |n| \sim \sqrt L$, we simply make a histogram of $(\log|n| )/ \sqrt L$, with the result shown in Fig.~\ref{fig:lognplot}. 
	We see that the $\log |n|$ part adds at most an $O(1)$ contribution to the expected geodesic distance, resulting in:
	\begin{observation}\label{obs:geo_distances} 
		A typical randomly-chosen word travels to a depth of $\sim\sqrt L$ along the $b$ tree of $\cg_\bs$, and reaches a distance of $\sim 2^{\sqrt L}$ along the $a$ axis.
	\end{observation}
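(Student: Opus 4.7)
The plan is to reformulate the canonical form data $(k,n,l)$ for a random word as observables of an auxiliary random walk on $\zz$. For a word $w = w_1 \cdots w_L$ drawn uniformly from $\{\tta, \tta\inv, \ttb, \ttb\inv, \tte\}^L$, define the partial sums $s_i \triangleq n_\ttb(w_1 \cdots w_i)$ for $i=0,\dots,L$, giving a lazy simple symmetric random walk on $\zz$ (holding probability $3/5$, with $\pm 1$ steps each of probability $1/5$). The matrix representation in \eqref{matrep} tells us that after multiplying out the $w_i$'s the upper-right entry $B$ of the resulting matrix is
\be
B \;=\; \sum_{i\,:\,w_i\in\{\tta,\tta\inv\}} \ep_i\, 2^{-s_{i-1}}, \qquad \ep_i = \pm 1.
\ee
Reducing this to the canonical form $b^k a^n b^{-l}$ in the matrix representation then identifies $l-k = s_L$, $k = -\min_i s_i$ (generically, since with high probability the fractional binary expansion of $B$ is not cancelled by coincidence), and $n = 2^k B$, so that
\be
k+l \;\sim\; \max_i s_i - \min_i s_i, \qquad n \;=\; \sum_{i\,:\,w_i\in\{\tta,\tta\inv\}} \ep_i\, 2^{\,s_{i-1}-\min_j s_j}.
\ee
This reduces both claims to a question about the range and height-weighted sum of a simple random walk.

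For the first claim (depth $\sim\sqrt L$ on the $\ttb$-tree), I would invoke the standard result that the range $R_L = \max_i s_i - \min_i s_i$ of a lazy simple symmetric random walk of length $L$ is $O(\sqrt L)$ in probability, with $R_L/\sqrt L$ converging in distribution to the range of a Brownian motion on $[0,1]$. Combined with Prop.~\ref{prop:explength_sectorindep} and the identification $k+l \sim R_L$, this gives that a typical word has depth $\Theta(\sqrt L)$ along the tree direction of $\cg_\bs$. The estimate~\eqref{geoest} then yields $|g(w)|\sim \sqrt L + \log_2|n|$, consistent with Fig.~\ref{fig:geo_dist}.

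For the second claim ($|n|\sim 2^{\sqrt L}$), the plan is to show that the shifted walk reaches level $R_L$ at only an $O(1)$ set of times with high probability, so the sum for $n$ is dominated (up to a constant factor) by the terms with $s_{i-1}-\min_j s_j = R_L$, giving $|n|\geq c\cdot 2^{R_L}$ with high probability. More quantitatively, I would condition on the walk's trajectory and argue that the random signs $\ep_i$ at the top level are i.i.d.\ Rademacher, so after conditioning on the number $N_{\rm top}$ of $\tta$-characters at the top level, $\Pr[|n| < 2^{R_L-c}]$ is exponentially small in $N_{\rm top}$ whenever $N_{\rm top}\geq 1$, and $\Pr[N_{\rm top}=0]$ is itself $O(1/\sqrt L)$. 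A union bound over contributions from lower levels (which are geometrically smaller and hence contribute at most an $O(1)$ fluctuation) closes the argument, giving $\log_2|n| = R_L - O(1)$ with probability $1-\ep$ for $L$ large in terms of $\ep$.

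The main obstacle is controlling potential cancellations in the sum defining $n$: a priori it is conceivable that the random signs $\ep_i$ conspire to produce $|n| \ll 2^{R_L}$. The cleanest way I see around this is the two-stage conditioning above (first on the walk trajectory $\{s_i\}$, then on the signs), together with a concentration inequality for Rademacher sums weighted by $\{2^{s_i}\}$. A secondary technical point is the genericity assumption $k = -\min_i s_i$, which can fail on an event of measure zero in the continuum limit but on a positive-probability event for finite $L$; this is handled by noting that the correction is at most $O(1)$ and does not affect the asymptotic scaling.
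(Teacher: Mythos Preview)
Your approach is genuinely different from the paper's, and in a useful direction: the paper establishes this Observation almost entirely through numerics (the scaling-collapse histograms in Figs.~\ref{fig:geo_dist} and~\ref{fig:lognplot}, together with the remark that $n_b(w)=k-l$ is Gaussian of width $\sqrt L$), whereas you attempt an analytical derivation by recasting $(k,n,l)$ as functionals of the lazy simple random walk $s_i$. That reformulation via the matrix representation is correct and is exactly the right way to make the Observation rigorous; the paper does not carry this out.

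That said, your identifications are inverted. From $B=n2^{-k}$ and $B=\sum_i\ep_i 2^{-s_{i-1}}$, the exponent $k$ is (generically) the position of the \emph{last} nonzero binary digit of $B$, i.e.\ $k\approx\max_i s_{i-1}$ over $\tta$-positions, not $-\min_i s_i$. Consequently $n=\sum_i\ep_i\,2^{\max_j s_j - s_{i-1}}$, and the dominant terms sit at the \emph{minimum} of the walk, not the maximum. Your conclusion $\log_2|n|\sim R_L\sim\sqrt L$ survives because the range is symmetric, but your ``top level'' discussion and the subsequent Rademacher argument should be rerun at the bottom of the walk. Relatedly, $k+l\approx 2\max_i s_i - s_L$, which scales as $\sqrt L$ but is not literally the range $R_L$.

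Two smaller issues: the invocation of Prop.~\ref{prop:explength_sectorindep} is a non sequitur (that proposition concerns expansion lengths across $K_g$ sectors, not walk ranges), and the claim that lower levels contribute only an ``$O(1)$ fluctuation'' needs a local-time estimate (e.g.\ that the occupation of level $R_L-c$ is $O(c)$, so $\sum_c 2^{-c}N_{R_L-c}=O(1)$). With these fixes your Rademacher-conditioning plan is a viable route to a rigorous lower bound on $|n|$, which would go beyond what the paper actually proves.
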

	
	\begin{figure}
		\centering 
		\includegraphics[scale=0.4]{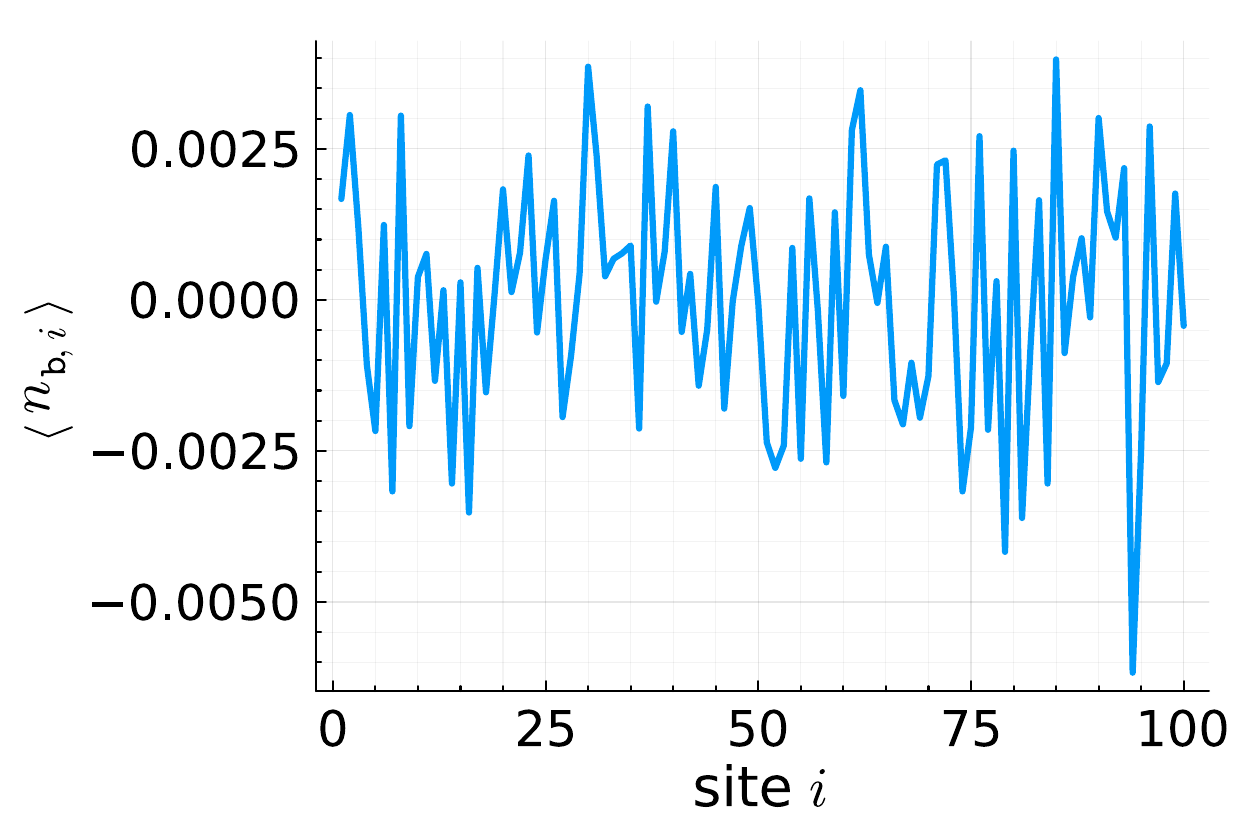}
		\caption{\label{fig:homog} $\lan w |n_{\ttb, i}|w\ran$ averaged over $10^5$ random words $w$ of length $L=100$ in the identity sector $K_{e,L}$ of $\bs$.} 
	\end{figure}

	Before moving on, we note as an aside that a randomly chosen word in $K_{e,L}$ is numerically observed to be essentially homogeneous as far as the conserved $n_\ttb$ density is concerned, meaning that $\frac1{|K_{e,L}|}\sum_{w \in K_{e,L}} \lan w | n_{\ttb,i} | w \ran \approx 0$ for all $i$ (although as far as we can tell there is no symmetry which enforces this expectation value to vanish, so we do not have a clear explanation for this). This is demonstrated in Fig.~\ref{fig:homog}, and is important for numerically determining thermalization times in the manner of Sec.~\ref{sec:numerics}.  
	
	\sss*{Dehn times of typical words} 
	
	The rough intuition leading to claim~\ref{claim:typdehn} is as follows. Let $L\in 2\nn$ for simplicity and consider a random word $w = w_L w_R$ in $K_{e, L}$, with $|w_L| = |w_R| = L/2$. Then to the extent that $w_{L,R}$ behave like random length-$L/2$ words, the midpoint of the loop defined by $w$ will be at a distance $\sim 2^{\sqrt L}$ along the $a$ axis from the origin, as follows from observation~\ref{obs:geo_distances}. Given this, we may expect that area of $w$ also scales as $2^{\sqrt L}$. The only exception is if $w_R$ follows a path which is close to $w_L\inv$ which implies that $w$ subtends little area, but we will argue that such an event is unlikely.  Intuitively, this is because the probability the  $w_R$ and $w_L\inv$ follow different branches should be large.

	We begin with an argument about certain types of words which we will argue are likely to occur as subwords of typical elements in $K_{e, L}$: 
	\begin{claim}\label{dyckclaim}
		Consider a length-$L$ word $w$ chosen randomly subject to the following constraints. First, $n_b(w) = 0$. Second, the returning walk induced on $\zz$ by restricting to the $b,b\inv$ characters of $w$ is constrained to $\zz^{\geq 0}$ (i.e. it is a {\it Dyck walk}).  Put another way, the cumulative sums $n_b(x) = \sum_{j=1}^x (\d_{w_i,b} - \d_{[w_i]_j,b} )$ are positive for all $x$. 
		
		Let $S_{Dyck}(w)$ be the set of all length-$L$ words $w' \sim w$ obeying the above constraint. For $w'$ is drawn uniformly from $S_{Dyck}(w)$, $w$ and $w'$ satisfy $d(w,w') \sim 2^{\sqrt L}$ with high probability.
	\end{claim}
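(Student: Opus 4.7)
The plan is to prove matching upper and lower bounds on $d(w,w')$ at $2^{\sqrt{L}}$ for a typical $w' \in S_{Dyck}(w)$. First I would characterize the group element $g = \vp(w)$ that both $w$ and $w'$ represent. Using the matrix representation \eqref{matrep}, a Dyck walk with $n_\ttb(w)=0$ yields a matrix $\begin{pmatrix}1 & B \\ 0 & 1\end{pmatrix}$, where each $\tta^{\pm 1}$ character inserted at Dyck height $h$ contributes $\pm 2^{-h}$ to $B$. Standard random-walk results imply that a typical length-$L$ Dyck walk reaches maximum height $h_\mathrm{max} \sim \sqrt{L}$; combined with the form $B = n\cdot 2^{-k}$ that appears in the canonical form $\ttb^k \tta^n \ttb^{-k}$ of \eqref{wknl}, this lets me conclude that with high probability $k \sim \sqrt{L}$ and $|n| \sim 2^{\sqrt{L}}$ (up to polynomial factors).

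For the upper bound, I would exhibit an explicit algorithm that transforms $w$ into the canonical form $\ttb^k \tta^n \ttb^{-k}$ by shuffling all $\ttb$s leftward and all $\ttb^{-1}$s rightward, duplicating $\tta$ characters via $\tta\ttb = \ttb\tta^2$ along the way. Although the intermediate words reach length $\sim |n| \sim 2^{\sqrt{L}}$, the total number of applied relations is also $\sim 2^{\sqrt{L}}$. Running the analogous reduction on $w'$ and concatenating through the common canonical form yields $d(w,w') \lesssim 2^{\sqrt{L}}$.

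The lower bound is the technical heart of the argument. I would consider the nullhomotopic loop $\gamma = w(w')^{-1}$ of length $2L$ in $\cg_\bs$ and argue that any spanning disk has area $\gtrsim 2^{\sqrt{L}}$ with high probability over $w'$. The strategy is a filling-area argument that exploits contractibility of $\cg_\bs$: the disk must cover the region in the Cayley 2-complex bounded by the two Dyck paths, and for generic pairs the $\tta$-characters in $w$ and $w'$ are placed at essentially independent heights along their respective walks, so the total horizontal separation between the two paths summed over heights in the sheet structure scales like $\sum_{h=0}^{\sqrt{L}} (\text{local divergence})\cdot 2^h \sim 2^{\sqrt{L}}$. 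This heuristic can potentially be sharpened either via a projection onto the Bass--Serre tree combined with per-sheet isoperimetric bounds, or via an entropy-counting argument observing that the number of derivations of length $t$ starting from $w$ grows only as $C^t$, so that if $|S_{Dyck}(w)|$ is doubly-exponentially large in $\sqrt{L}$ then most $w'$ require $t \gtrsim 2^{\sqrt{L}}$.

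The main obstacle will be making the lower bound fully rigorous. The standard worst-case machinery of geometric group theory (which yields $\dehn(L) \sim 2^L$ for $\bs$) does not directly constrain average-case behavior, and average-case Dehn functions are essentially unstudied outside of Ref.~\cite{young2008averaged}. A complete proof likely requires developing new tools for typical isoperimetric bounds within structured subclasses of words, which we leave as an interesting direction for future work.
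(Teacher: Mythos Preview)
Your upper bound via reduction to canonical form is essentially what the paper does, and your acknowledgment that the lower bound is the hard part (and ultimately heuristic) is honest and matches the paper's own stance. However, your lower-bound sketch misses the specific geometric mechanism that makes the paper's argument work.

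The paper's lower bound does not proceed by summing ``horizontal separations'' between $w$ and $w'$ over heights within a sheet. Instead, the key observation is that $w$ and $w'$, when projected onto the $b$-tree (the Bass--Serre tree, which you mention as one option but do not develop), will with high probability branch onto \emph{different sheets} of $\cg_\bs$ at depth $O(1)$. Concretely, the paper defines the branch point $Br(w,w')$ as the deepest tree vertex visited by both walks, computes the effective transition probabilities on the tree ($p^\nwarrow = 1/3$, $p^\nearrow = 1/6$, $p^\downarrow = 1/2$), and argues (backed by numerics) that $Br(w,w') = O(1)$ with high probability. Once the two paths are on different sheets beyond the branch point, contractibility of $\cg_\bs$ forces the minimal filling surface for the loop $w(w')^{-1}$ to contain \emph{separately} the region between $w$ and the $a$-axis and the region between $w'$ and the $a$-axis, at all depths past $Br(w,w')$. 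Each of these regions has area $\sim 2^{\sqrt{L}}$, and there is no cancellation because they live on disjoint sheets.

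Your heuristic of summing local divergences would not by itself rule out the scenario where $w$ and $w'$ stay on the same sheet and their enclosed areas largely cancel; the different-sheets argument is precisely what forecloses this. Your entropy-counting alternative is plausible but would require controlling $|S_{Dyck}(w)|$, which the paper does not attempt. A smaller point: under the paper's (right-to-left) Dyck convention the canonical form of a Dyck word is simply $\tta^{n_w}$ with $n_w \sim 2^{\sqrt{L}}$, not $\ttb^k \tta^n \ttb^{-k}$ with $k \sim \sqrt{L}$; this does not affect the scaling but does simplify the picture.
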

	{\it Argument: }
	The Dyck walk property means that all $w'\in S_{Dyck}(w)$ are reducible to $a^{n_w}$ for some $n_w$, with $n_w$ the same for all words in $S_{Dyck}(w)$. This can be seen by an inspection of the Cayley graph, or by recalling the canonical form \eqref{wknl} (the walks corresponding to a group element are read right to left, so $b^k a^nb^{-l} \in S_{Dyck}(w)$ only if $l=k=0$). 
	
	Since $w$ was chosen randomly from the set of words obeying the Dyck walk condition, the $b$-walk defined by $w$ is (using simple concentration inequalities) likely to reach a height of $O(\sqrt L)$, namely to have $\max_x n_b(x) \sim \sqrt L$. A random $w$ fulfilling this condition can be seen to be likely to reduce to a word $a^{n_w}$ of length $n_w \sim 2^{\sqrt L}$, which can be seen by following the procedure bringing $w$ into the canonical form \eqref{wknl}. 
	
	\begin{figure}
		\centering 
		\includegraphics[width=.35\tw]{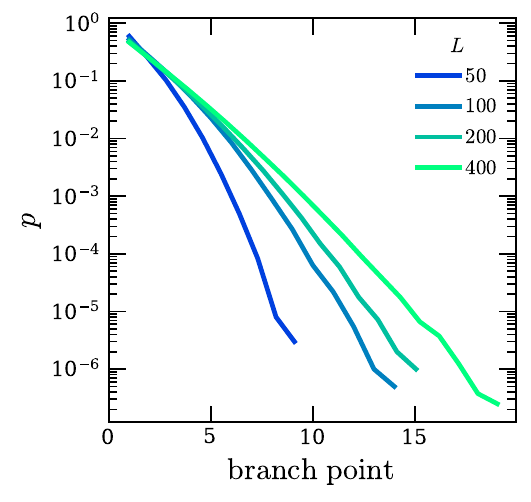}
		\caption{\label{fig:branchpoints} The probability for two returning random walks $w,w'$ on the 3-tree---with transition probabilities as in \eqref{btransprobs}---to have a given branch point $Br(w,w')$, confirming that $Br(w,w') = O(1)$ with constant probability. }
	\end{figure}
	
	Supose now that $n_w$ scales as $2^{\sqrt L}$, and consider a random element $w' \in S_{Dyck}(w)$. We claim that the distance between $w$ and $w'$ is likely to scale as $d(w,w') \sim 2^{\sqrt L}$. This is because $w,w'$ are likely to travel on different sheets of the Cayley graph for nearly all of the length of their walks. More precisely, consider the projection of $w,w'$ onto the b-tree, and define the branch point $Br(w,w')$ as the largest depth of a vertex in the tree shared by the paths undertaken by both $w$ and $w'$. We claim that $Br(w,w')$ is exponentially likely to be $O(1)$, so that $w,w'$ indeed travel on separate sheets for nearly the entirety of their trajectories. 
	
	To demonstrate this more carefully, consider how returning random walks on $\cg_\bs$ behave when projected onto the b-tree. 
	When the walk on the tree moves to larger scales of $\cg_\bs$, we will refer to it as moving ``upwards'', and when it moves to smaller scales we will say that it moves ``downwards''. At a given vertex in $\cg_\bs$, moving upwards while staying on the same sheet can be done by moving directly upwards, or by moving to the left or right by an even number of steps, and then moving up. Moving upwards onto a different sheet on the other hand is done by moving left or right by an odd number of steps before moving up. Finally, moving downwards (on the same sheet) can be done by moving an arbitrary amount either left or right, and then moving downwards. 
	
	The above reasoning shows that the probability to move downwards is equal to the probability of moving upwards, despite the fact that moving upwards can be done on either of two sheets. More precisely, let $ p^\nwarrow, p^\nearrow$ and $p^\doa$ be the probabilities of moving up on the same sheet, up on a different sheet, and down, respectively. Then 
	\bea \label{btransprobs} p^\nwarrow & = \frac45 \frac14 \sum_{k\in \zz} \sum_{l=|k|}^\infty {2l \choose l+|k|} \left(\frac{1}{4}\right)^{2l} = \frac13 \\ 
	p^\nearrow & = \frac12 - p^\nwarrow = \frac16 \\ 
	p^\doa & = \frac12.\eea
	The fact that $p^\doa= 1/2$ means that as far as motion in the tree is concerned, the walk will {\it not} move ballistically upwards or downwards, but will instead move diffusively. Using the above transition probabilities, the expected behavior of $Br(w,w')$ can be calculated analytically using generating functions. The details are rather messy however, and we content ourselves with a numerical demonstration that $Br(w,w') = O(1)$ with constant probability, see Fig.~\ref{fig:branchpoints}. 
	
	Now we return to our discussion of the distance between $w$ and $w'$.  As just argued, $Br(w,w')$ is likely to be $O(1)$. The contractibility of $\cg_\bs$ means that the minimal bounding surface linking $w$ to $w'$ must consist of all cells bounded by $w$ and the $a$ axis that lie at a depth greater than $Br(w,w')$, together with the analogous set of cells for $w'$ (a similar argument arose in the proof of proposition \ref{prop:wbig}, whereby in that case $Br(w,w')=0$). Since each of these contributions to the bounding surface consists of $\sim 2^{\sqrt L}$ cells, we thus have $d(w,w') \sim 2^{\sqrt L}$. \qed 
	
	We are now in a position to argue for the correctness of Claim~\ref{claim:typdehn}. Dyck walks do not constitute a constant fraction of all returning walks: the number of length-$L$ Dyck walks scales as $\sim L^{-3/2} 2^L$), while the number of returning walks goes as $\sim L^{-1/2}2^L$.  Thus only a fraction $\sim 1/L$ of returning walks are also Dyck walks.  Nevertheless, this still gives us enough information to argue that the {\it expected} Dehn function is $\gtrsim 2^{\sqrt{L}}$.  To show that the Dehn function is $\sim 2^{\sqrt{L}}$ with constant probability, the basic idea is to realize that a generic word in $K_e$ is likely to contain at least two subwords of size $\sim \sqrt L$ obeying the Dyck property, allowing for application of the above argument. 
	
	First, when sampling a random word in $K_e$, we can first sample uniformly from all $5^{L/2}$ words $w_L$ of length $L/2$, and then sample from all words $w_R$ of length $L/2$ such that $w_L w_R \sim e$. Since $w_L$ is chosen randomly, we expect that with unit probability in the $L\ra \infty$ limit, the walk defined by $w_L$ reaches a distance of order $\sim 2^{\sqrt L}$ along the $a$ axis of the Cayley graph (this was numerically demonstrated to be the case in Fig.~\ref{fig:lognplot}). Since $|w_L| \sim L$, reaching this distance is only possible if $w_L$ contains an excursion along a particular sheet of the $b$-tree which reaches a maximal depth of $\sim \sqrt L$. $w_L$ must therefore contain at least one subword $w_{L,D}$ which performs a Dyck walk of height $\sim \sqrt L$. 
	
	Given $w_L$, we now consider sampling words $w_R$ such that $w_L w_R \sim e$. Since $w_R$ must also traverse a distance of $\sim 2^{\sqrt L}$ along the $a$ axis, it too must contain a subword $w_{R,D}$ which performs a Dyck walk of height $\sim \sqrt L$. By the contractibility of the Cayley 2-complex, the areas that $w_{L,D}$ and $w_{R,D}$ define with respect to the $a$ axis can cancel out only if the excursions that $w_{L,D}$ and $w_{R,D}$ perform occur along the {\it same} sheets of the $b$-tree for a fraction of their respective walks close to 1. The chance for this to occur is however exponentially small in the depth of the walk (since the number of such walks on the b-tree grows exponentially in their length), which goes as $\sqrt L$. Therefore the areas contributed by $w_{L,D}$ and $w_{R,D}$ are unlikely to cancel, and thus the area of $w = w_L w_R$ will scale as $2^{\sqrt L}$ with probability approaching 1 as $L\ra\infty$. 
	
	\begin{figure}
		\centering 
		\includegraphics[width=.5\tw]{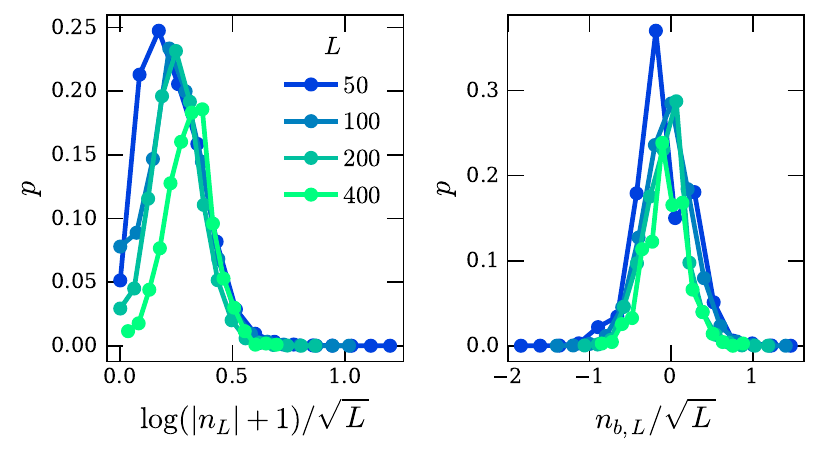}
		\caption{\label{fig:postselected_geodist} Statistical properties of words in $K_{e, L}$. For $w$ chosen randomly from $K_{e, L}$ (with $L$ even), we write $w=w_Lw_R$ with $|w_L| = |w_R| = L/2$, and then bring $w_L$ into canonical form as $w_L = b^{k_L} a^{n_L} b^{-l_L}$. {\it Left:} histograms of $\log(|n_L|+1)/\sqrt L$. We see that typical words in $K_{e, L}$ reach a distance of $\sim 2^{\sqrt L}$ along the $a$ axis of the Cayley graph at their midpoints. {\it Right:} histograms of $n_{b,L} = k_L - l_L$, showing that typical words in $K_{e, L}$ travel a depth of $\sim \sqrt L$ into the b-tree at their midpoints.  } 
	\end{figure}
	
	We end our argument for claim~\ref{claim:typdehn} by providing a degree of evidence from numerics which supports the conclusions of the above arguments. We do this by considering the distance that randomly chosen words in $K_{e, L}$ travel from the origin. More precisely, for each randomly-chosen $w \in K_{e, L}$, we subdivide $w$ into two equal-length halves as $w = w_L w_R$ and then compute the typical amount that $w_L$ extends along the $a$ axis of $\cg_\bs$. Sampling uniformly from $K_{e, L}$ is done by sampling random words and then postselecting on them being in $K_{e, L}$, which for large $L$ is numerically costly on account of the postselection succeeding with probably $\sim e^{-\a L^{1/3}}$. This limits our numerics to relatively modest system sizes, for which finite size effects are rather strong. Nevertheless, the results shown in Fig.~\ref{fig:postselected_geodist} shown that typical words in $K_{e, L}$ have $w_L$ which indeed extends a distance of $\sim 2^{\sqrt L}$ along the $a$ axis of $\cg_\bs$. As argued above, this implies the correctness of claim~\ref{claim:typdehn} (since this is false only if $w_L$, $w_R$ almost always traverse the same path of the b-tree, which we have already demonstrated is unlikely). \qed 
	
	Finally, we note that by following similar logic as in the proof of proposition~\ref{prop:sectorindep}, the same scaling as in claim~\ref{claim:typdehn} can be argued to occur for randomly chosen words, and not just those in $K_e$. More precisely, from claim~\ref{claim:typdehn} we expect that if $w$ is a random length-$L$ word and $w'$ is subsequently randomly chosen from $K_{g(w)}(L)$, then $d(w,w') \sim 2^{\sqrt L}$ with constant probability.

	\section{Iterated Baumslag-Solitar groups} \label{app:iterated_bs}
	
	In this appendix we discuss a construction for modifying the Bamsulag-Solitar group $\bs(n,m)$ so that the corresponding Dehn function scales superexponentially in $L$. From the spacetime bound \eqref{spacetimebound} this immediately allows us to construct examples of models with superlinear fragility lengths. 
	
	In general, variants of $\bs(n,m)$ can be constructed whose Dehn functions grow as fast as $\dehn(L) \sim \exp^{\lfloor \log_2 L\rfloor }(1)$ \cite{baumslag1969non}, a rapidly growing function of $L$. We will however study the simplest variants of this class where the Dehn function scales doubly-exponentially $\dehn(L) \sim 2^{2^L}$. A family of groups 
	$\bs^{(2)}(n,m,o,p)$ with this scaling are defined in their simplest presentations by three generators $\tta,\ttb,\ttc$, where $\tta,\ttb$ satisfy the relations of $\bs(n,m)$ and $\ttb,\ttc$ satisfy those of $\bs(o,p)$: 
	\be \bs^{(2)}(n,m,o,p) = \lan \tta,\ttb,\ttc \,|\, \tta^m\ttb=\ttb\tta^n , \, \ttb^o\ttc=\ttc\ttb^p\ran. \ee
	We will focus throughout on the case $n=o=1, m=p=2$, which provides the simplest nontrivial example. For notational brevity, we will write $\bs^{(2)}(1,2,1,2)$ simply as $\bs^{(2)}$.
	
	The Cayley graph of $\bs^{(2)}$ consists of an infinite hierarchy of $\bs$ Cayley graphs. Multiplying by $\ttc$ (i.e. moving along the $\ttc$ direction) increases the ``scale'' of the $\bs$ Cayley graph generated by $\tta,\ttb$, while multiplying by $\ttc\inv$ decreases the scale. Basic facts about the geometry of $\cg_{\bs^{(2)}}$ such as its growth rate and the size of the identity sector are relatively difficult to address numerically, partly because unlike $\bs$, $\bs^{(2)}$ does not admit a simple linear representation, making it difficult to compute $g(w)$ for an arbitrary word $w$.\footnote{It is in fact an outstanding open question if there exist groups with $\dehn(L) > 2^L$ for which the word problem can be solved in linear time (for instance, a matrix representation for $BS^{(2)}$ would answer this question in the affirmative). }
	
	One may anticipate that $\bs^{(2)}$ should have a superexponential Dehn function due to the hierarchical structure of its Cayley graph, since the $\ttc$ characters can be used to create exponential expansion of the $\ttb$ characters, which in turn can create doubly-exponential expansion of the $\tta$ characters. The following result makes this intuition more precise: 
	
	\bpro 
	$\bs^{(2)}$ has Dehn function 
	\be \dehn(L) \sim 2^{2^L}.\ee 
	\epro 
	
	This result appears to be well-known and was stated in Ref.~\cite{clay2017office} without an explicit proof;  we could not find a proof of this statement and thus provide one below for completeness. 
	The key result we need to complete the proof is known as {\it Britton's lemma} \cite{lyndon1977combinatorial},\footnote{We thank Tim Riley for suggesting this proof strategy.} which is stated as follows: 
	\begin{lemma}
		Let $G$ be a group with presentation $S$. Further let $G$ contain two isomorphic subgroups $H,K\subset G$, with $\phi: H \ra K$ the isomorphism between them. Define the group 
		\be G_\phi \equiv \lan S, t \, | \, t\inv H t = \phi(H) \ran\ee 
		
		and wolog express any $w$ on $\{S,t\}^*$ in the form 
		\be w = g_0 t^{\ep_1} g_1 t^{\ep_2} g_2 \cdots t^{\ep_{n-1}} g_{n-1} t^{\ep_n} g_n,\qq g_i \in G \, \, \ep_i = \pm1 .\ee 
		Britton's lemma states that if $w \sim e$ represents the identity in $G_\phi$, then there must be some $i$ such that either 
		\begin{enumerate}
			\item $n=0$ and $g_0 = e$, 
			\item $\ep_i=-1, \ep_{i+1} = 1$ and $g_i \in H$, or 
			\item $\ep_i = 1, \ep_{i+1} = -1$ and $g_i \in K$.	
		\end{enumerate} 
	\end{lemma}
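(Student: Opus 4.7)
The plan is to prove Britton's lemma via the standard normal form theorem for HNN extensions, using van der Waerden's trick to establish uniqueness. First I would fix once and for all systems of left coset representatives $T_H \subset G$ for $H$ and $T_K \subset G$ for $K$, with the identity $e$ representing the trivial coset in both. Call a word $w = g_0 t^{\ep_1} g_1 \cdots t^{\ep_n} g_n$ a \emph{normal form} if (a) no pinch occurs, i.e.\ there is no $i$ with $\ep_i = -1, \ep_{i+1} = 1, g_i \in H$, nor with $\ep_i = 1, \ep_{i+1}=-1, g_i \in K$, and (b) for each $0 < i < n$, the element $g_i$ is the chosen coset representative, drawn from $T_H$ if $\ep_i = -1$ and from $T_K$ if $\ep_i = +1$. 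Britton's lemma will follow from the claim that every element of $G_\phi$ is represented by a unique normal form.

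Next I would show \emph{existence} of a normal form for any element of $G_\phi$. Given an arbitrary word representing $g \in G_\phi$, one first eliminates pinches: any subword $t^{-1} h t$ with $h \in H$ is replaced by $\phi(h) \in G$, and $t k t^{-1}$ with $k \in K$ is replaced by $\phi^{-1}(k)$; these substitutions strictly decrease $n$, so the process terminates in a pinch-free word. One then normalizes the interior $g_i$'s by writing each as $h \cdot g_i'$ (resp.\ $k \cdot g_i'$) with $g_i' \in T_H$ (resp.\ $T_K$) and absorbing the $H$ or $K$ factor through the adjacent $t^{\pm 1}$ into its neighbor using $t^{-1} h = \phi(h) t^{-1}$ (resp.\ $t k = \phi^{-1}(k) t$). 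This preserves the pinch-free property and yields a normal form in finitely many steps.

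The main obstacle is \emph{uniqueness}, for which I would use van der Waerden's trick. Let $N$ denote the set of all formal normal-form sequences. I would define, for each $g \in G$ and for $t$ and $t^{-1}$, a permutation of $N$ describing ``left multiplication'' acting on normal forms: left-multiplication by $g$ alters only the leading factor $g_0$; left-multiplication by $t$ either prepends a new block $t \cdot 1$ if the current leading $\ep_1 \ne -1$, or cancels with $t^{-1}$ if $\ep_1 = -1$ and $g_0 \in K$ (carrying a residual $\phi^{-1}(g_0)$ into the next block), and symmetrically for $t^{-1}$. The delicate step is verifying that these permutations satisfy the defining relations of $G_\phi$, in particular that the action of $t^{-1} h t$ on $N$ equals the action of $\phi(h)$ for every $h \in H$; this has to be checked case-by-case depending on the sign of $\ep_1$ and the membership of $g_0$ in $H$ or $K$. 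Once this is verified, the map $G_\phi \to \mathrm{Sym}(N)$ is a well-defined homomorphism, and two distinct normal forms send the empty sequence to distinct elements, so they cannot represent the same element of $G_\phi$.

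To finish, suppose $w = g_0 t^{\ep_1} g_1 \cdots t^{\ep_n} g_n$ represents the identity in $G_\phi$ but contains no pinch. Then the existence procedure needs only to normalize the interior $g_i$'s, which preserves the sequence $(\ep_1, \ldots, \ep_n)$, yielding a normal form with the same $n$. The identity is manifestly represented by the normal form consisting of the single element $e \in G$, i.e.\ $n = 0$ and $g_0 = e$. By uniqueness these two normal forms must coincide, forcing $n = 0$ and $g_0 = e$, which is case~1 of the lemma. Contrapositively, if $n > 0$ then $w$ must contain a pinch, yielding case~2 or case~3. I expect the bookkeeping in verifying the relations in the van der Waerden action to be the primary technical hurdle, as the definitions of the permutations unavoidably bifurcate on the leading block of the normal form.
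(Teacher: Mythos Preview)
Your outline is correct and is precisely the standard textbook proof of Britton's lemma via the normal form theorem for HNN extensions. The paper, however, does not prove this lemma at all: it is stated with a citation to Lyndon--Schupp and then used as a black box in the analysis of $\bs^{(2)}$. So there is nothing to compare your argument against in the paper itself; what you have written is essentially the proof one finds in the cited reference.

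One minor bookkeeping point worth tightening: you speak of \emph{left} coset representatives $T_H, T_K$ but then write $g_i = h \cdot g_i'$ to absorb the $H$-factor leftward through $t^{\ep_i}$, which is a \emph{right} coset decomposition. Relatedly, whether $g_i$ is normalized against $T_H$ or $T_K$ should be governed by the sign of the $t$ it is being pushed through (in your leftward-absorption convention this is $\ep_i$, and the rule $t^{-1}h = \phi(h)t^{-1}$ requires $h \in H$, so $\ep_i = -1$ indeed pairs with $T_H$ as you state, but the cosets are right cosets). Finally, after absorbing a factor into $g_{i-1}$ that block must itself be renormalized, so the existence step terminates cleanly only if you process the $g_i$ from right to left. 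None of this affects the validity of the approach.
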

	
	This gives us a way of simplifying any word $w \sim e$ representing the identity in $G_\phi$. Case 1 above is trivial. In case 2, we can replace the occurrence of $t\inv h t$ with $\phi(h)$, while in case 3 we may replace $t k t\inv$ with $\phi\inv(k)$. After doing this reduction, we will be left with a new word $w'$ which still represents $e$, and can apply the lemma again. Iterating, we are guaranteed to eliminate all $t$s from any $w\sim e$ in $\{S,t\}^*$ to obtain a new word $w_G \in \{S\}^*, w_G \sim e$. For instance, the group $\bs(1,2)$ is simply $G_{\phi}$ for the choices $G = \zz$, $H = G$, and $K = 2\zz$. 
	
	We now return to a proof of the Dehn function scaling: 
	
	\bproof We first construct a lower bound. Define the word $w_n = \ttc^{-n} \ttb \ttc^{n}$, so that $w_n \sim \ttb^{2^n}$. Then feed this word into the construction of the large-area word $w_{big}$ for $\bs$, by defining $w_n' = w_n a w_n\inv$. Then we claim the word 
	\be w_{huge} = (w_n')\inv \tta w_n' \tta\inv \ee 
	has area 
	\be \dehn(w_{huge}) \sim 2^{2^n},\ee 
	which is doubly exponential in $|w_{huge}|$. This follows by an argument similar to the one we gave for the area of $w_{big}$ in $\bs$. The tree-like structure of the sheets of the $\bs$ Cayley graph give tree-like structures both for words built from $\ttb,\ttc$ and those built from $\tta,\ttb$. Letting $\wt w_{huge} = \ttb^{-2^n} \tta \ttb^{2^n}$, this means that
	\be \dehn(w_{huge}) = \dehn(\wt w_{huge} \tta \wt w_{huge}\inv \tta\inv) + O(2^n).\ee 
	But using our results from our study of $\bs$, we know that $ \dehn(\wt w_{huge} \tta \wt w_{huge}\inv \tta\inv) \sim 2^{2^n}$. Thus $\dehn(L) \geq 2^{2^L}$ asymptotically. 
	
	We now need to provide a matching upper bound. We do this by combining Britton's lemma with our earlier result about $\bs$. Note that $\bs^{(2)}$ can be obtained from $\bs$ using just the type of extension as appears in Britton's lemma, where $G = \bs, H = \lan \ttb\ran, K = \lan \ttb^2\ran$. Then we know that if we are given $w \sim \tte$ where $|w|=L$ in $\bs^{(2)}$, we can obtain a word $w'\sim \tte$ in $\bs$ after at most $O(L)$ applications of $\ttc\inv \ttb \ttc = \ttb^2$. The maximum amount that $|w|$ can grow by under these substitutions is $O(2^L)$. Thus an upper bound on $\dehn(L)$ in $\bs^{(2)}$ can therefore be obtained by an upper bound on $\dehn(2^L)$ in $\bs$. Using our previous result on the latter, we conclude 
	\be \dehn(L) \lesssim 2^{2^L},\ee 
	and thus when combined with the lower bound, we also have that $\dehn(L) \sim 2^{2^L}$. 
	\eproof 
	
	$\bs^{(2)}$ also provides an example with a superlinear expansion length: 
	\begin{corollary}
		$\bs^{(2)}$ has exponential expansion length, 
		\be \el(L) \sim 2^L.\ee 
	\end{corollary}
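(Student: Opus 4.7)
The plan is to prove the two inequalities $\el(L) \gtrsim 2^L$ and $\el(L) \lesssim 2^L$ separately, using complementary tools that have already been developed in the paper.

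For the lower bound, I would simply invoke the general spacetime inequality \eqref{spacetimebound}, which says $\dehn(L) \lesssim (2|S|+1)^{\el(L)}$. Since $\bs^{(2)}$ has three generators and we have already established $\dehn(L) \sim 2^{2^L}$, taking logarithms yields
\[
\el(L) \;\gtrsim\; \frac{\ln \dehn(L)}{\ln(2|S|+1)} \;=\; \frac{2^L}{\ln 7} \;\sim\; 2^L.
\]
This is immediate and requires no further work.

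For the upper bound, I would mimic the strategy used in the proof of the $\dehn(L) \lesssim 2^{2^L}$ bound, combining Britton's lemma with the known linear expansion length of $\bs$. Given $w \sim \tte$ with $|w| = L$ in $\bs^{(2)}$, view $\bs^{(2)}$ as the HNN-type extension of $\bs$ obtained by adding $\ttc$ with $\ttc^{-1}\ttb\ttc = \ttb^2$. Britton's lemma guarantees that $w$ can be reduced to a $\ttc$-free word $w' \in K_{e,L'}$ in the subgroup $\bs = \langle \tta,\ttb\rangle$ by successive applications of substitutions $\ttc^{-1}\ttb^k\ttc \to \ttb^{2k}$ and $\ttc\ttb^{2k}\ttc^{-1}\to\ttb^k$. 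At each step the pinched syllable can at worst double in length, and there are at most $O(L)$ such reductions, so the intermediate words along this reduction have length at most $O(2^L)$, and $|w'| = O(2^L)$. Having reached $w' \in K_{e,O(2^L)}$ inside $\bs$, we then invoke Fact \ref{fact:bsexp} (the linear expansion length of $\bs$, $\el_\bs(M) \sim M$) to contract $w'$ to the identity via a homotopy whose intermediate words all have length $\lesssim 2^L$. Concatenating the two stages yields a derivation from $w$ to $\tte$ whose expansion length is $O(2^L)$, proving $\el(L) \lesssim 2^L$.

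The main obstacle in making this precise is the first stage: one must check that the greedy Britton reduction can actually be carried out within the Dehn proof system (applying only local relations from $R$) without incurring a hidden blowup, and that the intermediate words produced while ``growing'' $\ttb^k$ into $\ttb^{2k}$ each time $\ttc$ is eliminated do not transiently inflate by more than a constant factor at each substitution. Once one verifies that peeling off one $\ttc$ at a time only multiplies the relevant $\ttb$-syllable length by $2$ (so a chain of $k$ such peelings produces lengths $2, 4, \dots, 2^k$, bounded by $2^k \leq 2^L$), the argument closes. No new technical machinery beyond what is already developed in Sec.~\ref{sec:ff} and App.~\ref{app:bs_details} is needed.
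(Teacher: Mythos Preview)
Your proposal is correct and follows essentially the same approach as the paper: the lower bound via the spacetime inequality \eqref{spacetimebound} applied to $\dehn(L)\sim 2^{2^L}$, and the upper bound by using Britton's lemma to eliminate $\ttc$'s (producing a $\bs$-word of length $O(2^L)$) followed by the linear expansion length of $\bs$. The paper's proof is terser but identical in structure; your added discussion of why each Britton reduction at worst doubles the relevant $\ttb$-syllable is a reasonable fleshing-out of what the paper leaves implicit.
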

	\bproof 
	From the general bound \eqref{spacetimebound} and our above result about the Dehn function of $\bs^{(2)}$, we know that $\el(L) \geq 2^L$. The upper bound follows from the above application of Britton's lemma and the fact that the expansion length of $\bs$ is only $O(L)$. 
	\eproof  
	
	It is easy to generalize the above example to construct groups with even faster growing space and time complexity: 
	\begin{corollary}
		Define the group $\bs^{(l)}$ through the presentation \cite{clay2017office}
		\be \bs^{(l)} = \lan \tta_0,\dots,\tta_l \, | \, \tta_{i-1} \tta_i = \tta_i \tta_{i-1}^2, \, i = 1 , \dots, n \ran. \ee 
		This group has Dehn function and expansion length 
		\be \dehn(L) \sim \exp^{(l)}(L), \qq \el(L) \sim \exp^{(l-1)}(L).\ee 
	\end{corollary}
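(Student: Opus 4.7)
The plan is to proceed by induction on $l$, with the base case $\bs^{(1)} = \bs(1,2)$ already established earlier in the appendix. For the inductive step, the key observation is that $\bs^{(l)}$ is an HNN extension of $\bs^{(l-1)}$ with stable letter $\tta_l$, where the associated isomorphism sends the cyclic subgroup $\lan \tta_{l-1}\ran$ to $\lan \tta_{l-1}^2\ran$ via doubling. This is exactly the structural setup that made Britton's lemma applicable in the treatment of $\bs^{(2)}$, and it repeats at every level of the hierarchy.

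For the lower bound on $\dehn(L)$, I would build an ``iterated huge word'' $w^{(l)}(n)$ recursively. At level $l=1$, let $w^{(1)}(n) = \tta_1^{-n}\tta_0\tta_1^n$, which as an element equals $\tta_0^{2^n}$. At level $l \geq 2$, set $w^{(l)}(n) = \tta_l^{-n} \, w^{(l-1)}(n) \, \tta_l^n$, which represents $\tta_{l-2}^{\exp^{(l)}(n)}$ (here $\exp^{(l)}$ denotes $l$-fold exponentiation). Then the analogue of $w_{\rm huge}$ from the $\bs^{(2)}$ proof is $\widehat w^{(l)}(n) = w^{(l)}(n)\tta_{l-2} w^{(l)}(n)^{-1}\tta_{l-2}^{-1}$, which has length $O(ln)$ but, by the same contractibility/tree-geometry reasoning used for $w_{\rm huge}$, encloses an area $\sim \exp^{(l)}(n)$. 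This gives $\dehn(L) \gtrsim \exp^{(l)}(L/l) \sim \exp^{(l)}(L)$ in the scaling sense.

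For the upper bound, I would apply Britton's lemma to rewrite any $w\sim \tte$ in $\bs^{(l)}$ by successively eliminating occurrences of the stable letter $\tta_l$: each elimination replaces a factor $\tta_l^{-1}\tta_{l-1}^k\tta_l$ by $\tta_{l-1}^{2k}$ (or its inverse analogue), at a cost of at most $O(L)$ applications of the relation and an at most factor-of-two growth per application. After $O(L)$ such steps, one obtains a word $w'\sim \tte$ in $\bs^{(l-1)}$ of length at most $O(2^L)$. By the inductive hypothesis, reducing $w'$ to the empty word requires $\exp^{(l-1)}(2^L) = \exp^{(l)}(L)$ further steps. Combining with the $O(2^L)$ cost of the Britton reduction itself gives $\dehn(L) \lesssim \exp^{(l)}(L)$. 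The statement $\el(L) \sim \exp^{(l-1)}(L)$ follows from a parallel analysis: the space-time inequality $\el(L) \gtrsim \log \dehn(L)$ gives the lower bound $\exp^{(l-1)}(L)$, and the upper bound comes from tracking the maximal intermediate word length during the Britton reduction (which grows by at most $2^L$) together with the inductive bound $\el_{(l-1)}(2^L)\sim \exp^{(l-1)}(2^L) = \exp^{(l)}(L)$—wait, here one has to be careful: the reduction to $\bs^{(l-1)}$ produces a word of length $O(2^L)$ to which one applies the inductive expansion length bound, yielding $\el_{(l-1)}(O(2^L))\sim \exp^{(l-1)}(2^L)$, and this collapses correctly only because the outermost exponent is absorbed into the $l-1$ iterations.

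The main technical obstacle will be making the Britton-lemma step precise enough to track both length growth and area cost simultaneously, since the naive bound double-counts: each elimination of a $\tta_l^{\pm 1}$ pair may be applied to a factor $\tta_{l-1}^k$ with $k$ already exponentially large, so the relation must be invoked $k$ times (not once), and one must verify that the total number of relation applications across all eliminations is still bounded by $\exp^{(l)}(L)$ rather than something larger. Resolving this requires an inductive invariant on the maximal exponent of $\tta_{l-1}$ appearing after each round of elimination, showing it is at most $\exp^{(l-1)}$ of the current word length; this invariant then feeds cleanly into the inductive hypothesis for the $\bs^{(l-1)}$ step.
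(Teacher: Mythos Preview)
The paper states this corollary without proof, presenting it as an immediate generalization of the $\bs^{(2)}$ argument. Your inductive scheme via Britton's lemma at each level is exactly the intended route. There are, however, two concrete errors.

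First, your recursive word $w^{(l)}(n) = \tta_l^{-n}\, w^{(l-1)}(n)\, \tta_l^n$ does not represent a large power of $\tta_0$. Conjugation by $\tta_l$ only interacts with $\tta_{l-1}$; there is no relation between $\tta_l$ and the generators $\tta_0,\dots,\tta_{l-2}$ sitting inside $w^{(l-1)}(n)$, so for instance $w^{(2)}(n) = \tta_2^{-n}\tta_1^{-n}\tta_0\tta_1^n\tta_2^n$ reduces only to $\tta_1^{-n2^n}(\tta_2^{-n}\tta_0\tta_2^n)\tta_1^{n2^n}$, not to $\tta_0^{2^{2^n}}$. The correct iteration mirrors the paper's $w_n' = w_n\,\tta\, w_n^{-1}$ step: starting from the top generator, set $u_1(n) = \tta_l^{-n}\tta_{l-1}\tta_l^n \sim \tta_{l-1}^{2^n}$ and then $u_k(n) = u_{k-1}(n)^{-1}\,\tta_{l-k}\,u_{k-1}(n) \sim \tta_{l-k}^{\exp^{(k)}(n)}$. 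The length of $u_k$ roughly doubles at each step, so $|u_l| = O(2^l n)$, and the commutator $[u_l(n),\tta_0]$ furnishes the lower bound $\dehn(L) \gtrsim \exp^{(l)}(L)$.

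Second, in your expansion-length upper bound you invoke $\el_{(l-1)}(2^L) \sim \exp^{(l-1)}(2^L)$, but the inductive hypothesis for $\bs^{(l-1)}$ is $\el \sim \exp^{(l-2)}$, not $\exp^{(l-1)}$. The correct chain is $\el_{(l-1)}(O(2^L)) \sim \exp^{(l-2)}(2^L) \sim \exp^{(l-1)}(L)$, which is precisely the claimed bound; your hand-wave about ``absorbing the outermost exponent'' is unnecessary once the index is fixed. The technical concern you flag about tracking $\tta_{l-1}$-exponent growth through successive Britton eliminations is well-placed and is indeed the bookkeeping that makes the upper bounds go through.
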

	
	\section{Detailed analysis of non-group examples}\label{app:nongroup}
	In this appendix we provide some detailed analysis studying both the time and space complexity of the non-group examples presented in the main text.
	\ss*{Star-Motzkin model}
	
	Recall the Star-Motzkin model from Sec.~\ref{sec:starmotzkin}.  There are two sources of fragmentation; the first originates from the parentheses, and the second from the interaction of the parentheses with the $\ast$ character.
	
	In the following analysis, we will rely on the fact that we can write down a non-local conserved quantity under the dynamics that necessarily reflects the interaction between the Motzkin degrees of freedom with the $\ast$ degrees of freedom.  This non-local conserved quantity is  
	\begin{equation}
		Q = \sum_i 2^{\sum_{j < i} n_{(,j} - n_{),j}} n_{\ast, i}.
	\end{equation}
	where $n_{c,i} = \proj{c}_i$.  The interpretation of this operator can be understood pictorially.  A configuration of parentheses (ignoring the $\ast$ character) can be mapped onto a height configuration.  The height $h_i$ is written as $h_i = \sum_{j < i} n_{(,j} - n_{),j}$.  Bringing a $\ast$ character at height $h_i$ down to height $h_j < h_i$ by a sequence of local updates creates $2^{h_i - h_j}$ such $\ast$ characters at height $h_j$.  The charge operator simply counts the total number of $\ast$ characters if {\it all} $\ast$ characters are brought down to zero height.  In the situation where the height becomes negative, we can rescale $2^{\sum_{j < i} \hat{n}_{(,j} - \hat{n}_{),j}}$ to $2^{\sum_{j < i} \hat{n}_{(,j} - \hat{n}_{),j} - \overline{h}}$, where $\overline{h}$ is the value of the lowest height.  The interpretation of the exponential charge is then equivalent to the previously introduced definition if $\overline{h}$ is redefined to be at zero height.
	
	We now proceed to understanding how to label Krylov sectors of the dynamics.  For simplicity, we start with analyzing the Krylov sector $K_Q$ corresponding to the balanced sector for the parentheses, with total value $Q$ for the non-local conserved quantity.  We expect this model to exhibit fragile fragmentation with a linear expansion length (much like $\dyn_\bs$) and as such, we will discuss connectivity of configurations assuming they are appended to a reservoir of $0$ characters of length $\alpha L$.
	
	We claim that the dynamics is ergodic within the Krylov sector $K_Q$ so long as $\alpha = O(1)$ is sufficiently large.  The proof follows from finding a path between any two states in $K_Q$ given the desired size of the reservoir of $O(L)$.  To do this, we construct a reference state $R$, which we show can be reached from all states given the provided space: 
	\begin{equation}
		R = ( ( \cdots (\ast^{m_1} )\cdots \ast^{m_{k-2}} ) \ast^{m_{k-1}} ) \ast^{m_{k}}
	\end{equation}
	where the configuration $( ( \cdots (\ast^{m_1} )\cdots \ast^{m_{k-2}} ) \ast^{m_{k-1}} ) \ast^{m_{k}}$ holds all of the non-local conserved quantity, and the $m_i \in \{0,1\}$ are the numbers appearing in the binary representation of $Q$: 	 
	\begin{equation}
		Q = \sum_{i=1}^k m_i 2^{k-i}.
	\end{equation}
	Without any $\ast$ characters ($Q = 0$), the dynamics is entirely ergodic within $K_0$.  Therefore, the goal is to show that when $Q \neq 0$, all of the $\ast$'s can be isolated to a reference configuration $R$ on one side of the system.
	
	To this end, given a configuration $C$, we first isolate $O(\log Q)$ number of $0$ characters on one side of the system and use these to create a nest $((\cdots()\cdots))$ of parentheses.  Next, we construct an algorithm for localizing the entirety of the non-local conserved quantity into the nest, thereby reducing $C$ to the reference configuration $R$.  Consider the $\ast$ character nearest to the nest.  Labelling the nest by `$q$' (the current amount of the conserved quantity inside of it), this character is positioned as such:
	\begin{equation}
		q \, ( ( ( \cdots ( \ast \cdots
	\end{equation}
	where the number of open parentheses is $p$.  Next, we move the $\ast$ character via the following sequence.
	\begin{equation}
		(\cdots ( ( ( \ast \to (\cdots ((\ast\ast ( \to ( \cdots (\ast \ast (\ast ( \to ( \cdots \ast \ast (\ast (\ast ( \to \cdots
	\end{equation}
	Repeating, we can bring this to the canonical form
	\begin{equation}
		\ast \ast ( \ast (\ast \cdots \ast ( \ast (\cdots
	\end{equation}
	The nest may then absorb the two $\ast$ characters adjacent to it, forming $q' \,\, ( \ast (\ast \cdots \ast ( \ast (\cdots$ where $q' = q+1$.  The next step is to collapse any paired parentheses: i.e. $()$ is sent to $00$.  For instance, this kind of collapse will occur for the configuration
	\begin{equation}
		q \, ( ( ( \cdots ( \ast ) \cdots
	\end{equation}
	since the canonical form is $q' \,\, ( \ast (\ast \cdots \ast ( \ast ()\cdots \to q' \,\, ( \ast (\ast \cdots \ast ( \ast \cdots$.  
	
	After absorbing one unit of the conserved quantity in the nest and performing the collapse process, we iterate these two steps.  By construction, this algorithm will eventually localize the conserved quantity in the nest, forming the reference configuration $R$.  So long as the reservoir is large enough it is possible to transition from any configuration $C \in K$ to $R$, therefore proving ergodicity.

	This result indicates to us that the dynamics is ergodic within the $K$ sector (up to a mild form of fragile fragmentation that exist due to $\el(L) > L$).  It also indicates that the dynamics may be slow, since transporting the non-local conserved quantity out of a region appears to take a very large number of steps.  This would indicate that the relaxation times of $\dyn_{\ast M}$ mimic that of $\dyn_{\bs}$.
	\begin{figure}
		\centering
		\includegraphics[scale=0.3]{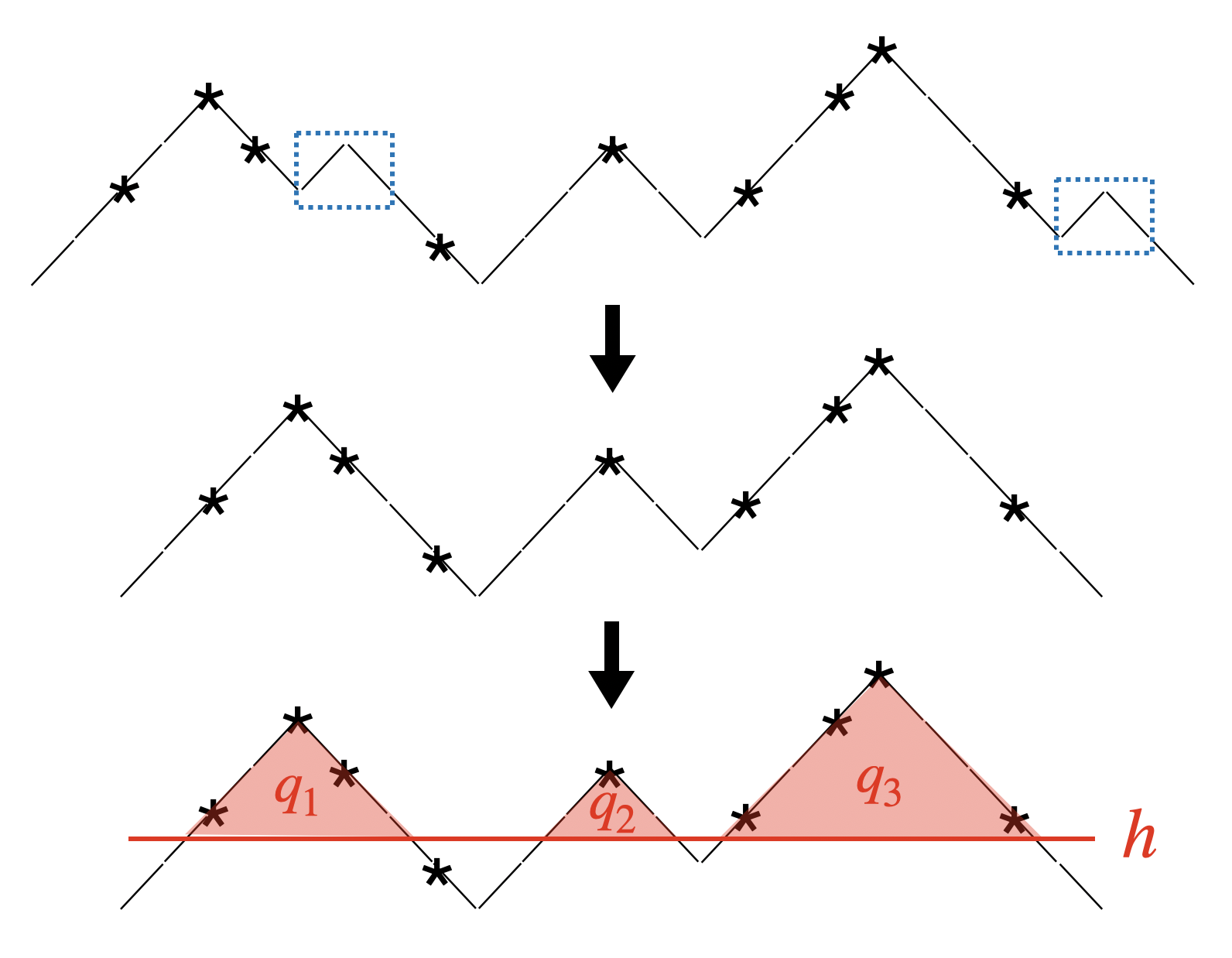}
		\caption{An illustration of the process used to define $h$-restrictions.  We first clean the configuration, as shown in the upper panel.  Then we draw a line at height $h$ and consider all contiguous regions above this line, which are shaded in red in the bottom panel.  We label the non-local charge in each of these regions by $q_i$.}
		\label{fig:hres}
	\end{figure}
	To formalize this, we define the notion of an {\it $h$-restriction} of $C$.  First, we perform a preprocessing step where we eliminate as many $()$ pairs as possible in $C$.  The new configuration $\widetilde{C}$ is what we will call a {\it clean} version of $C$.  A $h$-restriction of $\widetilde{C}$, which we denote by $S_h(\widetilde{C})$ is formed by first drawing a reference line at a height $h$ (note that the height profile of the parentheses is shifted so the minimum height is at 0). $S_h(\widetilde{C})$ denotes a set of contiguous configurations above height $h$ -- an example is denoted in Figure~\ref{fig:hres}.  We also label the total non-local conserved quantity in a contiguous configuration $c$ by $q_c = \sum_{i \in c} n_{i,\ast} 2^{h_i}$.  Though this quantity should not be associated with a `charge' corresponding to a symmetry, we will henceforth, in an abuse of notation, call it a charge. 
	
	Suppose two contiguous configurations $c$ and $c'$ have charge $q_c$ and $q_c'$.  If we want to transport an amount of charge $\Delta q$ between these two configurations, how much time will this take?  Note that because $c$ and $c'$ are supported above height $h$, charge must be pumped in or out of them at increments of $2^h$.  Therefore, the amount of time required is at least $O(\Delta q/2^h)$.  Since the value of the charge supported in $c$ or $c'$ can be exponentially large, transporting some fraction of the charge in $c$ to $c'$ can take exponentially long time, so long as $h$ is not too large.
	
	With this in mind, we can discuss how long it takes to transition between two configurations.  Denote the $h$-restriction of $C$ to be $S_h(\widetilde{C})$ and that of $C'$ to be $S_h(\widetilde{C}')$.  Label the charges of the $h$-restriction of $C$ to be (in decreasing order) $q_1 \geq q_2 \geq \cdots, \geq q_{n_C}$ and for $C'$ to be $q'_1 \geq q'_2 \geq \cdots, \geq q'_{n_{C'}}$.  Here, $n_C$ denotes the number of contiguous regions in the $h$-restriction of $C$.  If $n_C \neq n_{C'}$, then some number of contiguous regions need to be created\footnote{These new regions can either be created from scratch or can be created by borrowing a part of an existing contiguous region.  However, one can show that in both cases, the amount of time required to pump charge $q$ into this new region will be at least $q/2^h$.  Thus, in what follows we can assume wolog that new contiguous configurations are constructed from scratch.}.  In this case, assuming that wolog $n_C \geq n_{C'}$, we construct the two vectors
	\begin{equation}
		\vec{q} = \langle q_1, q_2, \cdots,q_{n_C}\rangle \hspace{0.5cm} \vec{q}\,' = \langle q_1', q_2', \cdots,q_{n_{C'}}',0,\cdots,0\rangle
	\end{equation}
	where the number of $0$'s in $\vec{q}\,'$ is $n_{C} - n_{C'}$ indicating a number of yet-to-be created contiguous configurations.  The number of charge that needs to be transferred in and out of these contiguous configurations is at least $\Delta q = \norm{\vec{q} - \vec{q}\,'}_1$.  The amount of time required to do this is therefore
	\begin{equation}
		t_{hit} \geq 2^{-h} \norm{\vec{q} - \vec{q}\,'}_1.
	\end{equation}
	
	As a result of this bound, there is a simple method for checking whether the time to go between two configurations is very long.  Given two configurations $C$ and $C'$, we first construct clean versions and successively raise the value of $h$ until their $h$-restricted charge vectors are significantly different in 1-norm.  At this point, so long as $h$ is not too large, we know that it will take a long time to traverse between these configurations.
	
	Note that the hitting times strongly depend on the total value of $Q$.  In particular, we have the obvious bound $Q \leq L \cdot 2^{\max_i h_i}$.  For a randomly chosen height profile, $\max_i h_i = O(\sqrt{L})$ and thus we expect $Q \sim \exp(\sqrt{L})$.  As a result, we may expect that for generic sectors the hitting time could scale like $\sim \exp(\sqrt{L})$, the same scaling as the one argued for in typical Dehn function of the Baumslag-Solitar group (See Eqn.~\eqref{avgdehnbs} and App.~\ref{app:bs_average}).
	
	\ss*{Chiral star-Motzkin}
	We now provide a more in depth analysis of the chiral star-Motzkin model discussed in Sec.~\ref{sec:chstarmotzkin}. Note that, like in the star-Motzkin dynamics, the chiral star-Motzkin dynamics features a non-local conserved quantity:
	\begin{equation}
		Q_R = \sum_i 2^{\sum_{j < i} n_{(,j} - n_{),j}} n_{\triangleright, i}.
	\end{equation}
	To understand why large spatial resources are needed, we first consider the following warm-up example.  Suppose we have the configuration
	\begin{equation}
		C = ((\cdots (\triangleright)  \cdots ) \triangleright) ((\cdots (\triangleright)  \cdots ))
	\end{equation}
	and we want to convert it to the configuration
	\begin{equation}
		C' = ((\cdots (\triangleright)  \cdots )) ((\cdots (\triangleright)  \cdots )\triangleright).
	\end{equation}
	In essence, we want to move a single unit of charge from one of the nests to the other one.  We can move the $\triangleright$ out of the first nest yielding
	\begin{equation}
		((\cdots (\triangleright)  \cdots )) \triangleright\triangleright ((\cdots (\triangleright)  \cdots )).
	\end{equation}
	but unlike in the star-Motzkin model, we cannot move this into the other cluster.  The only way to proceed is to collapse the entire cluster, giving
	\begin{equation}
		(\cdots (\triangleright)  \cdots) \triangleright^{2^h+2} (\cdots () \cdots) \to (\cdots (\triangleright)  \cdots ) \triangleright^{2^h+2}
	\end{equation}
	where $h$ is the height of the nest that was collapsed.  Upon doing this, we may reinsert an empty nest of parentheses forming
	\begin{equation}
		(\cdots (\triangleright)  \cdots ) (\cdots ()  \cdots ) \triangleright^{2^h+2}.
	\end{equation}
	which when $2^h$ of the $\triangleright$s are used to populate the center of the right next with an $\triangleright$, gives $C'$.
	
	Let us more rigorously understand when a large amount of spatial resources are needed.  We work in an intrinsic Krylov sector with fully matched parenthesis ($m = n = 0$) and with a fixed value of the non-local conserved quantity $Q_R = Q$.  Consider an $h$-restriction of configurations $C$ and $C'$.  Label the charges of the contiguous regions (in decreasing order) for $C$ and $C'$ with $\vec{q}$ and $\vec{q}\,'$; this notation was introduced in the discussion of the star-Motzkin model.  We need to transfer an amount of charge to convert between charge configurations $\vec{q}$ and $\vec{q}\,'$; define
	\begin{equation}
		\Delta q_{\max} = \max_{i} |q_i - q_i'|
	\end{equation}
	This is (a lower bound on) the maximum amount of charge that has to be transferred out of a single contiguous region.  Some of this charge can be deposited below height $h$.  However, the maximum amount of charge below height $h$ is $L 2^h$.  Therefore, if $\Delta q_{\max} - L 2^h$ is large, then this remaining amount of charge must be transferred to different contiguous region.  As there are at most $L$ contiguous regions, at some point in time during the charge transfer process, an amount of charge has to be inserted in a contiguous region of charge at least $\Delta q_{\max}/L - 2^h$.  This requires an amount of space $O(\Delta q_{\max}2^{-h}/L)$ since at minimum the entire region needs to be collapsed down to height $h$ before inserting the charge.  Therefore, if $\Delta q_{\max} = \epsilon Q$ then this space can be very large.  This is in fact an extremely loose bound, but sufficient for our purpose of showing large space complexity.
	
	Therefore, as in the quasi-fragmented example, given two configurations $C$ and $C'$, one first constructs clean versions of these configurations and then selects a height $h$ such that the $h$-restriction of $C$ and $C'$ has large $\Delta q_{\max}$.  If this is the case, the space complexity scales linearly in $\Delta q_{\max}2^{-h}$ up to polynomial factors in $L$. 
	
	\ss*{Non-groups and thermalization} 
	
	Since we could also construct group based examples with large time and space complexities, it is an interesting question to ask whether there are qualitatively new features that non-group based constraints provide to the dynamics.
	
	We answer this question by examining the structure of reduced density matrices of subsystems under the dynamics.  Recall that under {\it group dynamics} $\dyn_G$, reduced density matrices of subsystems, defined as $\rho_A = \Tr_{A^c}\left(\dyn_G^{\dagger} \, \rho_0 \,  \dyn_G\right)$, will have non-zero values along their diagonals. To explain why, consider a decomposition of the system $S = A A^c$.  Start with an initial product state
	\begin{equation}
		\ket{\psi_0} = \ket{u}_A \otimes \ket{v}_{A^c},
	\end{equation}
	and suppose $v$ has $m$ zeroes (and can be converted under the dynamics to some canonical form $0^m v'$).  Under the dynamics, we can perform a sequences of transitions that converts $\ket{u}_A \otimes \ket{0^m v'}_{A^c}$ to $\ket{0^{|A|}}_A \otimes \ket{0^{m-|A|} uv'}_{A^c}$ and subsequently
	\begin{equation}
		\ket{0^{|A|}}_A \otimes \ket{0^{m-|A|} uv'}_{A^c} \to \ket{w}_A \otimes \ket{w^{-1}0^{m-|A|-|w|} uv'}_{A^c}.
	\end{equation}
	If the number of zeroes in the initial state is large enough, then any word $w$ can be produced in $A$, therefore implying that $\rho_A$ will have all diagonal elements nonzero.  Note that this crucially relies on the existence of inverses, hence why it is special to a group structure.
	
	However, this property {\it no longer} applies for dynamics which are not based on groups.  Instead, we argue that for certain non-group examples, it is not possible to attain all words $u$ in subsystem $A$.  To see this, let us consider the chiral star-Motzkin model from the previous subsection.  Consider the initial word in the subsystem to be
	\begin{equation}
		u = ( ( ( \cdots (\triangleright) \cdots ) ) )
	\end{equation}
	and the entire initial state $\ket{u} \otimes \ket{v}$ to be in $K_Q$ for some large value of $Q$.  Let us assume that $v$ is generic enough that a constant fraction of it is filled with '$0$' characters.  We define
	\begin{equation}
		n_{\ast, A} = \sum_{i \in A} \ketbra{\ast}{\ast}_i
	\end{equation}
	and track the probability distribution $p(n_{\ast, A})$ over time, where
	\begin{equation}
		p(n, t) = \mel{\psi (t)}{\mathcal{P}_{n}}{\psi(t)}
	\end{equation}
	where $\mathcal{P}_{n}$ projects onto configurations with $n_{\ast, A} = n$. Let us first suppose that $|A| \ll \log L$.  Then, we contend that $p(n, t) > 0$ for all $0 \leq n \leq |A|$.  To show this is simple: in order to allow for $0 \leq n \leq |A|$ we need to be able to annihilate all of the parentheses in $A$.  Since this can be done in roughly $2^{|A|/2}$ space and $|A| \ll \log L$, configurations with all possibly densities $n_{\ast, A}|$ can be reached.
	
	However, when $|A| \gg \log L$ it is no longer that case that all configurations with densities $0 \leq n_{\ast, A} \leq |A|$ are achievable.  To see this, consider sectors with $Q \gg \exp(|A|)$.  Note that in order to have configurations with $n_{\ast, A} = |A|/2 + q$, we need to annihilate at least $q$ pairs of parentheses in $A$.   If we set the height of the $h$-restriction to be $h = |A|/2 - q$ and the value of $Q \gg \exp(|A|)$, there will be at least two contiguous regions.  To transport out an amount of charge $2^q$ from the contiguous region in $A$ to a contiguous region outside $A$ requires at least $2^q/\text{poly}(L)$ space.  If $q \gg \log L$ then this is not possible.  Therefore, we find that for subsystems of size $\gg \log L$, {\it no configurations} with charge $n_{\ast, A} = |A|/2 + q$ are reachable with $q \gg \log L$.  This implies an unusual property that the reduced density matrix $\rho_A$ will not have full rank unless the size of the subsystem is smaller than $\log L$.  In this example, the consequences are easily observable since the value of $n_{\ast, A}$ is a local operator.

\end{document}